\documentclass[onecolumn]{IEEEtran}
\usepackage{amsmath,amssymb,bm,cite,color,amsthm,mathrsfs}
\usepackage[thicklines]{cancel}
\usepackage{stfloats}
\usepackage{multirow}
\usepackage{enumerate}
\usepackage[linesnumbered,ruled]{algorithm2e}
\usepackage{hyperref}
\usepackage{cleveref}

\newtheorem{theorem}{Theorem}[section]
\newtheorem{lemma}{Lemma}[section]

\newtheorem{proposition}{Proposition}[section]
\newtheorem{corollary}{Corollary}[section]
\newtheorem{definition}{Definition}[section]

\newtheorem{remark}{Remark}[section]

\newcommand\nc\newcommand
\nc{\cA}{\mathcal{A}}\nc{\cB}{\mathcal{B}}\nc{\cC}{\mathcal{C}}\nc{\cD}{\mathcal{D}}
\nc{\cE}{\mathcal{E}}\nc{\cF}{\mathcal{F}}\nc{\cG}{\mathcal{G}}\nc{\cH}{\mathcal{H}}
\nc{\cI}{\mathcal{I}}\nc{\cJ}{\mathcal{J}}\nc{\cK}{\mathcal{K}}\nc{\cL}{\mathcal{L}}
\nc{\cM}{\mathcal{M}}\nc{\cN}{\mathcal{N}}\nc{\cO}{\mathcal{O}}\nc{\cP}{\mathcal{P}}
\nc{\cQ}{\mathcal{Q}}\nc{\cR}{\mathcal{R}}\nc{\cS}{\mathcal{S}}\nc{\cT}{\mathcal{T}}
\nc{\cU}{\mathcal{U}}\nc{\cV}{\mathcal{V}}\nc{\cW}{\mathcal{W}}\nc{\cX}{\mathcal{X}}
\nc{\cY}{\mathcal{Y}}\nc{\cZ}{\mathcal{Z}}

\nc{\bba}{\mathbf{a}}\nc{\bbb}{\mathbf{b}}\nc{\bbc}{\mathbf{c}}\nc{\bbd}{\mathbf{d}}
\nc{\bbe}{\mathbf{e}}\nc{\bbf}{\mathbf{f}}\nc{\bbg}{\mathbf{g}}\nc{\bbh}{\mathbf{h}}
\nc{\bbi}{\mathbf{i}}\nc{\bbj}{\mathbf{j}}\nc{\bbk}{\mathbf{k}}\nc{\bbl}{\mathbf{l}}
\nc{\bbm}{\mathbf{m}}\nc{\bbn}{\mathbf{n}}\nc{\bbo}{\mathbf{o}}\nc{\bbp}{\mathbf{p}}
\nc{\bbq}{\mathbf{q}}\nc{\bbr}{\mathbf{r}}\nc{\bbs}{\mathbf{s}}\nc{\bbt}{\mathbf{t}}
\nc{\bbu}{\mathbf{u}}\nc{\bbv}{\mathbf{v}}\nc{\bbw}{\mathbf{w}}\nc{\bfx}{\mathbf{x}}
\nc{\bby}{\mathbf{y}}\nc{\bbz}{\mathbf{z}}

\nc{\bbA}{\mathbf{A}}\nc{\bbB}{\mathbf{B}}\nc{\bbC}{\mathbf{C}}\nc{\bbD}{\mathbf{D}}
\nc{\bbE}{\mathbf{E}}\nc{\bbF}{\mathbf{F}}\nc{\bbG}{\mathbf{G}}\nc{\bbH}{\mathbf{H}}
\nc{\bbI}{\mathbf{I}}\nc{\bbJ}{\mathbf{J}}\nc{\bbK}{\mathbf{K}}\nc{\bbL}{\mathbf{L}}
\nc{\bbM}{\mathbf{M}}\nc{\bbN}{\mathbf{N}}\nc{\bbO}{\mathbf{O}}\nc{\bbP}{\mathbf{P}}
\nc{\bbQ}{\mathbf{Q}}\nc{\bbR}{\mathbf{R}}\nc{\bbS}{\mathbf{S}}\nc{\bbT}{\mathbf{T}}
\nc{\bbU}{\mathbf{U}}\nc{\bbV}{\mathbf{V}}\nc{\bbW}{\mathbf{W}}\nc{\bfX}{\mathbf{X}}
\nc{\bbY}{\mathbf{Y}}\nc{\bbZ}{\mathbf{Z}}

\nc{\sA}{\mathsf{A}}\nc{\sB}{\mathsf{B}}\nc{\sC}{\mathsf{C}}\nc{\sD}{\mathsf{D}}
\nc{\sE}{\mathsf{E}}\nc{\sF}{\mathsf{F}}\nc{\sG}{\mathsf{G}}\nc{\sH}{\mathsf{H}}
\nc{\sI}{\mathsf{I}}\nc{\sJ}{\mathsf{J}}\nc{\sK}{\mathsf{K}}\nc{\sL}{\mathsf{L}}
\nc{\sM}{\mathsf{M}}\nc{\sN}{\mathsf{N}}\nc{\sO}{\mathsf{O}}\nc{\sP}{\mathsf{P}}
\nc{\sQ}{\mathsf{Q}}\nc{\sR}{\mathsf{R}}\nc{\sS}{\mathsf{S}}\nc{\sT}{\mathsf{T}}
\nc{\sU}{\mathsf{U}}\nc{\sV}{\mathsf{V}}\nc{\sW}{\mathsf{W}}\nc{\sX}{\mathsf{X}}
\nc{\sY}{\mathsf{Y}}\nc{\sZ}{\mathsf{Z}}

\nc{\bma}{\bm{a}}\nc{\bmb}{\bm{b}}\nc{\bmc}{\bm{c}}\nc{\bmd}{\bm{d}}
\nc{\bme}{\bm{e}}\nc{\bmf}{\bm{f}}\nc{\bmg}{\bm{g}}
\nc{\bmh}{\bm{h}}\nc{\bmi}{\bm{i}}\nc{\bmj}{\bm{j}}
\nc{\bmk}{\bm{k}}\nc{\bml}{\bm{l}}\nc{\bmm}{\bm{m}}
\nc{\bmn}{\bm{n}}\nc{\bmo}{\bm{o}}\nc{\bmp}{\bm{p}}
\nc{\bmq}{\bm{q}}\nc{\bmr}{\bm{r}}\nc{\bms}{\bm{s}}\nc{\bmt}{\bm{t}}\nc{\bmu}{\bm{u}}\nc{\bmv}{\bm{v}}\nc{\bmw}{\bm{w}}\nc{\bmx}{\bm{x}}\nc{\bmy}{\bm{y}}\nc{\bmz}{\bm{z}}

\nc{\bmA}{\bm{A}}\nc{\bmB}{\bm{B}}\nc{\bmC}{\bm{C}}\nc{\bmD}{\bm{D}}
\nc{\bmE}{\bm{E}}\nc{\bmF}{\bm{F}}\nc{\bmG}{\bm{G}}
\nc{\bmH}{\bm{H}}\nc{\bmI}{\bm{I}}\nc{\bmJ}{\bm{J}}
\nc{\bmK}{\bm{K}}\nc{\bmL}{\bm{L}}\nc{\bmM}{\bm{M}}
\nc{\bmN}{\bm{N}}\nc{\bmO}{\bm{O}}\nc{\bmP}{\bm{P}}
\nc{\bmQ}{\bm{Q}}\nc{\bmR}{\bm{R}}\nc{\bmS}{\bm{S}}\nc{\bmT}{\bm{T}}\nc{\bmU}{\bm{U}}\nc{\bmV}{\bm{V}}\nc{\bmW}{\bm{W}}\nc{\bmX}{\bm{X}}\nc{\bmY}{\bm{Y}}\nc{\bmZ}{\bm{Z}}

\newcommand{\mathset}[1]{\left\{#1\right\}}
\newcommand{\multiset}[1]{\left\{\left\{#1\right\}\right\}}
\newcommand{\abs}[1]{\left|#1\right|}
\newcommand{\ceilenv}[1]{\left\lceil #1 \right\rceil}
\newcommand{\floorenv}[1]{\left\lfloor #1 \right\rfloor}
\newcommand{\parenv}[1]{\left( #1 \right)}
\newcommand{\sparenv}[1]{\left[ #1 \right]}

\nc{\set}[1]{\llbracket #1 \rrbracket}

\newcommand{\vt}{\mathsf{VT}}

\newcommand{\Rep}{\textup{Rep}}

\newcommand{\cfcite}[2]{\textit{c.f.}~\cite[#2]{#1}}
\newcommand{\cfcitep}[1]{\textit{c.f.}~\cite{#1}}

\title{Deletion-Correcting Codes for the $\ell$-Symbol Read Channel}
\author{Zuo~Ye and Gennian Ge%
\thanks{This research is supported by the National Key Research and Development Program of China under Grant 2025YFC3409900, the National
Natural Science Foundation of China under Grant 12231014 and Grant 12501466, Beijing Scholars Program, and Xiaomi Young Scholars Program.

Z. Ye is with the Institute of Mathematics and Interdisciplinary Sciences, Xidian University,
Xi'an 710126, China. Email: yezuo@xidian.edu.cn.

G. Ge is with the School of Mathematical Sciences, Capital Normal University, Beijing 100048, China, Email: gnge@zju.edu.cn.
}
}

\begin{document}
\maketitle

\begin{abstract}
    This paper studies deletion-correcting codes for the $\ell$-symbol read channel, whose noiseless output is the vector of all consecutive $\ell$-mers of a transmitted sequence. This model is motivated by overlapping-read mechanisms arising in nanopore sequencing, racetrack memories with consecutive read heads, and related sequence-labeling problems. We consider an adversarial setting in which a fixed number of $\ell$-mers are deleted from the read vector. Our first contribution is a structural characterization of the effect of such deletions: after a minimum number of $\ell$-mers are inserted to restore consistency, the resulting sequence is obtained from the transmitted sequence by deleting symbols from certain periodic substrings; when $t\le \ell-2$, these deletions correspond to complete minimum periods. Based on this characterization, we introduce check patterns and construct $\ell$-read deletion-correcting codes via power-sum syndromes. For every $\ell\ge2$, we obtain single-deletion correcting codes with redundancy $\log\lfloor (n+2\ell)/(\ell-1)\rfloor$. For $2\le t\le \ell/2$, we construct $q$-ary $\ell$-read $t$-deletion correcting codes with redundancy $t\log n+O(1)$, and for $\ell=2t-1$ with $t\ge3$, we construct codes with redundancy $(2t-1)\log n+O(1)$. We also study the sporadic parameter pairs $(\ell,t)\in\{(2,2),(3,2),(3,3)\}$ and obtain improved constructions, including binary $\ell$-read $2$-deletion correcting codes with redundancy $2\log n+O(1)$ for $\ell=2,3$, a non-binary $3$-read $2$-deletion correcting code with redundancy $3\log n+O(1)$, a binary $3$-read $3$-deletion correcting code with redundancy $5\log n+O(1)$, and a non-binary $3$-read $3$-deletion correcting code with redundancy $7\log n+O(1)$.
\end{abstract}

\begin{IEEEkeywords}
\boldmath deletion-correcting code, DNA-based storage, nanopore sequencing, $\ell$-symbol read channel, DNA labeling, racetrack memory
\end{IEEEkeywords}

\section{Introduction}
We consider an abstract readout model in which a sequence is observed through overlapping windows of length $\ell$. More precisely, for a sequence $\bmx=x_1x_2\cdots x_n$, the noiseless channel output is the vector
$$
\cR_{\ell}(\bmx)
\triangleq\parenv{\bmx_{[1,\ell]},\bmx_{[2,\ell+1]},\ldots,\bmx_{[n-\ell+1,n]}},
$$
where $\bmx_{[i,j]}\triangleq x_ix_{i+1}\cdots x_{j}$ for all $i\le j$. We study an adversarial deletion model for this channel: the decoder receives a subvector of $\cR_{\ell}(\bmx)$, obtained by deleting a prescribed number, say $t$, of $\ell$-symbol reads, and aims to recover the original sequence. A code capable of correcting such errors is called an $\ell$-read $t$-deletion correcting code. Throughout the paper, $\ell$ and $t$ are treated as constants.

This model is motivated by several readout mechanisms in sequencing and storage systems. In nanopore sequencing, the measured signal depends on a local block of consecutive nucleotides rather than on a single nucleotide. In symbol-pair and $b$-symbol read channels, adjacent or consecutive groups of symbols are read together. In DNA labeling, using a sufficiently rich set of labels allows one to identify local substrings of a DNA strand. In racetrack memory, multiple read heads can observe local groups of stored symbols, while shift errors may delete or repeat read positions. These different settings lead to the same basic mathematical object: an overlapping read vector followed by synchronization errors. We next discuss these motivations and the most relevant prior work.

\subsection{Motivation and Related Work}
\subsubsection{\textbf{Nanopore Sequencing}}
DNA has emerged as a promising medium for archival data storage owing to its exceptionally high storage density and long-term stability \cite{Omer2024TMBMSC,Milenkovic2024TCOMM}. In a DNA-based data storage system, data is first encoded into sequences over the DNA nucleotide alphabet $\mathset{A,C,G,T}$. The corresponding DNA strands are then synthesized and stored. To access the original data, the stored DNA strands are sequenced and converted back into nucleotide sequences. Among available sequencing technologies, nanopore sequencing is particularly attractive because of its advantages including relatively low cost, high portability, and the ability to produce much longer reads \cite{Omer2024TMBMSC,Deamer2016}.

In a nutshell, nanopore sequencing reads a single-stranded DNA molecule by driving it through a nanopore and recording ionic-current measurements as the nucleotides translocate through the pore. These measurements are subsequently processed by a basecalling algorithm to infer the underlying DNA sequence. Importantly, each measurement is typically influenced not by a single nucleotide, but by several, say $\ell$, consecutive nucleotides within the sensing region of the pore. This local-context dependence is commonly abstracted by a function $f:\mathset{A,C,G,T}^{\ell}\rightarrow \cY$, where $\cY$ is the output alphabet. Under this abstraction, the noiseless readout corresponding to an input DNA sequence $\bmx=x_1x_2\cdots x_n$ is the vector $\cR_{f}(\bmx)\triangleq\parenv{f(\bmx_{[1,\ell]}),f(\bmx_{[2,\ell+1]}),\ldots,f(\bmx_{[n-\ell+1,n]})}$ \cite{Hulett2021ISIT,Omer2024TMBMSC}. In practice, the observed readout corresponding to $\cR_{f}(\bmx)$ may be affected by various errors, including deletions, tandem duplications, and substitutions \cite{Maowei2018IT}. The goal is to reconstruct $\bmx$ from such a corrupted version of $\cR_{f}(\bmx)$.

The preceding channel model naturally extends to arbitrary $q$-ary alphabets. Coding problems for this channel have been studied under different choices of the map $f$ and different error models. For example, when $f$ maps an $\ell$-mer to to its composition, several works have considered the correction of substitution \cite{Banerjee2023ISIT,Banerjee2024IT,Yubo2025ITnanopore,Banerjee2025ISIT}, deletion \cite{Banerjee2024ISIT}, or duplication errors \cite{WenjunYu2025IT} in $\cR_{f}(\bmx)$. Chee \emph{et al} \cite{CheeYeowMeng2024ISIT} studied constrained coding schemes for correcting deletions or duplications in $\cR_{f}(\bmx)$, without specifying the map $f$.

The works most directly related to the present paper are the two papers by Xie and Chen \cite{Zitan2025ITW,Zitan202601}, where it is assumed that one has access to a sequence of $\ell$-mers inferred by basecallers. Under this assumption, the noiseless channel output corresponding to $\bmx$ is $\cR_{\ell}(\bmx)$. Notice that $\cR_{\ell}(\bmx)$ is equivalent to $\cR_f(\bmx)$ when $f$ is injective. They further assume that $\bmx$ is prefixed with a known $\ell$-mer $\bmu$ and suffixed with a known $\ell$-mer $\bmv$ before entering the nanopore. Thus, the exact transmitted sequence is $\bmu\bmx\bmv$. This assumption is practically reasonable in nanopore sequencing, where adapter sequences are commonly attached to DNA strands before sequencing \cite[Section II]{Maowei2018IT}. The goal of their works is to design large codes enabling the recovery of $\bmx$ even when $t$ components of $\cR_{\ell}\parenv{\bmu\bmx\bmv}$ are deleted. For $t=2$, Xie and Chen in \cite{Zitan2025ITW} constructed a $q$-ary code with redundancy $3\log(n)+O(1)$ for constant $\ell\ge4$. Later, in \cite{Zitan202601}, they constructed a $q$-ary code with redundancy $2t\log(n)+o(\log(n))$ for general $t$, under the condition that the received corrupted version of $\cR_{\ell}\parenv{\bmu\bmx\bmv}$ is \emph{consistent} and that $t\le\min\mathset{\ell-2,(\ell+1)/2}$. While this paper was in preparation, Xie and Chen improved the results of \cite{Zitan202601} and presented a construction with redundancy $t\log(n)+O(\log\log(n))$ under the condition $t\le\min\mathset{(\ell-1)/2,(\ell+2)/3}$ \cite{Zitan202606}. They also showed that the minimum redundancy of $\ell$-read $t$-deletion correcting codes is at least $t\log(n)+\Omega(1)$.

\vspace{5pt}
\subsubsection{\textbf{$\ell$-Symbol Read Channel}}
The symbol-pair read channel was introduced by Cassuto and Blaum to model storage systems in which adjacent symbols are read together rather than individually \cite{CassutoBlaum2011IT}.
It was later generalized to the $\ell$-symbol read channel, where each read consists of $\ell$ consecutive symbols \cite{YaakobiBruckSiegel2016IT}. Given a sequence $\bmx=x_1x_2\cdots x_{n}$, the noiseless channel output of the $\ell$-symbol read channel is the cyclic $\ell$-symbol read vector
$\pi_{\ell}(\bmx)\triangleq\parenv{\bmx_{[1,\ell]},\bmx_{[2,\ell+1]},\ldots,x_{n}x_{1}\cdots x_{\ell-1}}$.

Most prior work on symbol-pair and $\ell$-symbol read channels has focused on substitution errors in the read vector.  Synchronization errors for the symbol-pair read channel were studied in \cite{YeowVan2020ISIT,CheeYeowMeng2021IT}. In particular, \cite{YeowVan2020ISIT} constructed binary codes correcting one or two deletions in the $2$-symbol read vector, with redundancies $\log n+O(1)$ and $4\log n+o(\log n)$, respectively.

Let $x_{-\ell+1},\ldots,x_0,x_{n+1}$ be $\ell+1$ known symbols. By definition, it holds that $$
\pi_{\ell}\parenv{x_{-\ell+1}\cdots x_0\bmx x_{n+1}}=\cR_{\ell}\parenv{x_{-\ell+1}\cdots x_0\bmx x_{n+1}x_{-\ell+1}\cdots x_{-1}}.
$$
Therefore, the $\ell$-symbol read channel studied in the previous literature is essentially equivalent to the channel considered in this paper. For this reason, we refer to the channel studied here as the $\ell$-symbol read channel. We note that this channel is also called the acyclic $\ell$-symbol read channel in \cite[Section VII-B]{CheeYeowMeng2021IT}.

\vspace{5pt}
\subsubsection{\textbf{DNA Labeling}}
DNA labeling is a tool for visualizing, detecting, and studying DNA at the molecular level. Let $\bmx \in\{A,C,G,T\}^n$ and let $\cS=\{\alpha_1,\ldots,\alpha_M\}$ be a set of labels, where each $\alpha_i \in \{A,C,G,T\}^{\ell}$. The labeling  output can be modeled as a labeling sequence
$\cL_{\cS}(\bmx) \in \{0,1,\ldots,M\}^n$, where the $i$-th coordinate of $\cL_{\cS}(\bmx)$ equals $j\in\{1,\ldots,M\}$ if $\bmx_{[i,i+\ell-1]}=\alpha_j$, and equals $0$ if no label in $\cS$ appears at position $i$. Thus, each output symbol is determined by a length-$\ell$ substring of the underlying DNA sequence.

Clearly, when $\cS=\{A,C,G,T\}^{\ell}$, exactly one label matches each substring $\bmx_{[i,i+\ell-1]}$. Hence the labeling output uniquely identifies the sequence of consecutive $\ell$-mers. In this case, the labeling sequence is essentially
\[
\bigl(\bmx_{[1,\ell]}, \bmx_{[2,\ell+1]}, \ldots, \bmx_{[n-\ell+1,n]}\bigr),
\]
which is exactly the $\ell$-symbol read vector $\cR_{\ell}(\bmx)$ defined previously.

The capacity of DNA labeling and the minimum number of labels needed to reconstruct almost all DNA sequences have been studied in \cite{Hanania2025IT,HofmeisterChristoph2026IT}. Error-correcting codes for labeled DNA sequences were investigated in \cite{HananiaYaakobi2025ISIT}, where substitution, deletion, and insertion errors in the labeling sequence were considered. In particular, the work \cite{HananiaYaakobi2025ISIT} generalized the binary $2$-read $1$-deletion correcting code in \cite{YeowVan2020ISIT} to general alphabets.

\vspace{5pt}
\subsubsection{\textbf{Racetrack Memory}}
Racetrack memory is an emerging non-volatile storage technology in which binary data is stored in magnetic domains arranged along a nanoscopic wire, also called a racetrack or track. Each magnetic domain stores one bit through its magnetization direction. During the read process, the read heads remain fixed, while the domains are shifted along the track so that different domains can be aligned with the heads. In typical architectures, multiple heads may be placed along the racetrack strip, often uniformly, to reduce read latency \cite{ZhangChao2015ISCA,CheeKiahVardy2018IT}.

This architecture is attractive because of its high storage density, but the shift operation may be imprecise. If the track is shifted too far, some domains may be skipped; if it is not shifted far enough, the same domain may be read more than once. These position errors are commonly modeled as deletions and sticky insertions, respectively. Several works \cite{CheeKiahVardy2018IT,CheeYeowMeng2018ITW,SimaBruck2023IT} have studied codes correcting such errors by exploiting the multi-head structure of racetrack memories, especially under the condition that the distance between adjacent heads is $\Theta(\log(n))$, where $n$ is the code length.

Let
$\bmx=x_1x_2\cdots x_n\in\{0,1\}^n$
be the stored data. Suppose that there are $\ell$ consecutive read heads. Then, in the absence of shift errors, the simultaneous read at time $i$ is $x_i x_{i+1}\cdots x_{i+\ell-1}$.
Hence the noiseless readout is $\parenv{\bmx_{[1,\ell]},\bmx_{[2,\ell+1]},\ldots,\bmx_{[n-\ell+1,n]}}=\cR_{\ell}(\bmx)$\footnote{The works \cite{CheeKiahVardy2018IT,CheeYeowMeng2018ITW,SimaBruck2023IT} assume that each head can read the entire sequence $\bmx$, whereas here we assume that the $i$-th head can read only $\bmx_{[i,n-\ell+i]}$. This boundary difference is immaterial for our purposes, since known prefixes and suffixes can be appended to the stored sequence.}. Thus, the $\ell$-symbol read channel studied in this paper can be viewed as the adjacent-head case of the multi-head racetrack readout model. Under this interpretation, an over-shift that skips one read position deletes one component of $\cR_{\ell}(\bmx)$, while an under-shift repeats one component and gives rise to a sticky insertion.

Consequently, the results of this paper give deletion-correcting codes for racetrack memories with adjacent read heads. The codes in \cite{CheeKiahVardy2018IT,CheeYeowMeng2018ITW,SimaBruck2023IT} achieve much lower redundancy and apply to a broader range of parameters $\ell$ and $t$. However, they require a much larger distance between adjacent read heads, which increases the area overhead of the device.

\subsection{Our Contribution}
Motivated by the applications mentioned above, this paper continues the study of deletion correction for the $\ell$-symbol read channel. We now summarize the main ideas and contributions of the paper.

First, we establish a structural connection between deletions in the $\ell$-mer read vector and deletions in the transmitted sequence itself. Given a received subvector $\cR^\prime$ of $\cR_{\ell}\parenv{\bmu\bmx\bmv}$, we insert the minimum possible number of $\ell$-mers into $\cR'$ to obtain a consistent vector $\cR^{\prime\prime}$, and then derive a sequence $\bmy$ from it. We show that the reconstructed sequence $\bmy$ is obtained from $\bmu\bmx\bmv$ by a collection of burst deletions. Moreover, when $t\le \ell-2$, these deletions occur inside periodic substrings and correspond to deleting some complete minimum periods. This characterization is the main structural ingredient of our constructions.

Second, we introduce the notion of an $(\ell,t,s)$-check pattern, which captures the periodic substrings that can be shortened. We prove that, under suitable parameter conditions, the relevant check patterns in $\bmu\bmx\bmv$ and $\bmy$ have the same number, the same order, and the same minimum periods. Consequently, the decoding problem reduces to identifying which check patterns suffered deletions and how many complete periods were deleted from each pattern.

Third, using power-sum syndromes, we construct several general families of deletion-correcting codes. For every $\ell\ge2$, we obtain $q$-ary $\ell$-read $1$-deletion correcting codes with redundancy $\log\floorenv{(n+2\ell)/(\ell-1)}$.
For $2\le t\le \ell/2$, we construct $q$-ary $\ell$-read $t$-deletion correcting codes with redundancy $t\log n+O(1)$. We also treat the boundary-type case $\ell=2t-1$ for $t\ge3$ and obtain codes with redundancy $(2t-1)\log n+O(1)$.

Fourth, we study several small parameter pairs that are not covered by the general constructions. For $(\ell,t)=(2,2)$, Chee and Vu constructed a binary $2$-read $2$-deletion correcting code with redundancy $4\log n+o(\log n)$ \cite[Theorem~9]{YeowVan2020ISIT}. We reduce this redundancy to $2\log n+O(1)$ in the binary case, and also give a non-binary construction with redundancy $4\log n+O(1)$. For $(\ell,t)=(3,2)$, we obtain a binary construction with redundancy $2\log n+O(1)$ and a non-binary construction with redundancy $3\log n+O(1)$. Finally, for $(\ell,t)=(3,3)$, we construct binary and non-binary codes with redundancies $5\log n+O(1)$ and $7\log n+O(1)$, respectively. These sporadic constructions show that the sufficient condition $\ell\ge 2t$ in the general construction is not necessary for achieving redundancy $t\log n+O(1)$, and that the joint condition $\ell=2t-1$ with $t\ge3$ is not necessary for achieving redundancy $(2t-1)\log n+O(1)$.

Finally, based on these constructions, we provide alternative constructions with efficient encoders at the cost of additional $O\parenv{\log\log(n)}$ redundant bits.

\subsubsection{\textbf{Comparison}}
While this paper was in preparation, Xie and Chen updated their work \cite{Zitan202601} and gave an improved construction of $\ell$-read $t$-deletion correcting codes with redundancy $t\log(n)+O\parenv{\log\log(n)}$ under the condition $t\le\min\mathset{\frac{\ell-1}{2},\frac{\ell+2}{3}}$ \cite{Zitan202606}. In comparison, we attain the same redundancy under the mild condition $\ell/2\ge t\ge 1$. We list the main differences between the two constructions below.
\begin{itemize}
    \item Both constructions are based on the fact that deletions in $\cR_{\ell}(\bmu\bmx\bmv)$ only shorten certain periodic substrings of $\bmu\bmx\bmv$. Their construction works with \emph{check patterns}, which are periodic substrings with minimum period at most $t$ and length at least $\ell-t$. Inspired by the observation that the decoder can determine the  the number $s$ of $\ell$-mers inserted into $\cR^\prime$, we work with finer patterns, called $(\ell,t,s)$-\emph{check patterns}, which are periodic substrings with minimum period at most $t-s$ and length at least $\ell-1-s$. This enables us to obtain constructions for more flexible parameters $\ell$ and $t$.
    \item Their construction uses $B_t$ sets to compute a single syndrome, while our construction uses $t$ power-sum syndromes.
    \item The encoding and decoding algorithms are different.
\end{itemize}

For the reader's convenience, we summarize the state-of-the-art results on $\ell$-read $t$-deletion correcting codes in \Cref{tab_lreaddelcode}.

\begin{table*}[!htbp]
    \centering
    \caption{Summary of results on $\ell$-read $t$-deletion correcting codes}
    \label{tab_lreaddelcode}
    \begin{tabular}{c|c|c|c}
    \hline
    \hline
      $(\ell,t)$& Alphabet Size $q$&Redundancy& Reference \\
       \hline
        \multirow{2}{*}{$(2,1)$}&$q=2$&$\log n+O(1)$ &\cite[Theorem 6]{YeowVan2020ISIT}\\
        \cline{2-4}
        &$q\ge2$&$\log n+O(1)$&\cite[Theorem 3]{HananiaYaakobi2025ISIT}\\
        \hline
        $(2,2)$ &$q=2$&$4\log(n)+o(\log(n))$ &\cite[Theorem 9]{YeowVan2020ISIT}\\
        \hline
        $t\le\min\mathset{\frac{\ell-1}{2},\frac{\ell+2}{3}}$&$q\ge2$&$t\log(n)+O(\log\log(n))$&\cite[Theorem 15]{Zitan202606}\\
        \hline
        $(\ell\ge4,2)$& $q\ge2$ & $3\log(n)+O(1)$ &\textbf{\cite[Theorem 5]{Zitan2025ITW}}\\
        \hline
        $(\ell\ge2,1)$ &$q\ge2$&$\log\parenv{\floorenv{(n+2\ell)/(\ell-1)}}$ & \textbf{\Cref{thm_1del}}\\
        \hline
        $\ell\ge2t$, $t\ge2$&$q\ge2$& $t\log(n)+O(1)$ &\textbf{\Cref{thm_multidel1}}\\
        \hline
        $(2t-1,t\ge3)$& $q\ge2$& $(2t-1)\log(n)+O(1)$ &\textbf{\Cref{thm_multidel2}}\\
        \hline
        \multirow{2}{*}{$(2,2)$} & $q\ge2$&$2\log(n)+O(1)$&\textbf{\Cref{thm_binaryl2t2}}\\
        \cline{2-4}
        &$q>2$&$4\log(n)+O(1)$&\textbf{\Cref{thm_qaryl2t2}}\\
        \hline
        \multirow{2}{*}{$(3,2)$}&$q=2$&$2\log(n)+O(1)$&\textbf{\Cref{cor_bianryl3t2}}\\
        \cline{2-4}
        &$q>2$&$3\log(n)+O(1)$&\textbf{\Cref{thm_qaryl3t2}}\\
        \hline
        \multirow{2}{*}{$(3,3)$}&$q=2$&$5\log(n)+O(1)$&\textbf{\Cref{thm_binaryl3t3}}\\
        \cline{2-4}
        &$q>2$&$7\log(n)+O(1)$&\textbf{\Cref{thm_qaryl3t3}}\\
        \hline
        \hline
    \end{tabular}
\end{table*}

\subsection{Organization}
The rest of the paper is organized as follows. In \Cref{sec_preliminary}, we introduce the $\ell$-symbol read channel, define $\ell$-read deletion-correcting codes, and prove structural results relating $\ell$-mer deletions to deletions from periodic substrings of the transmitted sequence. In \Cref{sec_genconstruction}, we develop the general check-pattern framework and present the single-deletion and multiple-deletion constructions. In \Cref{sec_sporadic}, we study the sporadic parameter pairs $(2,2)$, $(3,2)$, and $(3,3)$, deriving improved constructions for both binary and non-binary alphabets. For completeness, we present codes with efficient encoders and decoders in \Cref{sec_explicitcode}. Finally, \Cref{sec_conclusion} concludes the paper and discusses several directions for future work.

\section{Preliminary}\label{sec_preliminary}
This section introduces the notation, terminology, and auxiliary results that will be frequently used throughout the paper.

For an integer $q\ge 2$, let $\Sigma_q$ denote the $q$-ary alphabet $\mathset{0,1,\ldots,q-1}$. We use ordinary lowercase letters, such as $a,b,c,x,y,z$, for symbols in $\Sigma_q$, and bold lowercase letters, such as $\bma,\bmb,\bmc,\bmx,\bmy,\bmz$, for sequences over $\Sigma_q$. The concatenation of two sequences $\bmx$ and $\bmy$ is denoted by $\bmx\bmy$. Unless otherwise stated, the $i$-th symbol of a sequence $\bmx$ is denoted by $x_i$. The \emph{length} of a sequence $\bmx$, denoted by $\abs{\bmx}$, is the number of symbols in $\bmx$. The set of all sequences of length $n$ over $\Sigma_q$ is denoted by $\Sigma_q^n$. We also use $\abs{S}$ to denote the cardinality of a finite set $S$.

For integers $m$ and $n$ with $m\le n$, let $[m,n]$ denote the set $\mathset{m,m+1,\ldots,n}$. When $n\ge1$, the set $[1,n]$ is also denoted by $[n]$. For a sequence $\bmx\in\Sigma_q^n$ and a set $I=\mathset{i_1,\ldots,i_s}\subseteq[n]$, define $\bmx_{I}=x_{i_1}x_{i_2}\cdots x_{i_s}$, and call it a \emph{subsequence} of $\bmx$. In particular, when $I=[a,b]$ for some $1\le a\le b\le n$, we call $\bmx_{I}$ a \emph{substring} of $\bmx$. A $t$-\emph{burst-deletion} in $\bmx$ is the deletion of a length-$t$ substring from $\bmx$.

\subsection{Channel Model}
The channel studied in this paper is called the \emph{$\ell$-symbol read channel}. In the absence of errors, this channel takes a sequence $\bmx\in\Sigma_q^n$ as input and outputs the vector of $\ell$-\emph{mers}: $\parenv{\bmx_{[1,\ell]},\bmx_{[2,\ell+1]},\ldots,\bmx_{[n-\ell+1,n]}}$. Here, an $\ell$-mer is a sequence in $\Sigma_q^{\ell}$.

For the sake of theoretical analysis (see the proof of \Cref{lem_burstdel}), we assume throughout this paper that  two \emph{known} $\ell$-mers, denoted by $\bmu$ and $\bmv$, are appended to both
ends of $\bmx$ before transmission. Thus, the exact transmitted sequence is $\tilde{\bmx}\triangleq\bmu\bmx\bmv$. For clarity, the coordinates of $\tilde{\bmx}$ are indexed by the set $\mathset{-\ell+1,\ldots,0,1,\ldots,n,n+1,\ldots,n+\ell}$ instead of $[n+2\ell]$. Equivalently, $\tilde{\bmx}_{[-\ell+1,0]}=\bmu$, $\tilde{\bmx}_{[n]}=\bmx$, and $\tilde{\bmx}_{[n+1,n+\ell]}=\bmv$.

Under this convention, the channel output associated with $\bmx$ is $\cR_{\ell}\parenv{\tilde{\bmx}}\triangleq\parenv{\bms_0,\bms_1,\ldots,\bms_{n+\ell}}$, where $\bms_i=\tilde{\bmx}_{[i-\ell+1,i]}$ for all $0\le i\le n+\ell$. For every $0\le i<n+\ell$, the length-$(\ell-1)$ suffix of $\bms_i$ equals the length-$(\ell-1)$ prefix of $\bms_{i+1}$. This property motivates the following definition.

\begin{definition}
    Let $\bmy=y_1\cdots y_{\ell}$ and $\bmz=z_1\cdots z_{\ell}$ be two $\ell$-mers. We say that they are \emph{consistent} if $y_2\cdots y_{\ell}=z_1\cdots z_{\ell-1}$. Otherwise, they are called \emph{inconsistent}.
    Let $\cR=\parenv{\bms_1,\ldots,\bms_m}$ be a vector of $\ell$-mers. We say that $\cR$ is consistent if $\bms_i$ and $\bms_{i+1}$ are consistent for every $1\le i<m$. Otherwise, it is called inconsistent.
\end{definition}

Given a consistent vector of $\ell$-mers $\parenv{\bms_1,\ldots,\bms_m}$, where $\bms_i=s_{i,1}\cdots s_{i,\ell}$ for $i\in[m]$, let $\bmx=s_{1,1}\cdots s_{1,\ell}s_{2,\ell}\cdots s_{m,\ell}$. Then we have $\cR_{\ell}(\bmx)=\parenv{\bms_1,\ldots,\bms_m}$. We therefore denote the reconstructed sequence $\bmx$ by $\cR_{\ell}^{-1}\parenv{\bms_1,\ldots,\bms_m}$. Clearly, we have $\cR_{\ell}^{-1}\parenv{\cR_{\ell}(\tilde{\bmx})}=\tilde{\bmx}$. Hence the original sequence $\bmx$ can be recovered directly whenever the error-free vector $\cR_{\ell}(\tilde{\bmx})$ is received.

In practice, however, the decoder may receive a corrupted version, say $\cR^\prime$, of $\cR_{\ell}(\tilde{\bmx})$. In this paper, we focus on deletion errors, namely the case where $\cR^\prime$ is a subvector of $\cR_{\ell}(\tilde{\bmx})$. Equivalently, $\cR^\prime$ is obtained from $\cR_{\ell}(\tilde{\bmx})$ by deleting some $\ell$-mers. In this situation, $\cR^\prime$ is not necessarily consistent. Recall that $\tilde{\bmx}_{[-\ell+1,0]}=\bmu$ and $\tilde{\bmx}_{[n+1,n+\ell]}=\bmv$. Thus, without loss of generality, we always assume that the first and last $\ell$-mers in $\cR_{\ell}(\tilde{\bmx})$ are not deleted.

\begin{definition}
    Let $\cC$ be a nonempty subset of $\Sigma_q^n$. We call $\cC$ an $\ell$-read $t$-deletion correcting code, if for any $\bmx\in\cC$ and any length-$(n+\ell+1-t)$ subvector $\cR^\prime$ of $\cR_{\ell}\parenv{\tilde{\bmx}}$, we can efficiently and uniquely decode $\bmx$ from $\cR^\prime$.
\end{definition}

Clearly, a $1$-mer $t$-deletion correcting code is the ordinary \emph{$t$-deletion correcting code}, which has been extensively investigated in recent years.

The redundancy of a code $\cC\subseteq\Sigma_q^n$, denoted by $\rho\parenv{\cC}$, is defined to be $\log\parenv{q^n/\abs{\cC}}$, where $\log(\cdot)$ denotes the base-two logarithm. Throughout this paper, the integers $q$, $\ell$, and $t$ are assumed to be constants.

\subsection{Auxiliary Results}
We next establish several auxiliary results that will be used repeatedly in the rest of this paper.

\begin{definition}
Let $\bmz$ be a sequence of length at least $T+1$. We say that $\bmz$ is a \emph{$T$-periodic sequence}, if it has \emph{period} $T$, i.e.,  $z_i=z_{i+T}$ for all $1\le i\le\abs{\bmz}-T$. If $T$ is the smallest positive integer with this property, we say that $T$ is the \emph{minimum period} of $\bmz$.

A substring of $\bmz$ is called a \emph{maximal $T$-periodic substring} if it is $T$-periodic and is not contained in a longer $T$-periodic substring of $\bmz$.
\end{definition}

Let $\cR^\prime$ be a subvector of $\cR_{\ell}(\tilde{\bmx})$ of length $n+\ell+1-t$. In \Cref{cor_arbitrarydel} below, we show that one can still derive from $\cR^\prime$ a sequence $\bmy$, which is obtained from $\tilde{\bmx}$ by at most $t$ deletions. Moreover, if $t\le\ell-2$, these deletions only shorten some periodic substrings.

We begin with the case in which $\cR^\prime$ is obtained from $\cR_{\ell}(\tilde{\bmx})$ by a single burst of deletions. The following lemma generalizes \cite[Lemma 1]{Zitan2025ITW} and \cite[Lemma 4]{Zitan202601}.
\begin{lemma}\label{lem_burstdel}
    Let $\bmx\in\Sigma_q^n$, $\ell\ge2$, and $t\ge1$. Suppose that $\cR^\prime$ is obtained from $\cR_{\ell}(\tilde{\bmx})$ by deleting $\bms_{i+1},\ldots,\bms_{i+t}$, where $0\le i<n+\ell-t$.
    \begin{enumerate}[$(i)$]
        \item There is a smallest integer $s$ with $0\le s\le\min\mathset{\ell-1,t}$ such that inserting $s$ $\ell$-mers $\bmz_1,\ldots,\bmz_{s}$ into $\cR^\prime$ yields a consistent vector $\cR^{\prime\prime}$ of $\ell$-mers. The value of $s$ can be determined directly from $\cR^\prime$. The sequence $\cR_{\ell}^{-1}\parenv{\cR^{\prime\prime}}$ is obtained from $\tilde{\bmx}$ by a $(t-s)$-burst-deletion occurring in the substring $\tilde{\bmx}_{[i-\ell+2+s,i+t-s]}$; more precisely, by deleting the substring $\tilde{\bmx}_{[i+1,i+t-s]}$.
        \item If $s<\min\mathset{\ell-1,t}$, then this $(t-s)$-burst-deletion occurs in a maximal periodic substring of length at least $\ell-1+t-2s$ whose minimum period is at most $t-s$.
        \item If $s<t\le\ell-2$, then the minimum period of the maximal periodic substring in (ii) divides $t-s$.
    \end{enumerate}
\end{lemma}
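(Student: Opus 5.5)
The plan is to track the single "junction" created by the burst. After deleting $\bms_{i+1},\ldots,\bms_{i+t}$, the only adjacent pair of $\cR'$ that may fail to be consistent is $(\bms_i,\bms_{i+t+1})$, where $\bms_i=\tilde{\bmx}_{[i-\ell+1,i]}$ and $\bms_{i+t+1}=\tilde{\bmx}_{[i+t-\ell+2,i+t+1]}$. The first step is an elementary observation about chains. Inserting $s\le \ell-1$ $\ell$-mers between two $\ell$-mers $\bma,\bmb$ so that the whole chain is consistent is possible if and only if the length-$(\ell-1-s)$ suffix of $\bma$ equals the length-$(\ell-1-s)$ prefix of $\bmb$. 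When $s=\ell-1$ this condition is empty. In that case the inserted mers are forced: they are the windows of $\bma$ followed by the last $s+1$ symbols of $\bmb$.

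Applied to the junction, the condition for a given $s$ reads
\[
\tilde{\bmx}_{[i-\ell+2+s,\,i]}=\tilde{\bmx}_{[i+t-\ell+2,\,i+t-s]} .
\]
This holds for $s=t$ (take the deleted mers themselves) and for $s=\ell-1$ (vacuous). Hence the smallest valid $s$ satisfies $s\le\min\{\ell-1,t\}$.

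For the claim that $s$ can be read off $\cR'$: if $\cR'$ is consistent then $s=0$. Otherwise there is exactly one inconsistent adjacent pair, so the junction position is identified, and $s$ is the least integer for which the displayed suffix/prefix equality holds between the two known $\ell$-mers. When $s=0$ the junction is not visible, but then $\cR''=\cR'$ and no choice is needed.

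Next I will compute $\cR_\ell^{-1}(\cR'')$. It equals $\tilde{\bmx}_{[-\ell+1,i]}$, followed by the symbols of $\bms_{i+t+1}$ beyond the overlap, followed by the rest of $\tilde{\bmx}$. The overlap of length $\ell-1-s$ ends at position $i+t-s$, so the result is $\tilde{\bmx}$ with $\tilde{\bmx}_{[i+1,i+t-s]}$ removed. The displayed equality says exactly that $S:=\tilde{\bmx}_{[i-\ell+2+s,\,i+t-s]}$ satisfies $z_j=z_{j+(t-s)}$ throughout, i.e. $S$ has period $t-s$. Its length is $\ell-1+t-2s$, and the deleted block lies inside $S$. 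This proves (i).

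For (ii), assume $s<\min\{\ell-1,t\}$. Then $\ell-1-s\ge1$, so $|S|\ge (t-s)+1$ and $S$ is a genuine $(t-s)$-periodic sequence. I extend $S$ to the maximal $(t-s)$-periodic substring $M$ of $\tilde{\bmx}$ containing it. Then $|M|\ge\ell-1+t-2s$, and its minimum period $p$ satisfies $p\le t-s$.

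For (iii), I apply the Fine–Wilf theorem to $M$, which has periods $p$ and $t-s$. This requires $|M|\ge p+(t-s)-\gcd(p,t-s)$, and it suffices to have $\ell-1+t-2s\ge p+t-s-1$, i.e. $p\le \ell-s$. This follows from $p\le t-s\le \ell-2-s$, and this is the place where the hypothesis $t\le\ell-2$ is used. Fine–Wilf then gives that $\gcd(p,t-s)$ is a period of $M$, and minimality of $p$ forces $p\mid t-s$.

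The main obstacle is making "maximal periodic substring" in (ii) precise enough that its minimum period and its maximality refer to the same object. Defining $M$ as the maximal $(t-s)$-periodic extension of $S$ and then taking its minimum period resolves this. If the paper intends maximality with respect to the minimum period $p$ itself, I will add a short argument: a $p$-periodic extension of $M$ of length at least $p+(t-s)$ is also $(t-s)$-periodic, because $p\mid t-s$. So the two notions of maximal substring coincide under (iii).
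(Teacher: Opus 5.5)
Your proposal is correct and follows essentially the same route as the paper. You isolate the unique junction $(\bms_i,\bms_{i+t+1})$ and characterize the admissible $s$ by the suffix/prefix equality $\tilde{\bmx}_{[i-\ell+2+s,i]}=\tilde{\bmx}_{[i+t-\ell+2,i+t-s]}$. From this you read off the deleted block $\tilde{\bmx}_{[i+1,i+t-s]}$, and you obtain (ii)--(iii) from the period $t-s$ of $\tilde{\bmx}_{[i-\ell+2+s,i+t-s]}$ together with a Fine--Wilf argument.

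Two small differences are in your favour. Your uniform overlap criterion absorbs the paper's separate $s=\ell-1$ case. You also pass explicitly to the maximal $(t-s)$-periodic extension $M$, whereas the paper argues (ii)--(iii) only for the substring $\tilde{\bmx}_{[i-\ell+2+s,i+t-s]}$ itself.
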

\begin{IEEEproof}
    By assumption, the only possible inconsistent pair of $\ell$-mers in $\cR^\prime$ is $(\bms_i,\bms_{i+t+1})$. Recall that $\bms_i=\tilde{\bmx}_{[i-\ell+1,i]}$ and $\bms_{i+t+1}=\tilde{\bmx}_{[i+t-\ell+2,i+t+1]}$.

    If $\cR^\prime$ is consistent, then $\bms_i$ and $\bms_{i+t+1}$ are consistent. Then $s=0$ necessarily. In this case, $\cR^{\prime\prime}=\cR^\prime$ and $\cR_{\ell}^{-1}\parenv{\cR^{\prime\prime}}=\tilde{x}_{-\ell+1}\cdots \tilde{x}_0\cdots\tilde{x}_i\tilde{x}_{i+t+1}\cdots \tilde{x}_{n}\tilde{x}_{n+1}\cdots \tilde{x}_{n+\ell}$, which is obtained from $\tilde{\bmx}$ by deleting $\tilde{\bmx}_{[i+1,i+t]}$.

    If $\cR^\prime$ is inconsistent, then $\parenv{\bms_i,\bms_{i+t+1}}$ is the only pair of adjacent $\ell$-mers in $\cR^\prime$ that are inconsistent. Thus, this pair can be identified immediately.  If there is an integer $r$ with $0\le r\le\ell-2$ such that
    \begin{equation}\label{eq_burstdel}
        \tilde{x}_{i-\ell+2+r}\cdots \tilde{x}_{i}=\tilde{x}_{i-\ell+t+2}\cdots \tilde{x}_{i+t-r},
    \end{equation}
    let $s$ be the smallest such $r$. For $j=1,\ldots,s$, define $\bmz_j=\tilde{x}_{i-\ell+1+j}\cdots \tilde{x}_i\tilde{x}_{i+t-s+1}\cdots \tilde{x}_{i+t-s+j}$. Let
    $$
    \cR^{\prime\prime}=\parenv{\bms_0,\ldots,\bms_i,\bmz_1,\ldots,\bmz_s,\bms_{i+t+1},\ldots,\bms_{n+\ell}}.
    $$
    Then $\cR^{\prime\prime}$ is consistent. Clearly, $\cR_{\ell}^{-1}\parenv{\cR^{\prime\prime}}=\tilde{x}_{-\ell+1}\cdots \tilde{x}_0\cdots \tilde{x}_i\tilde{x}_{i+t-s+1}\cdots \tilde{x}_{n}\tilde{x}_{n+1}\cdots \tilde{x}_{n+\ell}$, which is obtained from $\tilde{\bmx}$ by deleting $\tilde{\bmx}_{[i+1,i+t-s]}$.

    Note that if $t\le\ell-2$, we have $i-\ell+2\le i-\ell+t+2\le i$, implying that such an integer $s$ must exist and satisfy $s\le t$.

    Now suppose that no integer $r$ with $0\le r\le\ell-2$ satisfies \eqref{eq_burstdel}. Then we must have $t\ge\ell-1$. Set $s=\ell-1$. For $1\le j\le\ell-1$, define $\bmz_j=\tilde{x}_{i-\ell+1+j}\cdots \tilde{x}_i\tilde{x}_{i-\ell+t+2}\cdots \tilde{x}_{i-\ell+t+1+j}$. Let
    $$
    \cR^{\prime\prime}=\parenv{\bms_0,\ldots,\bms_i,\bmz_1,\ldots,\bmz_{\ell-1},\bms_{i+t+1},\ldots,\bms_{n+\ell}}.
    $$
    Then $\cR^{\prime\prime}$ is consistent. Clearly, $\cR_{\ell}^{-1}\parenv{\cR^{\prime\prime}}=\tilde{x}_{-\ell+1}\cdots \tilde{x}_0\cdots \tilde{x}_i\tilde{x}_{i-\ell+t+2}\cdots \tilde{x}_{n}\tilde{x}_{n+1}\cdots \tilde{x}_{n+\ell}$, which is obtained from $\tilde{\bmx}$ by deleting the substring $\tilde{\bmx}_{[i+1,i-\ell+t+1]}$, which is of length $t-\ell+1=t-s$.

    The preceding argument also shows that the burst-deletion occurs in the substring $\tilde{\bmx}_{[i-\ell+2+s,i+t-s]}$, which has length $\ell-1+t-2s$. This proves (i).

    When $s<\min\mathset{\ell-1,t}$, \Cref{eq_burstdel} implies that $t-s$ is a period of the substring $\tilde{\bmx}_{[i-\ell+2+s,i+t-s]}$. Therefore, the minimum period of $\tilde{\bmx}_{[i-\ell+2+s,i+t-s]}$ is at most $t-s$. This proves (ii).

    If $t\le\ell-2$, we have $\ell-1-s+(t-s)\ge 2(t-s)$. Since $\tilde{\bmx}_{[i-\ell+2+s,i+t-s]}$ is $(t-s)$-periodic, its minimum period must divide $t-s$. This proves (iii).
\end{IEEEproof}

 This lemma motivates the following definition.
\begin{definition}
    Suppose that $\bmp$ is a periodic sequence with minimum period $T$. Let $\bmp^\prime$ be obtained from $\bmp$ by several bursts of deletions. If the length of each burst is divisible by $T$, we say that $\bmp^\prime$ is obtained from $\bmp$ by deleting several complete minimum periods.
\end{definition}

Therefore, \Cref{lem_burstdel} shows that when $s<t\le\ell-2$, the sequence $\cR_{\ell}^{-1}\parenv{\cR^{\prime\prime}}$ is obtained from $\tilde{\bmx}$ by deleting several complete minimum periods of a periodic substring of length at least $\ell-1+t-2s$.

The preceding lemma extends naturally to the case of multiple bursts of deletions.
\begin{theorem}\label{thm_multiburst}
    Let $\bmx\in\Sigma_q^n$. Suppose that $\cR^\prime$ is obtained from $\cR_{\ell}(\tilde{\bmx})$ by deleting $\bms_{i_j+1},\ldots,\bms_{i_j+t_j}$ for $j\in[k]$, where $t_1,\ldots,t_k\ge1$ and $i_{j+1}-i_j\ge t_{j}+2$ for all $1\le j<k$. Equivalently, $\cR^\prime$ is obtained from $\cR_{\ell}(\tilde{\bmx})$ by $k$ non-overlapping bursts of deletions. Let $t=\sum_{j=1}^kt_j$. Denote $\cR^\prime=\parenv{\bms_0^\prime,\bms_1^\prime,\ldots,\bms_{n+\ell-t}}$. Let $m_j=\sum_{r=1}^{j-1}t_j$ for all $j\in[k]$.
    \begin{enumerate}[$(i)$]
        \item There are $k$ integers $s_1,\ldots,s_k$ with $0\le s_j\le\min\mathset{\ell-1,t_j}$ for each $j\in[k]$, such that inserting $s_j$ $\ell$-mers between $\bms_{i_j-m_j}^\prime$ and $\bms_{i_j-m_j+1}^\prime$ yields a consistent vector $\cR^{\prime\prime}$ of $\ell$-mers. These integers can be determined directly from $\cR^\prime$. The sequence $\cR_{\ell}^{-1}\parenv{\cR^{\prime\prime}}$ is obtained from $\tilde{\bmx}$ by deleting the $k$ substrings $\tilde{\bmx}_{[i_1+1,i_1+t_1-s_1]},\ldots,\tilde{\bmx}_{[i_k+1,i_k+t_k-s_k]}$.
        \item If $s_j<\min\mathset{\ell-1,t_j}$, then $\tilde{\bmx}_{[i_j+1,i_j+t_j-s_j]}$ is contained in a maximal periodic substring of length at least $\ell-1+t_j-2s_j$ whose minimum period is at most $t_j-s_j$.
        \item If $s_j<t_j\le\ell-2$, the minimum period of the maximal periodic substring in (ii) divides $t_j-s_j$.
    \end{enumerate}
\end{theorem}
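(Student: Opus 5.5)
The plan is to reduce \Cref{thm_multiburst} to $k$ independent applications of \Cref{lem_burstdel}, using the separation hypothesis $i_{j+1}-i_j\ge t_j+2$ to show that the bursts do not interact. First we locate the inconsistencies. The hypothesis says that between the $j$-th and $(j+1)$-th deleted blocks at least one original $\ell$-mer survives. Hence in $\cR^\prime$ every adjacent pair is an adjacent pair of $\cR_{\ell}(\tilde{\bmx})$, and so consistent, except possibly the $k$ pairs $(\bms_{i_j},\bms_{i_j+t_j+1})$. With the shift by $m_j$ deleted $\ell$-mers that precede the $j$-th block, these pairs sit at positions $(\bms^\prime_{i_j-m_j},\bms^\prime_{i_j-m_j+1})$ of $\cR^\prime$. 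Each is exactly the boundary pair treated in \Cref{lem_burstdel} with $(i,t)=(i_j,t_j)$. Note that $m_j$ should read $\sum_{r<j}t_r$.

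Next we define $s_j$ exactly as in the proof of \Cref{lem_burstdel}. If the pair is consistent, set $s_j=0$. Otherwise, let $s_j$ be the least $r\le\ell-2$ satisfying \eqref{eq_burstdel} with $(i,t)=(i_j,t_j)$, or $s_j=\ell-1$ if no such $r$ exists. The key point is that \eqref{eq_burstdel} only involves the symbols of $\bms_{i_j}$ and $\bms_{i_j+t_j+1}$, both of which appear in $\cR^\prime$. So each $s_j$ is computable from $\cR^\prime$ alone, once the decoder has identified the $k$ inconsistent pairs. A small issue remains: a burst may yield a consistent pair, and then its location is invisible. That is harmless, since $s_j=0$ and nothing is inserted there; the statement only asserts that the integers $s_j$ can be determined. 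We then insert the $\ell$-mers $\bmz^{(j)}_1,\ldots,\bmz^{(j)}_{s_j}$ defined in the lemma at each inconsistent position. Each inserted block glues $\bms_{i_j}$ consistently to $\bms_{i_j+t_j+1}$, and every other adjacency is untouched, so $\cR^{\prime\prime}$ is consistent.

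To identify $\cR_{\ell}^{-1}(\cR^{\prime\prime})$, we read off the sequence by concatenating the first $\ell$-mer with the last symbols of the subsequent ones. Between consecutive bursts the last symbols are $\tilde x_{i_j+t_j+1},\ldots,\tilde x_{i_{j+1}}$, the original symbols. Within burst $j$ the inserted $\ell$-mers contribute $\tilde x_{i_j+t_j-s_j+1},\ldots,\tilde x_{i_j+t_j}$, exactly as in the single-burst computation. In the case $s_j=\ell-1$ the lemma's construction gives the same last symbols, since $i_j-\ell+t_j+1+j'=i_j+t_j-s_j+j'$. Hence the result is $\tilde{\bmx}$ with the substrings $\tilde{\bmx}_{[i_j+1,i_j+t_j-s_j]}$ removed, which proves (i). Parts (ii) and (iii) are statements about $\tilde{\bmx}$ and the single pair $(\bms_{i_j},\bms_{i_j+t_j+1})$ only. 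They therefore follow verbatim from \Cref{lem_burstdel}(ii),(iii) applied with $(i,t)=(i_j,t_j)$: \eqref{eq_burstdel} gives period $t_j-s_j$ on $\tilde{\bmx}_{[i_j-\ell+2+s_j,i_j+t_j-s_j]}$, and when $t_j\le\ell-2$ the length condition forces the minimum period to divide $t_j-s_j$.

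The main obstacle is the bookkeeping in (i). We must check that the local reconstruction at burst $j$ uses only symbols of $\bms_{i_j}$ and $\bms_{i_j+t_j+1}$, and that consecutive bursts never share or corrupt each other's boundary $\ell$-mers. The separation condition $i_{j+1}\ge i_j+t_j+2$ guarantees that $\bms_{i_j+t_j+1}$ is a surviving $\ell$-mer and that $\bms_{i_j+t_j+1}$ and $\bms_{i_{j+1}}$ are linked through genuine consecutive $\ell$-mers. Stating this precisely, possibly as an induction on $k$ that applies the lemma to one burst at a time, will settle it.
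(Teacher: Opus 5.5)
Your proposal is correct and follows the paper's proof. That proof consists of observing that the pairs $\parenv{\bms_{i_j-m_j}^\prime,\bms_{i_j-m_j+1}^\prime}=\parenv{\bms_{i_j},\bms_{i_j+t_j+1}}$ are the only possibly inconsistent adjacent pairs in $\cR^\prime$, and then applying \Cref{lem_burstdel} to each of them. Your extra bookkeeping makes explicit what the paper leaves implicit: reading off the last symbols across bursts, noting that \eqref{eq_burstdel} involves only $\bms_{i_j}$ and $\bms_{i_j+t_j+1}$, and correcting the typo to $m_j=\sum_{r<j}t_r$.
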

\begin{IEEEproof}
    It follows from the definition of $\cR^\prime$ that $\bms_{i_j-m_j}^\prime=\bms_{i_j}^\prime$ and $\bms_{i_j-m_j+1}^\prime=\bms_{i_j+t_j+1}$ for all $j\in[k]$. The $k$ pairs of $\ell$-mers $\parenv{\bms_{i_1-m_1}^\prime,\bms_{i_1-m_1+1}^\prime},\ldots,\parenv{\bms_{i_k-m_k}^\prime,\bms_{i_k-m_k+1}^\prime}$ are the only possibly inconsistent pairs of $\ell$-mers in $\cR^\prime$. Applying \Cref{lem_burstdel} to each of these pairs proves the theorem.
\end{IEEEproof}

\begin{remark}
    Notice that there may be different $j$'s such that $s_{j}<\min\mathset{\ell-1,t_{j}}$, and different $\tilde{\bmx}_{[i_j+1,i_j+t_j-s_j]}$ could be contained in the same maximal periodic substring whose minimum period is at most $t_{j}-s_j$. In such a case, we interpret these $(t_j-s_j)$-burst-deletions as being sequentially applied to the same maximal periodic substring. Thus, claim (ii) says that before the deletion of each $\tilde{\bmx}_{[i_j+1,i_j+t_j-s_j]}$, the maximal periodic substring has length at least $\ell-1+t_j-2s_j$.
\end{remark}

We now turn to arbitrary deletions in $\cR_{\ell}(\tilde{\bmx})$.
\begin{corollary}\label{cor_arbitrarydel}
    Let $\bmx\in\Sigma_q^n$ and $t\ge1$. Suppose that $\cR^\prime$ is obtained from $\cR_{\ell}(\tilde{\bmx})$ by deleting $t$ arbitrary $\ell$-mers.
    \begin{enumerate}[$(i)$]
        \item We can insert a minimum number $s$ of $\ell$-mers into $\cR^\prime$ to get a consistent vector $\cR^{\prime\prime}$ of $\ell$-mers, where $0\le s\le t$. The value of $s$ can be determined directly.
        \item The sequence $\cR_{\ell}^{-1}\parenv{\cR^{\prime\prime}}$ is obtained from $\tilde{\bmx}$ by some bursts of deletions. The total length of these bursts is $t-s$.
        \item If $s<\min\mathset{\ell-1,t}$, each of the burst-deletions in (ii) occurs in a maximal periodic substring whose minimum period is at most $t-s$.
        \item If $s<t\le\ell-2$, the minimum period of each of these maximal periodic substrings in (iii) divides the number of deleted symbols in that substring. In other words, each of these substrings undergoes a deletion of several complete minimum periods.
        \item Let $\bmp$ be one of the maximal periodic substrings in (iii), and let $\bmp^\prime$ be the sequence corresponding to $\bmp$ after deletions. Then $\abs{\bmp^\prime}\ge\ell-1-s$.
        \item Let $\bmp$ and $\bmp^\prime$ be as in (iv). The minimum period of $\bmp$ is also a period of $\bmp^\prime$.
    \end{enumerate}
\end{corollary}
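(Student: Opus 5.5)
The plan is to reduce everything to \Cref{thm_multiburst}. Any set of $t$ deleted $\ell$-mers from $\cR_{\ell}(\tilde{\bmx})$ splits uniquely into maximal runs of consecutive indices. These runs are $k$ non-overlapping bursts $\bms_{i_j+1},\ldots,\bms_{i_j+t_j}$ with $i_{j+1}-i_j\ge t_j+2$ and $\sum_j t_j=t$. The first and last $\ell$-mers are never deleted, so $0\le i_1$ and $i_k+t_k<n+\ell$.

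\emph{Step 1 (parts (i) and (ii)).} Apply \Cref{thm_multiburst}(i) to obtain $s_1,\ldots,s_k$ and set $s=\sum_j s_j\le\sum_j t_j=t$. Each $s_j$ is computed from the single inconsistent pair at the $j$-th junction, and the junctions are exactly the inconsistent adjacent pairs of $\cR^\prime$. Hence $s$ is determined from $\cR^\prime$.

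For minimality, note that insertions at one junction cannot help repair another. So a consistent completion must, at every junction $j$, insert at least the minimum number of $\ell$-mers that makes that junction consistent. By \Cref{lem_burstdel}(i), that minimum is $s_j$. The one subtlety is that a junction might be consistent ($s_j=0$), or might split into parts; the local minimality argument handles both. Part (ii) then follows immediately: the deleted substrings $\tilde{\bmx}_{[i_j+1,i_j+t_j-s_j]}$ have total length $t-s$.

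\emph{Step 2 (parts (iii) and (iv)).} Fix a burst with $t_j>s_j$, since bursts with $t_j=s_j$ delete nothing. If $s<\min\{\ell-1,t\}$, then $s_j\le s<\ell-1$ and $s_j<t_j$, so $s_j<\min\{\ell-1,t_j\}$. \Cref{thm_multiburst}(ii) then gives a maximal periodic substring with minimum period at most $t_j-s_j\le t-s$.

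For (iv), $t_j\le t\le\ell-2$, so \Cref{thm_multiburst}(iii) says the minimum period divides $t_j-s_j$. Several bursts may fall in the same maximal substring $\bmp$; following the remark after \Cref{thm_multiburst}, we apply them sequentially. Each burst removes a multiple of $T$, so the total deleted from $\bmp$ is a multiple of $T$.

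\emph{Step 3 (parts (v) and (vi)).} These are the main obstacle, because several bursts act on the same $\bmp$ and we must track its length. For (vi), deleting a block of length divisible by $T$ from a $T$-periodic string leaves a $T$-periodic string. To see this, write $\bmp$ as a prefix of $\bmw^\infty$ with $|\bmw|=T$. Removing positions $a+1,\ldots,a+mT$ leaves the symbol at position $b>a+mT$ shifted to $b-mT\equiv b\pmod T$, so the result is still a prefix of $\bmw^\infty$. Iterating over bursts gives (vi). In case (iii) but not (iv), each burst length is at most a multiple of the local period; there I would check directly that \eqref{eq_burstdel} forces the surviving part to coincide with a shifted copy.

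For (v), use the window $\tilde{\bmx}_{[i_j-\ell+2+s_j,\,i_j+t_j-s_j]}$ from \Cref{lem_burstdel}. After deleting $\tilde{\bmx}_{[i_j+1,i_j+t_j-s_j]}$, the surviving portion $\tilde{\bmx}_{[i_j-\ell+2+s_j,\,i_j]}$ has length $\ell-1-s_j\ge\ell-1-s$ and lies inside $\bmp^\prime$. Later bursts in the same $\bmp$ could delete into this surviving portion, but only if their own window overlaps it. In that case I would instead take the surviving left part of the last burst applied. Its length is $\ell-1-s_{j'}\ge\ell-1-s$, and it remains untouched, since the bursts are non-overlapping and ordered left to right in $\tilde{\bmx}$.
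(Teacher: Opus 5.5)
Your reduction of (i)--(iv) to \Cref{thm_multiburst}, including the local minimality argument giving $s=\sum_j s_j$, is what the paper does. Your (vi) argument is also fine once you add, as the paper does, that $|\bmp'|\ge\ell-1-s>t-s\ge T$, so $\bmp'$ is long enough to count as $T$-periodic. The extra ``(iii) but not (iv)'' case is not needed, because (vi) assumes the setting of (iv).

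The gap is Step 3, your argument for (v). The left part $\tilde{\bmx}_{[i_j-\ell+2+s_j,\,i_j]}$ of burst $j$ lies at positions $\le i_j$. Every later burst deletes only positions $>i_{j'}\ge i_j+t_j+2$, so later bursts can never touch it. It is the \emph{earlier} bursts, possibly attached to a different, overlapping maximal periodic substring, that can delete into it. So your fallback (``take the left part of the last burst applied; it remains untouched since the bursts are ordered left to right'') gets the direction backwards, and the positional claim it is meant to secure is false.

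Here is a concrete failure. Take $\ell=5$, $a\ne b$, $\tilde{\bmx}_{[i-4,i+7]}=a\,b\,a\,b\,a\,b\,a\,a\,a\,a\,a\,b$, and delete $\bms_{i+1},\bms_{i+2},\bms_{i+6}$. Both junctions are consistent, since $\tilde{\bmx}_{[i-3,i]}=\tilde{\bmx}_{[i-1,i+2]}$ and $\tilde{\bmx}_{[i+2,i+5]}=\tilde{\bmx}_{[i+3,i+6]}$. Hence $s=0$, $t=3=\ell-2$, and $\bmy$ arises by deleting $\tilde{x}_{i+1}\tilde{x}_{i+2}$ and $\tilde{x}_{i+6}$. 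The second burst's window is the maximal run $\bmp=\tilde{\bmx}_{[i+2,i+6]}$, and it is the only burst attached to $\bmp$. Its left part $\tilde{\bmx}_{[i+2,i+5]}$ loses $\tilde{x}_{i+2}$ to the first burst, which belongs to the alternating substring ending at $i+2$. So only $3<\ell-1-s=4$ positions of $\bmp$ survive.

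Statement (v) still holds for the run of $\bmy$ as a word, since $\bmy$ reads $\cdots a\,b\,a\,b\,a\,a\,a\,a\,b\cdots$ there. This is because the deleted block $ba$ of the alternating part could equally well be $\tilde{x}_{i-1}\tilde{x}_i$. So (v) has to be argued at the level of words in $\bmy$, not surviving positions of $\tilde{\bmx}$. Two ways to do this:
\begin{itemize}
\item Use $\cR_\ell(\bmy)=\cR''$: then $\bms_{i_j}$ and $\bms_{i_j+t_j+1}$ occur intact in $\bmy$ and overlap in $\ell-1-s_j$ positions, which gives a $T$-periodic stretch of length $\ge\ell-1-s_j\ge\ell-1-s$ at that junction.
\item Relocate complete-period deletions, as the paper does in the proof of \Cref{lem_scheckpattern}.
\end{itemize}
The paper itself simply reads (v) off \Cref{thm_multiburst}(ii) via the sequential interpretation in the remark after it.
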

\begin{IEEEproof}
    Notice that deletions in $\cR_{\ell}(\tilde{\bmx})$ can be partitioned into several non-overlapping bursts of deletions. Then (i)-(v) follow immediately from \Cref{thm_multiburst}. Regarding (vi), suppose that the minimum period of $\bmp$ is $T$. According to (iv), $\bmp^\prime$ is obtained from $\bmp$ by deleting several complete minimum periods. Consequently, $\bmp^\prime$ is a substring of $\bmp$. By (v), we have $\abs{\bmp^\prime}\ge\ell-1-s>t-s\ge T$. Therefore, $T$ is also a period of $\bmp^\prime$. This completes the proof.
\end{IEEEproof}

\section{General Constructions}\label{sec_genconstruction}
In this section, we construct $\ell$-read $t$-deletion correcting codes with redundancy $t\log(n)+O(1)$ for every pair $(\ell,t)$ satisfying $1\le t\le\ell/2$, and codes with redundancy $(2t-1)\log(n)+O(1)$ for every pair $(\ell,t)$ satisfying $\ell=2t-1$ and $t\ge3$.

Suppose that $\cR^\prime$ is obtained from $\cR_{\ell}(\tilde{\bmx})$ by deleting $t$ arbitrary $\ell$-mers, and that a consistent vector $\cR^{\prime\prime}$ of $\ell$-mers is obtained from $\cR^\prime$ by inserting a minimum number $s$ of $\ell$-mers. If $s<t\le\ell-2$, then \Cref{cor_arbitrarydel} implies that $\cR_{\ell}^{-1}\parenv{\cR^{\prime\prime}}$ is obtained from $\tilde{\bmx}$ by deleting some complete minimum periods from certain periodic substrings of $\tilde{\bmx}$. This observation motivates the following definition.
\begin{definition}\label{def_checkpattern}
    Suppose that $\ell\ge 2$, $t\ge1$ and $0\le s<\min\mathset{\ell-1,t}$. Let $\bmz$ be a sequence over $\Sigma_q$, and let $\bmp$ be a periodic substring of $\bmz$ with minimum period $T$. We say that  $\bmp$ is an $(\ell,t,s)$-\emph{check pattern} in $\bmz$, if the following three conditions are satisfied:
    \begin{enumerate}[$(i)$]
        \item $T\le t-s$;
        \item $\bmp$ is a maximal periodic substring of $\bmz$ with minimum period $T$;
        \item $\abs{\bmp}\ge\ell-1-s$.
    \end{enumerate}
    We say that $\bmp$ is an $(\ell,t,\le s)$-check pattern, if it is an $(\ell,t,s^\prime)$-check pattern for some $s^\prime\le s$.
\end{definition}

Thus, the sequence $\cR_{\ell}^{-1}\parenv{\cR^{\prime\prime}}$ is obtained from $\tilde{\bmx}$ by deleting some complete minimum periods from some $(\ell,t,s)$-check patterns. In \Cref{lem_scheckpattern} and \Cref{cor_tcheckpattern} below, we show that these deletions in $\tilde{\bmx}$ only shorten these check patterns, without changing their number, their order, or their minimum periods.

The following lemma follows immediately from the well-known Fine-Wilf Theorem \cite[Theorem 1]{FineWilf1965}, and can also be found in \cite[Lemma 1]{kolpakov1999FOCS}.
\begin{lemma}\cite[Lemma 1]{kolpakov1999FOCS}\label{lem_periodoverlap}
    Let $\bmp_1$ and $\bmp_2$ be two maximal periodic substrings of a sequence $\bmz$, with minimum periods $T_1$ and $T_2$, respectively. Then $\bmp_1$ and $\bmp_2$ can overlap in at most $T_1+T_2-\gcd\parenv{T_1,T_2}-1$ positions.
\end{lemma}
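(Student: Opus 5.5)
The plan is to derive the bound from the Fine--Wilf theorem, in the form: if a word $\bmw$ has periods $T_1$ and $T_2$ and $\abs{\bmw}\ge T_1+T_2-\gcd(T_1,T_2)$, then $\gcd(T_1,T_2)$ is also a period of $\bmw$. We argue by contradiction. Suppose $\bmp_1$ and $\bmp_2$ are distinct maximal periodic substrings (with minimum periods $T_1,T_2$) whose overlap $\bmw$ has length at least $T_1+T_2-\gcd(T_1,T_2)$. As a substring of both, $\bmw$ inherits period $T_1$ from $\bmp_1$ and period $T_2$ from $\bmp_2$. Fine--Wilf then gives that $d=\gcd(T_1,T_2)$ is a period of $\bmw$.

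Next we propagate the period $d$ from $\bmw$ to all of $\bmp_1$. Since $d\mid T_1$ and $\abs{\bmw}\ge T_1$, the word $\bmw$ contains a full window of length $T_1$ of $\bmp_1$, and this window is $d$-periodic. Every symbol of $\bmp_1$ is determined by its residue modulo $T_1$, so by shifting by multiples of $T_1$ each position of $\bmp_1$ is identified with a position in that window. Concretely, for any $i$ in $\bmp_1$ we reduce $i$ and $i+d$ into the window modulo $T_1$ and check that consecutive residues stay consistent; this uses $d\mid T_1$. Hence $\bmp_1$ has period $d\le T_1$. Minimality of $T_1$ forces $d=T_1$, so $T_1\mid T_2$. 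The symmetric argument gives $T_2\mid T_1$, and therefore $T_1=T_2=:T$.

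Now $\bmp_1$ and $\bmp_2$ are both $T$-periodic and overlap in at least $T_1+T_2-\gcd(T_1,T_2)=T$ positions. Their union is then a single substring, and it is $T$-periodic: any two positions at distance $T$ in the union either lie in the same $\bmp_k$, or can be related through a length-$T$ window inside the overlap. Since $\bmp_1$ and $\bmp_2$ are distinct, the union is strictly longer than at least one of them. That contradicts maximality of that $\bmp_k$ as a $T$-periodic substring.

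The one delicate point is the boundary bookkeeping. We must check that an overlap of length exactly $T$, rather than $T+1$, already suffices to glue the two periodic words, and that the ``$-1$'' in the statement matches the Fine--Wilf threshold. A $T$-window shared by both words determines both words entirely, so the gluing works, and the count lines up.
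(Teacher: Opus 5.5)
Your proof is correct and follows the route the paper indicates. The paper gives no argument beyond citing Kolpakov--Kucherov and noting that the bound follows from the Fine--Wilf theorem; your steps (apply Fine--Wilf to the overlap, propagate the period $\gcd(T_1,T_2)$ to force $T_1=T_2$ by minimality, then glue $\bmp_1$ and $\bmp_2$ to contradict maximality) are the standard way to fill in that remark, and in the gluing step it suffices to note that an overlap of length at least $T$ already places every pair of positions at distance $T$ in the union inside a single $\bmp_k$.
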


The following corollary is an immediate consequence of \Cref{lem_periodoverlap}.
\begin{corollary}\label{cor_maxoverlap}
    Suppose that $s<t$ and $t\le\frac{\ell+1+s}{2}$. Let $\bmp_1$ and $\bmp_2$ be two distinct maximal periodic substrings of a sequence $\bmz$, with minimum periods $T_1$ and $T_2$, respectively. If $T_1,T_2\le t-s$, then $\bmp_1$ and $\bmp_2$ can overlap in at most $\ell-2-s$ positions.
\end{corollary}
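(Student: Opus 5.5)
Since $\bmp_1$ and $\bmp_2$ are maximal periodic substrings of $\bmz$, \Cref{lem_periodoverlap} applies directly and shows that their overlap has length at most $T_1+T_2-\gcd\parenv{T_1,T_2}-1$. The whole argument therefore reduces to bounding this quantity by $\ell-2-s$ using only $T_1,T_2\le t-s$ and the hypothesis $t\le\frac{\ell+1+s}{2}$.

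The bound comes from a single chain of inequalities. Since $\gcd\parenv{T_1,T_2}\ge1$, the overlap is at most $T_1+T_2-2\le 2(t-s)-2$. From $2t\le \ell+1+s$ we get $2(t-s)-2\le \ell+1+s-2s-2=\ell-1-s$. This yields only $\ell-1-s$, which is one more than the claim. So the crude bound $\gcd\ge1$ is not quite sufficient, and I will need to split into cases.

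\textbf{Case 1: $T_1\ne T_2$.} Assume without loss of generality that $T_1<T_2\le t-s$. Then $T_1\le t-s-1$, so $T_1+T_2-\gcd-1\le (t-s-1)+(t-s)-2=2(t-s)-3\le \ell-2-s$, as required.

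\textbf{Case 2: $T_1=T_2=T$.} Here $\gcd\parenv{T_1,T_2}=T$, and the lemma gives an overlap of at most $T-1\le t-s-1$. It remains to check $t-s-1\le \ell-2-s$, i.e.\ $t\le\ell-1$. This follows from $2t\le\ell+1+s$ together with $s<t$: since $s\le t-1$, we have $2t\le \ell+t$, hence $t\le\ell$. That alone is not enough. To strengthen it, I will use integrality: $2t\le\ell+1+s\le \ell+t$, and equality $t=\ell$ would force $s=t-1$, giving $t-s-1=0\le\ell-2-s$ anyway. In general, with $T\le t-s$ we need $t-s-1\le\ell-2-s$. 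If $t\le\ell-1$ this holds; otherwise $t=\ell$ and $s=t-1$, so $T=1$ and the overlap is $0$. Alternatively, I would simply note that two distinct maximal substrings with the same period $T$ overlap in fewer than $T$ positions. Otherwise their union would be $T$-periodic, contradicting maximality.

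The main obstacle I anticipate is the boundary subtlety just described. The naive estimate $\gcd\ge1$ loses exactly one position, so the case split on whether the minimum periods coincide is essential. Corner cases such as $t=\ell$ also need a short separate check.
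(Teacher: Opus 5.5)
Your case split is the right one, and it is what the paper's ``immediate consequence of \Cref{lem_periodoverlap}'' tacitly requires. For $T_1\neq T_2$ you get $2(t-s)-3\le \ell-2-s$. For $T_1=T_2=T$ you get $T-1\le t-s-1\le \ell-2-s$ as soon as $t\le \ell-1$.

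The problem is your treatment of the remaining case $t=\ell$, which, as you correctly note, forces $s=\ell-1$. There you write $t-s-1=0\le \ell-2-s$, but $\ell-2-s=\ell-2-(\ell-1)=-1$, so this inequality is false. Observing that the overlap is $0$ does not rescue it, since $0>-1$.

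No argument can close this case. Take $\ell=t=2$, $s=1$ and $\bmz=0011$. The hypotheses $s<t$ and $t\le(\ell+1+s)/2$ hold. The runs $00$ and $11$ are distinct maximal periodic substrings with minimum period $1=t-s$, and they overlap in $0>\ell-2-s$ positions. So the corollary as literally stated fails at $(t,s)=(\ell,\ell-1)$.

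The fix is an added hypothesis, not an extra argument. Assume either the standing condition $s<\min\{\ell-1,t\}$ from \Cref{def_checkpattern}, or $t\le\ell-2$, which holds in \Cref{lem_scheckpattern} where the corollary is applied. Then $2t\le \ell+1+s\le 2\ell-1$, hence $t\le \ell-1$. Under that restriction your two cases give a complete proof, so you should state it explicitly rather than claim the corner case works out.
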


\begin{lemma}\label{lem_scheckpattern}
    Let $\bmx\in\Sigma_q^n$ and $1\le t\le\min\mathset{\ell-2,\frac{\ell+1+s}{2}}$, where $0\le s<t$. Suppose that $\cR^\prime$ is obtained from $\cR_{\ell}(\tilde{\bmx})$ by deleting $t$ arbitrary $\ell$-mers, and that $\cR^{\prime\prime}$ is the consistent vector of $\ell$-mers obtained by inserting a minimum number $s$ of $\ell$-mers into $\cR^\prime$. Let $\bmy=\cR_{\ell}^{-1}\parenv{\cR^{\prime\prime}}$.
    \begin{enumerate}[$(i)$]
        \item The sequences $\tilde{\bmx}$ and $\bmy$ contain the same number of $(\ell,t,s)$-check patterns.
        \item Let $\bmp^\prime$ be the $i$-th $(\ell,t,s)$-check pattern in $\bmy$, and let $\bmp$ be the $i$-th $(\ell,t,s)$-check pattern in $\tilde{\bmx}$. Then $\bmp^\prime$ and $\bmp$ have the same minimum period. Moreover, either $\bmp^\prime=\bmp$, or $\bmp^\prime$ is obtained from $\bmp$ by deleting some complete minimum periods.
    \end{enumerate}
\end{lemma}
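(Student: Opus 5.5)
The plan is to reduce everything to the deletion of a single complete minimum period from a single check pattern, and then argue by induction on the number of deleted periods. Since $s<t\le\ell-2$, \Cref{cor_arbitrarydel}(iv) says that $\bmy$ is obtained from $\tilde{\bmx}$ by deleting complete minimum periods from maximal periodic substrings. By (iii) and (v), each such substring has minimum period at most $t-s$ and length at least $\ell-1-s$ both before and after the deletions. So every affected substring is an $(\ell,t,s)$-check pattern of $\tilde{\bmx}$. I would split every burst into blocks of length $T$ and apply them one at a time, producing a chain $\tilde{\bmx}=\bmz^{(0)},\bmz^{(1)},\ldots,\bmz^{(m)}=\bmy$. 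Each step deletes one length-$T$ block inside a $T$-periodic check pattern $\bmp$ of $\bmz^{(j)}$ and leaves a remnant of length at least $\ell-1-s$. It then suffices to prove (i) and (ii) for one such step, because the property ``same number, same order, same minimum periods, each pattern equal or shortened by whole periods'' composes along the chain.

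Before the single-step analysis I would record two facts. First, deleting $T$ consecutive symbols from a $T$-periodic string yields the same string wherever the window is placed, provided the window lies inside $\bmp$. So the deletion may be relocated to any convenient length-$T$ window of $\bmp$. Second, by \Cref{cor_maxoverlap}, two distinct check patterns overlap in at most $\ell-2-s$ positions, which is less than their lengths. Hence no check pattern contains another, and ordering by starting positions coincides with ordering by ending positions. This gives a well-defined linear order of the check patterns, which is what ``the $i$-th check pattern'' refers to.

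For the single step, let $\bmp'$ be $\bmp$ with one period removed.
\begin{itemize}
\item[(a)] \emph{$\bmp'$ is a check pattern with minimum period $T$.} It is $T$-periodic by \Cref{cor_arbitrarydel}(vi). If it had a smaller period $T'$, then since $|\bmp'|\ge\ell-1-s\ge 2(t-s)-2$ (from $t\le(\ell+1+s)/2$), the Fine--Wilf theorem would give period $\gcd(T,T')$. Because $\bmp'$ contains a full period of $\bmp$, that period would propagate back to $\bmp$, contradicting minimality; the boundary case $T+T'-\gcd(T,T')>|\bmp'|$ needs a direct check. Maximality is preserved because the symbols immediately outside $\bmp$ are unchanged and the $T$-periodic continuation across them is determined by the same period.
\item[(b)] \emph{Every other check pattern $\bmq$ of $\bmz^{(j)}$ survives unchanged.} This is where I expect the main obstacle. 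I would use the relocation freedom to put the deleted window in the part of $\bmp$ not covered by the at most two neighbouring check patterns; each neighbour covers at most $\ell-2-s$ positions, at one end of $\bmp$. If there is not enough room for a window disjoint from both, I would instead place the window inside $\bmp\cap\bmq$ and verify that $\bmq$ is also unaffected. That case requires the overlap to be simultaneously $T$- and $T_{\bmq}$-periodic, and the bound of \Cref{lem_periodoverlap} should force the relevant symbols to agree.
\item[(c)] \emph{Deletion creates no new check pattern.} Suppose $\bmr$ is a check pattern of $\bmz^{(j+1)}$ that is not the image of one in $\bmz^{(j)}$. I would reinsert the period, which is again a relocatable length-$T$ insertion inside $\bmp'$, and run (a) and (b) in reverse to obtain a check pattern of $\bmz^{(j)}$ mapping to $\bmr$. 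This shows the correspondence is a bijection.
\end{itemize}

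Together, (a)--(c) give the same number of check patterns, the same order, equal minimum periods, and each $\bmp'$ either equal to $\bmp$ or shortened by complete periods. Composing along the chain then proves (i) and (ii).
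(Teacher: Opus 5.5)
Your architecture matches the paper's: reduce to one local step (remove complete periods from a single check pattern $\bmp$ of period $T$, leaving at least $\ell-1-s$ symbols), use that the result does not depend on where the block sits inside $\bmp$, show $\bmp'$ is again a check pattern of minimum period $T$, show every other check pattern survives, and get the converse by reinsertion. Your Fine--Wilf argument for the minimum period in (a) is a valid substitute for the paper's. The paper instead observes that the length-$(\ell-1-s)$ prefix of $\bmp'$ would lie in a check pattern of smaller period overlapping $\bmp$ in at least $\ell-1-s$ positions, contradicting \Cref{cor_maxoverlap}. The ``boundary case'' you flag never arises, since $T+T'-\gcd(T,T')\le 2T-2\le 2(t-s)-2\le\abs{\bmp'}$.

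The genuine gap is in (b).
\begin{itemize}
\item \emph{A common window need not exist.} One length-$T$ window in $\bmp$ avoiding both neighbouring check patterns may be impossible. Each neighbour may cover up to $\ell-2-s$ positions at its end of $\bmp$, while $\abs{\bmp}$ can be as small as $\ell-1-s+T$. For instance, take $\ell=8$, $t=4$, $s=0$ and the string $aaaba\,aabaabaaba\,baba$. It contains the period-$3$ check pattern $aabaabaaba$, a period-$4$ check pattern $aaabaaabaa$ covering its first five symbols, and a period-$2$ check pattern $abababa$ covering its last three. Only two positions of $\bmp$ are free, fewer than $T=3$.
\item \emph{The fallback does not work as stated.} If you place the window inside $\bmp\cap\bmq$, that realization deletes symbols from $\bmq$'s own occurrence. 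That $\bmq$ nevertheless reappears intact is exactly what must be proved, and \Cref{lem_periodoverlap} alone does not give it.
\item \emph{The missing idea.} Because the output string does not depend on the window's position, you may choose the window \emph{separately for each} $\bmq\ne\bmp$. For a fixed $\bmq$, the intersection $\bmp\cap\bmq$ is empty or a prefix or suffix of $\bmp$ of length at most $\ell-2-s$. So at least $T+1$ consecutive positions of $\bmp$ lie outside $\bmq$, and a suitable window exists there.
\item \emph{Maximality of $\bmq$.} You do not check that $\bmq$ stays maximal, and you need to. The symbol adjacent to $\bmq$ on the side facing $\bmp$ can be left untouched, using the spare free position. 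Alternatively, it is replaced by the symbol of $\bmp$ at distance $T$, which is equal to it by $T$-periodicity.
\end{itemize}
The same per-pattern choice of insertion point is what makes your reverse argument (c) go through.
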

\begin{IEEEproof}
    Set $P\triangleq t-s$ and  $L\triangleq \ell-1-s$.
    Then $P\ge 1$ and, since $t\le \ell-2$, we have $L>P$. Moreover, the assumption
    $t\le (\ell+1+s)/2$ implies that \Cref{cor_maxoverlap} is applicable with this value of $s$. Hence, throughout the proof, any two distinct maximal periodic substrings whose minimum periods are at most $P$ can overlap in at most $\ell-2-s=L-1$
    positions.

    We shall use the following two elementary facts repeatedly.

    First, let $\bmz$ be an arbitrary sequence, and let $\bmw$ be a length-$L$ substring of $\bmz$ whose minimum period is $T\le P$. Extend $\bmw$ maximally in $\bmz$ while preserving period $T$, and denote the resulting substring by $\bmq$. Then $\bmq$ is an $(\ell,t,s)$-check pattern in $\bmz$. Indeed, $\bmq$ has period $T\le P=t-s$, is maximal with respect to this period, and has length at least $L=\ell-1-s$. Moreover, this check pattern is unique. To see this, suppose that $\bmw$ is contained in another $(\ell,t,s)$-check pattern $\bmq'$ whose minimum period is at most $P$. Then $\bmq$ and $\bmq'$ overlap in at least $L$ positions, contradicting \Cref{cor_maxoverlap} unless $\bmq=\bmq'$. In particular, every length-$L$ substring with minimum period at most $P$ belongs to a unique $(\ell,t,s)$-check pattern.

    Second, suppose that a periodic sequence $\bma=a_1a_2\cdots a_m$ has period $T$, and that a substring of length $h$ is deleted from $\bma$, where $T\mid h$. If the deleted substring starts after the first $r$ symbols of $\bma$, then the resulting sequence is $a_1\cdots a_r a_{r+h+1}\cdots a_m$.
    Since $T\mid h$ and $\bma$ has period $T$, we have $a_{j+h}=a_j$ whenever both sides are defined. Therefore the resulting sequence is exactly $a_1a_2\cdots a_{m-h}$.
    Thus, deleting any length-$h$ substring from a $T$-periodic sequence, with $T\mid h$, has the same effect as shortening this sequence by $h$ symbols, and the shortened sequence is still $T$-periodic.

    We next prove a local stability claim.

    \smallskip
    \noindent\emph{Claim.}
    Let $\bmz^\prime$ be obtained from a sequence $\bmz$ by deleting a substring of length $h$ from an $(\ell,t,s)$-check pattern $\bma$ of $\bmz$, where the minimum period of $\bma$ is $T$, $T\mid h$, and the shortened copy $\bma^\prime$ of $\bma$ has length at least $L$. Then the $(\ell,t,s)$-check patterns in $\bmz$ and $\bmz^\prime$ are in one-to-one order-preserving correspondence. Under this correspondence, $\bma$ is mapped to $\bma^\prime$, and every other check pattern is unchanged as a sequence. In particular, corresponding check patterns have the same minimum period.

    \smallskip
    \noindent\emph{Proof of the claim.}
    By the second fact above, $\bma^\prime$ is obtained from $\bma$ by deleting $h/T$ complete periods, and $\bma^\prime$ is still $T$-periodic.

    We first show that the minimum period of $\bma^\prime$ is still $T$. Suppose, to the contrary, that $\bma^\prime$ has minimum period $T'<T$. Since $\abs{\bma^\prime}\ge L$, the length-$L$ prefix of $\bma^\prime$ has minimum period at most $T'$. This prefix is also a length-$L$ substring of $\bma$, because deleting complete periods from $\bma$ preserves its length-$L$ prefix as a word. By the first fact, this length-$L$ substring is contained in an $(\ell,t,s)$-check pattern of $\bmz$ with minimum period at most $T'$. This check pattern is distinct from $\bma$ and overlaps $\bma$ in at least $L$ positions, contradicting \Cref{cor_maxoverlap}. Hence the minimum period of $\bma^\prime$ is $T$.

    We now prove that $\bma^\prime$ is maximal $T$-periodic in $\bmz^\prime$. Since $T\mid h$, the first $T$ symbols and the last $T$ symbols of $\bma^\prime$ are in the same phase as the first $T$ symbols and the last $T$ symbols of $\bma$, respectively. Therefore, the symbol immediately to the left of $\bma^\prime$, if it exists, can extend $\bma^\prime$ with period $T$ if and only if the symbol immediately to the left of $\bma$ can extend $\bma$ with period $T$. The same is true for the symbol immediately to the right. Since $\bma$ is maximal $T$-periodic in $\bmz$, neither extension is possible. Hence $\bma^\prime$ is maximal $T$-periodic in $\bmz^\prime$. Thus $\bma^\prime$ is an $(\ell,t,s)$-check pattern in $\bmz^\prime$.

    Let $\bmb$ be any $(\ell,t,s)$-check pattern in $\bmz$ distinct from $\bma$, and let $U$ be its minimum period. By \Cref{cor_maxoverlap}, $\bma$ and $\bmb$ overlap in at most $L-1$ positions. Since $\abs{\bma^\prime}\ge L$, we have $\abs{\bma}-h=\abs{\bma^\prime}\ge L$.
    Hence
    $\abs{\bma\setminus\bmb}
        \ge \abs{\bma}-(L-1)
        \ge h+1$,
    where $\bma\setminus\bmb$ denotes the substring of $\bma$ that is outside of the overlap of $\bma$ and $\bmb$.
    Since $\bma$ and $\bmb$ are substrings, their intersection is an interval, possibly empty. Therefore $\bma$ contains a length-$h$ substring disjoint from $\bmb$. By the second fact above, deleting this disjoint length-$h$ substring from $\bma$ gives the same resulting sequence $\bmz^\prime$. Thus we may regard the deletion as being performed outside $\bmb$. Under this equivalent realization, the occurrence of $\bmb$ is not changed.

    It remains only to check that $\bmb$ is still maximal $U$-periodic in $\bmz^\prime$. The only possible change in the maximality of $\bmb$ could occur at the side facing $\bma$. However, the symbol adjacent to $\bmb$ on that side is either unchanged, or is replaced by another symbol of $\bma$ at distance $h$ from the original one. Since $T\mid h$ and $\bma$ is $T$-periodic, these two adjacent symbols are equal. Hence the same boundary obstruction that prevented $\bmb$ from being extended with period $U$ in $\bmz$ also prevents such an extension in $\bmz^\prime$. Therefore $\bmb$ remains a maximal $U$-periodic substring in $\bmz^\prime$, and so it remains an $(\ell,t,s)$-check pattern with the same minimum period.

    Conversely, let $\bmc^\prime$ be an $(\ell,t,s)$-check pattern in $\bmz^\prime$ distinct from $\bma^\prime$. We reverse the above deletion by inserting into $\bma^\prime$ a length-$h$ block that consists of complete periods of length $T$. Since inserting such a block at any position of the $T$-periodic sequence $\bma^\prime$ gives the same $T$-periodic sequence $\bma$, we may choose the insertion position outside the overlap between $\bma^\prime$ and $\bmc^\prime$. This is possible because two distinct check patterns in $\bmz^\prime$ overlap in at most $L-1$ positions, whereas $\abs{\bma^\prime}\ge L$. With this choice, $\bmc^\prime$ is not changed as a sequence. The same boundary argument as above shows that its maximality and minimum period are preserved after the insertion. Hence $\bmc^\prime$ comes from an $(\ell,t,s)$-check pattern of $\bmz$.

    Thus no check pattern is lost, no new check pattern is created, and no two check patterns merge. The correspondence is order-preserving because deleting a substring does not change the left-to-right order of the remaining substrings. This proves the claim.

    \smallskip

    We now apply the claim to the passage from $\tilde{\bmx}$ to $\bmy$. By \Cref{cor_arbitrarydel} (ii)--(v), the sequence $\bmy$ is obtained from $\tilde{\bmx}$ by deleting a collection of bursts whose total length is $P=t-s$. Moreover, each such burst is contained in a maximal periodic substring whose minimum period is at most $P$, and, since $s<t\le \ell-2$, \Cref{cor_arbitrarydel} (iv) implies that the length of each deleted burst is divisible by the minimum period of the corresponding maximal periodic substring. Finally, \Cref{cor_arbitrarydel} (v) guarantees that after all deletions from any such substring, the remaining sequence has length at least $\ell-1-s=L$.
    Therefore, each burst-deletion satisfies the hypothesis of the claim, possibly after grouping consecutive complete-period deletions inside the same maximal periodic substring.

    Applying the claim successively to all these burst-deletions, we obtain an order-preserving bijection between the set of $(\ell,t,s)$-check patterns in $\tilde{\bmx}$ and the set of $(\ell,t,s)$-check patterns in $\bmy$. This proves (i).

    It remains to verify the structural statement in (ii). Let $\bmp$ be the $i$-th $(\ell,t,s)$-check pattern in $\tilde{\bmx}$, and let $\bmp^\prime$ be the corresponding, equivalently the $i$-th, $(\ell,t,s)$-check pattern in $\bmy$. At each local deletion step, the claim shows that the corresponding check pattern keeps the same minimum period, and that the only possible change is the deletion of complete minimum periods from that check pattern. Hence $\bmp^\prime$ and $\bmp$ have the same minimum period. Moreover, if no complete minimum period is deleted from $\bmp$, then $\bmp^\prime=\bmp$; otherwise, $\bmp^\prime$ is obtained from $\bmp$ by deleting some complete minimum periods. This proves (ii) and completes the proof.
\end{IEEEproof}

Let $s$ be as above. In fact, as the following corollary demonstrates, the deletions in $\tilde{\bmx}$ do not affect any $(\ell,t,\le t-1)$-check pattern that is not an $(\ell,t,s)$-check pattern.
\begin{corollary}\label{cor_tcheckpattern}
    Let $\bmx\in\Sigma_q^n$ and $1\le t\le\min\mathset{\ell-2,\frac{\ell}{2}}$. Suppose that $\cR^\prime$ is obtained from $\cR_{\ell}(\tilde{\bmx})$ by deleting $t$ arbitrary $\ell$-mers, and that $\cR^{\prime\prime}$ is the consistent vector of $\ell$-mers obtained by inserting a minimum number $s$ of $\ell$-mers into $\cR^\prime$, where $0\le s<t$. Let $\bmy=\cR_{\ell}^{-1}\parenv{\cR^{\prime\prime}}$.
    \begin{enumerate}[$(i)$]
        \item The sequences $\tilde{\bmx}$ and $\bmy$ contain the same number of $(\ell,t,\le t-1)$-check patterns.
        \item Let $\bmp^\prime$ be the $i$-th $(\ell,t,\le t-1)$-check pattern in $\bmy$, and let $\bmp$ be the $i$-th $(\ell,t,\le t-1)$-check pattern in $\tilde{\bmx}$. Then $\bmp^\prime$ and $\bmp$ have the same minimum period. If $\bmp$ is not an $\parenv{\ell,t,s}$-check pattern, it must be that $\bmp^\prime=\bmp$. If $\bmp$ is an $\parenv{\ell,t,s}$-check pattern, then either $\bmp^\prime=\bmp$, or $\bmp^\prime$ is obtained from $\bmp$ by deleting some complete minimum periods.
    \end{enumerate}
\end{corollary}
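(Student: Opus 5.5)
The plan is to rerun the local stability claim from the proof of \Cref{lem_scheckpattern}, now tracking the larger family of $(\ell,t,\le t-1)$-check patterns. Set $P=t-s$ and $L=\ell-1-s$ as before. An $(\ell,t,s')$-check pattern with $s'\le t-1$ has minimum period at most $t-s'\le t$ and length at least $\ell-1-s'\ge \ell-t$. Since $t\le \ell/2$, we have $t\le(\ell+1+s')/2$ for every $s'\ge 0$. So \Cref{cor_maxoverlap} with $s'=0$ applies: any two distinct maximal periodic substrings with minimum periods at most $t$ overlap in at most $T_1+T_2-\gcd(T_1,T_2)-1\le 2t-2\le \ell-2$ positions. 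Every $(\ell,t,\le t-1)$-check pattern has length at least $\ell-t\ge t$. I will need the sharper statement that such a pattern $\bmb$ (period $U\le t$) and a pattern $\bma$ being shortened (period $T\le P$) overlap in at most $T+U-2\le P+t-2$ positions. This is at most $L-1$ exactly when $t\le \ell-1$, which holds. Hence $\abs{\bma^\prime}\ge L$ still leaves room to relocate the deleted block of $\bma$ outside $\bmb$.

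The key steps are as follows. First, by \Cref{cor_arbitrarydel}(ii)--(v), $\bmy$ arises from $\tilde{\bmx}$ through successive deletions of complete minimum periods from maximal periodic substrings $\bma$ with period $T\le P$. Each remaining $\bma^\prime$ has length at least $L$, so each such $\bma$ is an $(\ell,t,s)$-check pattern. Second, for a single step I will show three things, following the claim in \Cref{lem_scheckpattern} word for word.
\begin{enumerate}[$(a)$]
\item $\bma^\prime$ remains an $(\ell,t,s)$-check pattern with period $T$; this is already proved there.
\item Every other $(\ell,t,\le t-1)$-check pattern $\bmb$ is unchanged as a sequence, together with its maximality and minimum period. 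The deletion can be realized outside $\bmb$ by the overlap bound above, and the boundary symbol facing $\bma$ is unchanged by $T$-periodicity.
\item Conversely, every $(\ell,t,\le t-1)$-check pattern of $\bmz^\prime$ other than $\bma^\prime$ comes from one in $\bmz$, by the reinsertion argument.
\end{enumerate}
Third, iterating gives an order-preserving bijection, which proves (i). For (ii), a pattern changes only if it is one of the $\bma$'s, and every $\bma$ is an $(\ell,t,s)$-check pattern. So a pattern that is not an $(\ell,t,s)$-check pattern is never touched and satisfies $\bmp^\prime=\bmp$. The other case follows from \Cref{lem_scheckpattern}(ii).

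One more point needs checking. A pattern's status as an $(\ell,t,s')$-check pattern for some $s'\le t-1$ depends only on its length and minimum period. The concern is that $\bmp$ might be a $(\le t-1)$-pattern but not an $(\ell,t,s)$-pattern, and that $\bmp$ could nevertheless contain the deleted block. This cannot happen: every deleted block lies in some $\bma$, and $\bma$ and $\bmp$ are distinct maximal periodic substrings unless $\bmp=\bma$.

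I expect the main obstacle to be the overlap bound in step (b), in the case where $\bmb$ has period larger than $P$, up to $t$. \Cref{cor_maxoverlap} as stated only covers periods at most $t-s$, so I will invoke \Cref{lem_periodoverlap} directly and verify $T+U-\gcd(T,U)-1\le (t-s)+t-2\le \ell-2-s$ using $t\le \ell/2$. This inequality fails when $T=U=1$ only in degenerate ways that I will check separately. I also need the reinsertion position in (c) to avoid all overlaps, which again uses $\abs{\bma^\prime}\ge L$ and the same bound.
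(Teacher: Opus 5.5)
Your plan is essentially the paper's proof. You take the $(\ell,t,s)$-check patterns from \Cref{lem_scheckpattern} and use \Cref{lem_periodoverlap} to show that the complete periods removed from a shortened pattern $\bma$ can be relocated away from any other $(\ell,t,\le t-1)$-check pattern $\bmb$. Maximality is then kept by $T$-periodicity at the boundary, and the converse comes from reinsertion.

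Two small repairs are needed.

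First, the relocation requires $\bmb\not\subseteq\bma$; otherwise $\bma\setminus\bmb$ can split into two pieces, each shorter than $h$. Your decoupled bounds $U\le t$ and $\abs{\bmb}\ge\ell-t$ do not exclude this. For an $(\ell,t,r)$-check pattern $\bmb$, use the paired bounds $U\le t-r$ and $\abs{\bmb}\ge\ell-1-r$. These give $\abs{\bma\cap\bmb}\le(t-s)+(t-r)-2\le\ell-r-s-2<\abs{\bmb}$, which is exactly how the paper rules out containment.

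Second, the inequality $(t-s)+t-2\le\ell-2-s$ is equivalent to $2t\le\ell$, not $t\le\ell-1$. There is also no exceptional case at $T=U=1$, since $\gcd(T,U)\ge1$ always gives $T+U-\gcd(T,U)-1\le T+U-2$.
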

\begin{IEEEproof}
    Since $t\le \ell/2$, we have
    $t\le \frac{\ell+1+s}{2}$.
    Hence, \Cref{lem_scheckpattern} applies to the actual value of $s$.
    By \Cref{cor_arbitrarydel}, the sequence $\bmy$ is obtained from
    $\tilde{\bmx}$ by deleting some complete minimum periods from certain
    $(\ell,t,s)$-check patterns. By \Cref{lem_scheckpattern}, the
    $(\ell,t,s)$-check patterns in $\tilde{\bmx}$ and $\bmy$ are in
    one-to-one order-preserving correspondence, and corresponding patterns
    have the same minimum period. Moreover, a corresponding pair is either
    identical, or the latter is obtained from the former by deleting some
    complete minimum periods.

    It remains to show that no $(\ell,t,\le t-1)$-check pattern which is
    not an $(\ell,t,s)$-check pattern is affected. Let $\bmp$ be an
    $(\ell,t,r)$-check pattern in $\tilde{\bmx}$ for some
    $0\le r\le t-1$, and suppose that $\bmp$ is not an
    $(\ell,t,s)$-check pattern. Let $\bmq$ be an $(\ell,t,s)$-check
    pattern from which symbols are deleted. Denote by $U$ and $T$ the
    minimum periods of $\bmp$ and $\bmq$, respectively. Then $U\le t-r$ and $T\le t-s$.
    If $\bmp\ne\bmq$, then by \Cref{lem_periodoverlap},
    $$
        \abs{\bmp\cap\bmq}
        \le U+T-\gcd(U,T)-1
        \le (t-r)+(t-s)-2
        \le \ell-r-s-2,
    $$
    where $\bmp\cap\bmq$ denotes the overlap of $\bmp$ and $\bmq$.
    In particular, $\abs{\bmp\cap\bmq}<\ell-1-r\le \abs{\bmp}$
    and $ \abs{\bmp\cap\bmq}\le \ell-2-s$.
    Thus $\bmp$ is not contained in $\bmq$, and $\bmq$ is not contained in
    $\bmp$.

    Let $H$ be the total number of symbols deleted from $\bmq$. By
    \Cref{cor_arbitrarydel} (v), after these deletions the remaining part
    of $\bmq$ has length at least $\ell-1-s$. Hence, $H\le \abs{\bmq}-(\ell-1-s)$.
    Since $\abs{\bmp\cap\bmq}\le \ell-2-s$, we get $H\le \abs{\bmq}-\abs{\bmp\cap\bmq}-1$.
    Therefore, the part of $\bmq$ outside $\bmp$ contains a substring of
    length at least $H$.

    Since the deletions from $\bmq$ are deletions of complete minimum
    periods, we have $T\mid H$. Deleting $H$ consecutive symbols from a
    $T$-periodic sequence, with $T\mid H$, gives the same shortened
    sequence regardless of the deletion position. Hence the deletions from
    $\bmq$ may be equivalently realized inside $\bmq$ but outside $\bmp$.
    Under this equivalent realization, $\bmp$ is not changed as a sequence.
    Moreover, any boundary symbol of $\bmp$ which is shifted through
    $\bmq$ is replaced by a symbol in the same $T$-phase, and hence by the
    same symbol. Consequently, the maximality and the minimum period of
    $\bmp$ are preserved.

    Thus every $(\ell,t,\le t-1)$-check pattern in $\tilde{\bmx}$ which
    is not an $(\ell,t,s)$-check pattern appears unchanged in $\bmy$.
    Conversely, applying the same argument to the inverse operation, namely
    reinserting the deleted complete minimum periods into the corresponding
    $(\ell,t,s)$-check patterns of $\bmy$, shows that every
    $(\ell,t,\le t-1)$-check pattern in $\bmy$ arises from one in
    $\tilde{\bmx}$.

    Combining this with the correspondence for $(\ell,t,s)$-check patterns
    given by \Cref{lem_scheckpattern}, we obtain an order-preserving
    bijection between the $(\ell,t,\le t-1)$-check patterns in
    $\tilde{\bmx}$ and those in $\bmy$. Therefore the two sequences contain
    the same number of $(\ell,t,\le t-1)$-check patterns. Furthermore,
    corresponding check patterns have the same minimum period. If the
    corresponding pattern in $\tilde{\bmx}$ is not an $(\ell,t,s)$-check
    pattern, it is unchanged; if it is an $(\ell,t,s)$-check pattern, then
    by \Cref{lem_scheckpattern} it is either unchanged or shortened by
    deleting complete minimum periods. This proves both claims.
\end{IEEEproof}

The preceding analysis shows that, to recover $\tilde{\bmx}$, it suffices to determine which check patterns suffered deletions and the number of deletions each of them suffered, since then we can extend the corresponding check patterns in $\bmy$ according to their minimum periods. Our constructions are presented in \Cref{subsec_1del,subsec_multidel}.

\subsection{Correcting Single Deletion}\label{subsec_1del}
Given a sequence $\bmx\in\Sigma_q^n$, let $m=m(\tilde{\bmx})$ be the number of $(\ell,1,0)$-check patterns in $\tilde{\bmx}$. Since any two different $(\ell,1,0)$-check patterns do not overlap and $\abs{\tilde{\bmx}}=n+2\ell$, we have $m\le\floorenv{(n+2\ell)/(\ell-1)}$. If $m=0$, set $f_1(\tilde{\bmx},i)=0$ for all $-\ell+1\le i\le n+\ell$.

If $m\ge 1$, let $\tilde{\bmx}_{[i_{j,1},i_{j,2}]}$ be the $j$-th $(\ell,1,0)$-check pattern in $\tilde{\bmx}$, where $i_{j,2}<i_{j+1,1}$ for $1\le j< m$. For all $-\ell+1\le i\le n+\ell$, set
\begin{equation*}
    f_1(\tilde{\bmx},i)=
    \begin{cases}
        j,\mbox{ if }i\in[i_{j,1},i_{j,2}],\\
        0,\mbox{ else}.
    \end{cases}
\end{equation*}
Define $F_1\parenv{\tilde{\bmx}}\triangleq\sum_{i=-\ell+1}^{n+\ell}f_1(\tilde{\bmx},i)$.

\begin{theorem}\label{thm_1del}
   Let $n>\ell\ge2$. For a given integer $0\le a<\floorenv{(n+2\ell)/(\ell-1)}$, define the code
    \begin{equation*}
        \cC(n;a)=\mathset{\bmx\in\Sigma_q^n:F_1\parenv{\tilde{\bmx}}\equiv a\pmod{\floorenv{(n+2\ell)/(\ell-1)}}}.
    \end{equation*}
    This code can correct a single deletion in $\cR_{\ell}(\tilde{\bmx})$. There exists some $a$ such that $\rho\parenv{\cC(n;a)}\le\log\parenv{\floorenv{(n+2\ell)/(\ell-1)}}$.
\end{theorem}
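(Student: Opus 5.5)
The decoder first receives $\cR'$, a length-$(n+\ell)$ subvector of $\cR_{\ell}(\tilde{\bmx})$, and applies \Cref{lem_burstdel} with $t=1$. The minimal number $s\in\{0,1\}$ of $\ell$-mers needed to restore consistency is read off from $\cR'$.

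If $s=1$, then by \Cref{lem_burstdel}(i) the reconstructed sequence $\cR_{\ell}^{-1}(\cR'')$ is obtained from $\tilde{\bmx}$ by a $0$-burst-deletion. It therefore equals $\tilde{\bmx}$, and we simply strip $\bmu,\bmv$.

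If $s=0$, then $\bmy=\cR_{\ell}^{-1}(\cR')$ is obtained from $\tilde{\bmx}$ by deleting one symbol $\tilde{x}_{i+1}$. By \Cref{lem_burstdel}(ii), since $0<\min\{\ell-1,1\}$, this symbol lies in a maximal $1$-periodic substring (a run) of length at least $\ell$. This run is an $(\ell,1,0)$-check pattern of length $\ge \ell$, and after the deletion its length is $\ge \ell-1$. The case $\ell=2$ needs a remark. Here $t\le\ell-2$ fails, so \Cref{lem_scheckpattern} cannot be quoted directly. For $t=1$, however, I would verify the correspondence by hand, and this is the main point needing care. The $(\ell,1,0)$-check patterns are exactly the runs of length $\ge\ell-1$, and distinct runs never overlap. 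Deleting one symbol from a run of length $\ge\ell$ leaves a run of length $\ge\ell-1$. It does not merge runs, because the run stays nonempty, and it leaves all other runs unchanged. So the $(\ell,1,0)$-check patterns of $\tilde{\bmx}$ and $\bmy$ are in order-preserving bijection. The $j$-th pattern in $\bmy$ is the $j$-th in $\tilde{\bmx}$ shortened by one if $j$ is the affected index $J$, and identical otherwise. In particular both sequences have the same number $m$ of patterns.

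The syndrome comparison follows from this bijection. Since $f_1$ assigns weight $j$ to each position of the $j$-th pattern, $F_1(\tilde{\bmx})-F_1(\bmy)=J$ with $1\le J\le m\le\floorenv{(n+2\ell)/(\ell-1)}$. The bound on $m$ holds because patterns are disjoint and each has length at least $\ell-1$. The decoder computes $F_1(\bmy)$ and obtains $J\equiv a-F_1(\bmy)$ modulo $M=\floorenv{(n+2\ell)/(\ell-1)}$. Since $1\le J\le M$, the residue determines $J$ uniquely: $J=M$ when the residue is $0$. The decoder then lengthens the $J$-th run of $\bmy$ by one symbol to recover $\tilde{\bmx}$. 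If the condition $a<M$ makes residue $0$ ambiguous, I would double-check that $J\in[1,M]$ still pins down $J$ from the residue. It does, since the residues of $1,\ldots,M$ are distinct.

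Redundancy follows by pigeonhole. The $M$ codes $\cC(n;a)$, $0\le a<M$, partition $\Sigma_q^n$, so some $a$ gives $\abs{\cC(n;a)}\ge q^n/M$, i.e. $\rho\le\log M$. The decoding is clearly efficient, being linear time.
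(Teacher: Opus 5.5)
Your proof is correct and follows the paper's argument. You use \Cref{lem_burstdel} to obtain $s\in\{0,1\}$, and for $s=0$ you observe that the deletion shortens a run of length at least $\ell$, so the $(\ell,1,0)$-check patterns of $\tilde{\bmx}$ and $\bmy$ correspond in order and $F_1(\tilde{\bmx})-F_1(\bmy)=J\in[1,M]$. Your run-by-run verification of this correspondence just spells out the paper's one-line justification. One wording fix: the last step should extend the $J$-th check pattern of $\bmy$, not its $J$-th run, and these differ once $\ell\ge3$.

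Your identification of the check patterns with all runs of length at least $\ell-1$ is also the reading needed for $\ell=2$, in your argument and in the paper's. If single-symbol runs were excluded, as the requirement $\abs{\bmz}\ge T+1$ in the definition of a periodic sequence literally suggests, then shortening a run of length exactly $2$ would remove a check pattern and break $m(\tilde{\bmx})=m(\bmy)$.
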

\begin{IEEEproof}
    Suppose that $\tilde{\bmx}=\bmu\bmx\bmv$ is transmitted, and let $\cR^\prime$ be the received vector of $\ell$-mers, where $\bmx\in\cC(n;a)$ and $\cR^\prime$ is a subvector of $\cR_{\ell}(\tilde{\bmx})$ of length $n+\ell$. Our goal is to recover $\tilde{\bmx}$ from $\cR^\prime$.

    By \Cref{lem_burstdel}, we can insert $s\in\{0,1\}$ $\ell$-mers into $\cR_{\ell}^{\prime}$ to obtain a consistent vector of $\ell$-mers $\cR^{\prime\prime}$. The value of $s$ is known to the decoder. If $s=1$, then \Cref{lem_burstdel} (i) gives $\tilde{\bmx}=\cR_{\ell}^{-1}\parenv{\cR^{\prime\prime}}$.

    Now assume $s=0$. Let $\bmy=\cR_{\ell}^{-1}\parenv{\cR^{\prime\prime}}$. By \Cref{lem_burstdel}, the sequence $\bmy$ is obtained from $\tilde{\bmx}$ by deleting one symbol from an $(\ell,1,0)$-check pattern of length at least $\ell$ ($\ge2$). Thus $m(\tilde{\bmx})\ge1$. Moreover, since the deletion occurred in an $(\ell,1,0)$-check pattern of length at least $\ell$ and any two distinct such patterns do not overlap, we can conclude that $m(\tilde{\bmx})=m(\bmy)$ and that the $i$-th $(\ell,1,0)$-check pattern in $\bmy$ is obtained from the $i$-th $(\ell,1,0)$-check pattern in $\tilde{\bmx}$.

    Assume that the deletion occurred in the $j$-th $(\ell,1,0)$-check pattern of $\tilde{\bmx}$. Then $j=F_1\parenv{\tilde{\bmx}}-F_1(\bmy)\equiv a-F_1(\bmy)\pmod{\floorenv{(n+2\ell)/(\ell-1)}}$. Since $j$ is the unique integer in $\sparenv{1,\floorenv{(n+2\ell)/(\ell-1)}}$ satisfying this congruence, the decoder can determine $j$. The original sequence $\tilde{\bmx}$ is then recovered by extending the $j$-th $(\ell,1,0)$-check pattern in $\bmy$ by one symbol. This completes the proof.
\end{IEEEproof}

\subsection{Correcting Multiple Deletions}\label{subsec_multidel}
In this subsection, we first construct $\ell$-read $t$-deletion correcting codes for the case $2\le t\le\ell/2$. The resulting redundancy is $t\log n+O(1)$. We then consider the case $\ell=2t-1$, where $t\ge 3$, and construct a code with redundancy $(2t-1)\log n+O(1)$.

The following lemma is a key ingredient in our constructions.
\begin{lemma}\cite[Proof of Lemma 4.1]{Lara2010SIAM}\label{lem_pre1}
  Let $Q>1$ be an integer and $p\ge\max\{Q,N+1\}$ be a prime number. Suppose that integers $1\le\alpha_1,\ldots,\alpha_N\le Q$ satisfy the system of congruences
  \begin{equation}\label{eq_pre1}
    \left\{
    \begin{array}{c}
     \alpha_1+\cdots+\alpha_N\equiv a_1\pmod{p}\\
     \alpha_1^2+\cdots+\alpha_N^2\equiv a_2\pmod{p}\\
     \vdots\\
     \alpha_1^N+\cdots+\alpha_N^N\equiv a_N\pmod{p}
    \end{array}
    \right.,
  \end{equation}
where $a_1,\ldots,a_N\in\mathbb{Z}_p$ are given. Then the multiset $\multiset{\alpha_1,\ldots,\alpha_N}$ can be efficiently and uniquely determined by solving \eqref{eq_pre1}.
\end{lemma}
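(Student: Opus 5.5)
We must prove the statement of \Cref{lem_pre1}: integers $1\le\alpha_i\le Q$, a prime $p\ge\max\{Q,N+1\}$, and the first $N$ power sums modulo $p$ determine the multiset $\multiset{\alpha_1,\ldots,\alpha_N}$. The plan is to pass to the field $\mathbb{F}_p$ and use Newton's identities to recover the elementary symmetric polynomials. Then factor the resulting polynomial.

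\textbf{Step 1 (Newton's identities).} Let $e_0=1,e_1,\ldots,e_N$ be the elementary symmetric polynomials of $\alpha_1,\ldots,\alpha_N$, and $P_k=\sum_i\alpha_i^k$. For $1\le k\le N$ we have
$$
k e_k=\sum_{j=1}^{k}(-1)^{j-1}e_{k-j}P_j .
$$
Reducing modulo $p$, the values $P_j\equiv a_j$ are known. Since $p\ge N+1>k$, the integer $k$ is invertible in $\mathbb{F}_p$. Hence $e_1,\ldots,e_N$ modulo $p$ are determined recursively, and efficiently.

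\textbf{Step 2 (factoring).} Form
$$
g(X)=X^N-e_1X^{N-1}+\cdots+(-1)^Ne_N=\prod_{i=1}^N(X-\bar\alpha_i)\in\mathbb{F}_p[X],
$$
where $\bar\alpha_i$ denotes $\alpha_i \bmod p$. Because $\mathbb{F}_p[X]$ is a UFD, the multiset of roots of $g$ with multiplicities is exactly $\multiset{\bar\alpha_1,\ldots,\bar\alpha_N}$. To find it efficiently, evaluate $g$ and its iterated derivatives, or repeatedly divide by $X-c$, at each candidate $c\in\{1,\ldots,Q\}$. This costs polynomial time in $N$ and $Q$.

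\textbf{Step 3 (lifting back to integers).} This is the main point to check. The reduction map $\{1,\ldots,Q\}\to\mathbb{F}_p$ must be injective, which requires $Q\le p$. When $p>Q$ this is immediate. At the boundary case $p=Q$, the values $Q$ and $0$ collide. However, $0$ is never a residue of any $\alpha_i\in[1,Q-1]$, so a root $0$ of $g$ is interpreted as $Q$. Thus the map $\{1,\ldots,Q\}\to\mathbb{F}_p$ is a bijection onto its image, and the multiset of residues lifts uniquely to the integer multiset.

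One subtlety remains. Root multiplicities in $\mathbb{F}_p$ could exceed $p$ when $N\ge p$, but $N<p$ here. In any case, multiplicities are read off by repeated division rather than via derivatives, so no issue arises.
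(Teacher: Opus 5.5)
Your proof is correct and follows essentially the standard argument behind the cited result (the paper itself gives no proof, only the reference to Dolecek--Anantharam): reduce Newton's identities modulo $p$, using $p>N$ to invert each $k\le N$, to recover $e_1,\ldots,e_N$; then lift the root multiset of $\prod_i(X-\bar\alpha_i)\in\mathbb{F}_p[X]$ uniquely back to the integers. Your Step~3 is slightly overcautious: any $Q\le p$ consecutive integers are pairwise incongruent modulo $p$, so $\{1,\ldots,Q\}\to\mathbb{F}_p$ is injective without a separate boundary case, and nothing actually ``collides''.
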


Assume that $n>\ell$ and $2\le t\le\min\mathset{\ell-2,\frac{\ell}{2}}$. Given a sequence $\bmx\in\Sigma_q^n$, let $m=m(\tilde{\bmx})$ be the total number of $(\ell,t,\le t-1)$-check patterns in $\tilde{\bmx}$. For any $s\le t-1$, an $(\ell,t,s)$-check pattern has length at least $\ell-1-s$ and can overlap with another $(\ell,t,\le t-1)$-check pattern in at most $2t-s-2\le \ell-2-s$ positions. Hence no $(\ell,t,\le t-1)$-check pattern can be contained in another one. Consequently, we have $m\le n+\ell+t+1$.

If $m=0$, set $f_t(\tilde{\bmx},i)=0$ for all $-\ell+1\le i\le n+\ell$.
If $m\ge 1$, for $1\le j\le m$, let $\tilde{\bmx}_{[i_{j,1},i_{j,2}]}$ be the $j$-th $(\ell,t,\le t-1)$-check pattern in $\tilde{\bmx}$. Since no $(\ell,t,\le t-1)$-check pattern is contained in another, we have $i_{j,1}<i_{j+1,1}$ and $i_{j,2}<i_{j+1,2}$ for every $1\le j< m$. This implies that $[i_{j,1},i_{j,2}]\setminus[i_{j-1,1},i_{j-1,2}]\ne\emptyset$ for all $j$.

For all $-\ell+1\le i\le n+\ell$, set
\begin{equation*}
    f_t(\tilde{\bmx},i)=
    \begin{cases}
        j,\mbox{ if }i\in[i_{j,1},i_{j,2}]\setminus[i_{j-1,1},i_{j-1,2}],\\
        0,\mbox{ else}.
    \end{cases}
\end{equation*}
Define $F_{r}\parenv{\tilde{\bmx}}\triangleq\sum_{i=-\ell+1}^{n+\ell}f_t(\tilde{\bmx},i)^r$ for all $r\in[t]$.

\begin{theorem}\label{thm_multidel1}
   Let $n>\ell\ge4$ and $2\le t\le\frac{\ell}{2}$ (so $t\le\ell-2$ holds). Let $p\ge\max\mathset{n+\ell+t,t+1}$ be the smallest prime. For given integers $0\le a_1,\ldots,a_t<p$, set $\bma=\parenv{a_1,\ldots,a_t}$ and define the code
    \begin{equation*}
        \cC(n;\bma)=\mathset{\bmx\in\Sigma_q^n:F_{r}\parenv{\tilde{\bmx}}\equiv a_r\pmod{p},\forall r\in[t]}.
    \end{equation*}
    This code can correct $t$ deletions in $\cR_{\ell}(\tilde{\bmx})$. Moreover, there exists some $\bma$ such that $\rho\parenv{\cC(n;\bma)}\le t\log\parenv{p}=t\log(n)+O(1)$.
\end{theorem}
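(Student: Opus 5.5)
The decoder first inserts the minimum number $s$ of $\ell$-mers into $\cR^\prime$ to obtain a consistent $\cR^{\prime\prime}$, as in \Cref{cor_arbitrarydel}(i); $s$ is known to the decoder. If $s=t$, then by \Cref{cor_arbitrarydel}(ii) no symbols were deleted, so $\tilde{\bmx}=\cR_{\ell}^{-1}(\cR^{\prime\prime})$. Otherwise $s<t\le\min\mathset{\ell-2,\ell/2}$, and we set $\bmy=\cR_{\ell}^{-1}(\cR^{\prime\prime})$. By \Cref{cor_tcheckpattern}, $\tilde{\bmx}$ and $\bmy$ have the same number $m$ of $(\ell,t,\le t-1)$-check patterns. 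The $j$-th patterns correspond, have the same minimum period $T_j$, and differ only in that the one in $\bmy$ may be shorter by $d_jT_j$ symbols with $d_j\ge0$. Moreover $\sum_j d_jT_j=t-s$, since \Cref{cor_arbitrarydel}(ii) gives total deletion length $t-s$. The decoder can compute all patterns of $\bmy$ and their periods, so it suffices to recover the vector $(d_j)$, or equivalently the multiset of deleted positions measured in pattern labels.

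Next I would compare the syndromes. The label function $f_t$ gives label $j$ to the ``new'' part $[i_{j,1},i_{j,2}]\setminus[i_{j-1,1},i_{j-1,2}]$ of the $j$-th pattern. The key step is to show that deleting complete periods from pattern $j$ removes exactly $d_jT_j$ coordinates carrying label $j$ and leaves every other coordinate's label unchanged. By the overlap bound (\Cref{cor_maxoverlap} and \Cref{lem_periodoverlap}), pattern $j$ overlaps pattern $j-1$ in at most $\ell-2-s'$ positions. As in the proofs of \Cref{lem_scheckpattern} and \Cref{cor_tcheckpattern}, the deletion can be realized inside pattern $j$ away from its overlaps with its neighbors. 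I expect the pattern lengths, $\ge\ell-1-s$ after deletion, to leave at least $d_jT_j$ non-overlapped positions. Those positions lie in the new part and carry label $j$, so the deletion just shortens the run of label-$j$ coordinates. Then
\begin{equation*}
F_r(\tilde{\bmx})-F_r(\bmy)=\sum_{j=1}^m d_jT_j\, j^r,\qquad r\in[t].
\end{equation*}
This is the $r$-th power sum of the multiset in which each label $j$ appears $d_jT_j$ times, a multiset of $N=t-s\le t$ elements from $[1,m]$ with $m\le n+\ell+t+1$.

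To apply \Cref{lem_pre1}, I would pad the multiset with $s$ copies of a fixed known value so that it has exactly $t$ elements. Alternatively I would use only the first $N$ congruences, which \Cref{lem_pre1} permits since $N$ is known. With $p\ge\max\mathset{m,N+1}$, taking the prime as in the statement and if necessary adjusting the bound $m\le p$ by one using $m\le n+\ell+t$, the multiset is uniquely determined. From the multiplicity of label $j$ we read off $d_jT_j$, hence $d_j$. The decoder then extends the $j$-th check pattern of $\bmy$ by $d_jT_j$ symbols periodically, which is well defined by the second fact in the proof of \Cref{lem_scheckpattern}, and recovers $\tilde{\bmx}$.

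The redundancy bound follows by pigeonhole over the $p^t$ choices of $\bma$, together with Bertrand's postulate giving $p\le 2(n+\ell+t)$. The main obstacle is the labeling step: proving rigorously that the deleted symbols can always be placed in the label-$j$ region, so that no other label's count changes and, in particular, the first and last coordinates of the new part keep their labels. I would first try to verify this with the explicit overlap estimate $2t-s-2\le\ell-2-s$ together with $|\bmp^\prime|\ge\ell-1-s$. If the count $m\le n+\ell+t$ is off by one at the boundary, I would also tighten it.
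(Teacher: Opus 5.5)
Your proposal follows the paper's proof: the same decoder, the same identity $F_r(\tilde{\bmx})-F_r(\bmy)=\sum_i j_i^r$, \Cref{lem_pre1} applied to the first $t-s$ power sums, and a pigeonhole count for the redundancy. The paper asserts that identity without argument, so the step you flag is exactly the one it leaves open. However, the mechanism you expect to use does not work in general. That mechanism is a window of $h=d_jT_j$ positions of the $j$-th pattern $\bmp_j=\tilde{\bmx}_{[i_{j,1},i_{j,2}]}$ avoiding its overlaps with both neighbours, and such a window need not exist. Take $\ell=6$, $t=3$, and let $\tilde{\bmx}$ contain $0100101010010$ at positions $1,\ldots,13$. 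Then $[4,10]$ is a maximal alternating check pattern. The maximal period-$3$ check patterns containing $[1,6]$ and $[8,13]$ are its neighbours, and they overlap it in $[4,6]$ and $[8,10]$. Deleting the two $6$-mers ending at positions $9$ and $10$ leaves a consistent pair, so $s_1=0$ and one period is removed ($h=2$), leaving length $5\ge\ell-1-s$. Yet only position $7$ lies outside both overlaps. Your estimate $2t-s-2\le\ell-2-s$ bounds each overlap separately, not their sum.

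One side suffices, though. The label-$k$ block is $[\max(i_{k,1},i_{k-1,2}+1),i_{k,2}]$. It runs up to the right end of $\bmp_k$, so the overlap with $\bmp_{k+1}$ is irrelevant. The check patterns of $\bmy$ are determined by $\bmy$ alone, so you may locate each $\bmp'_k$ ($k\ne j$) using its own realization of the deletion inside the interval $\bmp_j\setminus\bmp_k$. By your estimate this interval has at least $h+1$ positions. As in the proofs of \Cref{lem_scheckpattern} and \Cref{cor_tcheckpattern}, this places $\bmp'_k$ as follows: at the positions of $\bmp_k$ for $k<j$, at those positions shifted left by $h$ for $k>j$, and $\bmp'_j$ at $[i_{j,1},i_{j,2}-h]$. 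Hence every block other than block $j$ keeps its size, and block $j$, which has at least $h+1$ positions, loses exactly $h$. Applying this pattern by pattern gives your displayed identity.

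Your off-by-one worry is also justified: the paper states $m\le n+\ell+t+1$ but takes $p\ge n+\ell+t$. It is settled as you suggest. Suppose all start positions in $[-\ell+1,n+t+1]$ were used. Then the patterns starting at $n+t$ and $n+t+1$ would both have length exactly $\ell-t$, hence both be runs, and they would overlap in $\ell-t-1\ge1$ positions, which is impossible. So $m\le n+\ell+t\le p$.
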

\begin{IEEEproof}
Suppose that $\tilde{\bmx}=\bmu\bmx\bmv$ is transmitted, and that $\cR^\prime$ is the received vector of $\ell$-mers, where $\bmx\in\cC(n;\bma)$ and $\cR^\prime$ is a subvector of $\cR_{\ell}(\tilde{\bmx})$ of length $n+\ell+1-t$. We aim to recover $\tilde{\bmx}$ from $\cR^\prime$.

By \Cref{cor_arbitrarydel}, we can insert $s\in\{0,1,\ldots,t\}$ $\ell$-mers into $\cR_{\ell}^{\prime}$ to obtain a consistent vector of $\ell$-mers $\cR^{\prime\prime}$. The value of $s$ is known to us. Let $\bmy=\cR_{\ell}^{-1}\parenv{\cR^{\prime\prime}}$. If $s=t$, then \Cref{cor_arbitrarydel} (ii) gives $\tilde{\bmx}=\bmy$.

Now suppose that $s<t$. By the same corollary, the sequence $\bmy$ is obtained from $\tilde{\bmx}$ by deleting some complete minimum periods from certain $(\ell,t,\le t-1)$-check patterns, and the total number of deleted symbols is $t-s$. In particular, $m(\tilde{\bmx})\ge1$. By \Cref{cor_tcheckpattern}, we have $m(\tilde{\bmx})=m(\bmy)$. In addition, the deletions preserve both the order and the minimum period of every $(\ell,t,\le t-1)$-check pattern.

For $1\le i\le t-s$, assume that the $i$-th deletion in $\tilde{\bmx}$ occurred in the $j_i$-th $(\ell,t,\le t-1)$-check pattern, where $1\le j_1\le\cdots\le j_{t-s}\le m(\bmy)$. Then it holds that $\sum_{i=1}^{t-s}j_i^r=F_{r}\parenv{\tilde{\bmx}}-F_{r}(\bmy)\equiv a_r-F_{r}(\bmy)\pmod{p}$ for all $1\le r\le t$. Let $\delta_r\in\mathset{0,1,\ldots,p-1}$ be the unique integer such that $\delta_r\equiv a_r-F_{r}(\bmy)\pmod{p}$. Then the unknown indices $j_1,\ldots,j_{t-s}$ satisfy the system
\begin{equation}\label{eq_pre2}
    \left\{
    \begin{array}{c}
        j_1+\cdots+j_{t-s}\equiv \delta_1\pmod{p},\\
        j_1^2+\cdots+j_{t-s}^2\equiv \delta_2\pmod{p},\\
        \vdots\\
        j_1^{t-s}+\cdots+j_{t-s}^{t-s}\equiv \delta_{t-s}\pmod{p}.
    \end{array}
    \right.
\end{equation}
By \Cref{lem_pre1}, the multiset $\multiset{j_1,\ldots,j_{t-s}}$ can be efficiently and uniquely determined by solving \eqref{eq_pre2}. Since $j_1\le\cdots\le j_{t-s}$, the tuple $\parenv{j_1,\ldots,j_{t-s}}$ is uniquely determined. In other words, we have identified which $(\ell,t,\le t-1)$-check patterns suffered deletions and exactly how many deletions each of them suffered. Now $\tilde{\bmx}$ can be recovered by extending the corresponding $(\ell,t,\le t-1)$-check patterns in $\bmy$ according to their minimum periods. This completes the proof.
\end{IEEEproof}

The construction above requires $\ell\ge 2t$. We next present a construction for the case $\ell=2t-1$. Let $\cR^\prime$ and $\cR^{\prime\prime}$ be as in the proof of \Cref{thm_multidel1}. In particular, $\cR^{\prime\prime}$ is obtained from $\cR^\prime$ by inserting $s$ $\ell$-mers, where $0\le s\le t$. In the previous proof, all cases $0\le s\le t-1$ were treated uniformly. However, when $s=0$, by \Cref{lem_scheckpattern}, it is sufficient to require that $t\le\frac{\ell+1}{2}$. For the remaining cases $1\le s<t$, we use the following lemma, where an $(\ell,t,[1,t-1])$-check pattern means an $(\ell,t,s^\prime)$-check pattern for some $s^\prime\in[1,t-1]$.
\begin{lemma}\label{lem_tcheckpattern}
    Let $\bmx\in\Sigma_q^n$ and $1\le t\le\min\mathset{\ell-2,\frac{\ell+1}{2}}$. Suppose that $\cR^\prime$ is obtained from $\cR_{\ell}(\tilde{\bmx})$ by deleting $t$ arbitrary $\ell$-mers, and let $\cR^{\prime\prime}$ be the consistent vector of $\ell$-mers obtained by inserting a minimum number $s$ of $\ell$-mers into $\cR^\prime$, where $1\le s<t$. Let $\bmy=\cR_{\ell}^{-1}\parenv{\cR^{\prime\prime}}$. Then the following statements hold.
    \begin{enumerate}[$(i)$]
        \item The sequences $\tilde{\bmx}$ and $\bmy$ contain the same number of $(\ell,t,[1,t-1])$-check patterns.
        \item Let $\bmp^\prime$ be the $i$-th $(\ell,t,[1,t-1])$-check pattern in $\bmy$, and $\bmp$ be the $i$-th $(\ell,t,[1,t-1])$-check pattern in $\tilde{\bmx}$. Then $\bmp^\prime$ and $\bmp$ have the same minimum period. If $\bmp$ is not an $\parenv{\ell,t,s}$-check pattern, it must be that $\bmp^\prime=\bmp$. If $\bmp$ is an $\parenv{\ell,t,s}$-check pattern, then either $\bmp^\prime=\bmp$, or $\bmp^\prime$ is obtained from $\bmp$ by deleting some complete minimum periods.
    \end{enumerate}
\end{lemma}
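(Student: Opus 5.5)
This follows the proof of \Cref{cor_tcheckpattern}, with $[1,t-1]$ in place of $\le t-1$ and the weaker hypothesis $t\le(\ell+1)/2$ offset by the assumption $s\ge1$.

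\textbf{Step 1: the $(\ell,t,s)$-check patterns.} Since $1\le s$, we have $t\le\frac{\ell+1}{2}\le\frac{\ell+1+s}{2}$. Hence \Cref{lem_scheckpattern} applies to the actual $s$. It gives an order-preserving bijection between the $(\ell,t,s)$-check patterns of $\tilde{\bmx}$ and of $\bmy$. Corresponding patterns have the same minimum period, and each is either unchanged or shortened by complete minimum periods. By \Cref{cor_arbitrarydel} (ii)--(v), every deletion lies in some $(\ell,t,s)$-check pattern $\bmq$ of minimum period $T\le t-s$, and at least $\ell-1-s$ symbols of $\bmq$ survive.

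\textbf{Step 2: other $(\ell,t,[1,t-1])$-patterns are untouched.} Let $\bmp$ be an $(\ell,t,r)$-check pattern with $1\le r\le t-1$ that is not an $(\ell,t,s)$-check pattern, and let $U\le t-r$ be its minimum period. If $\bmp\ne\bmq$, \Cref{lem_periodoverlap} gives
\[
\abs{\bmp\cap\bmq}\le U+T-2\le 2t-r-s-2\le \ell-1-r-s .
\]
This is where $r\ge1$ replaces the slack that $t\le\ell/2$ supplied before. The bound yields two consequences:
\begin{itemize}
\item $\abs{\bmp\cap\bmq}\le \ell-2-s$, because $r\ge1$.
\item $\abs{\bmp\cap\bmq}<\ell-1-r\le\abs{\bmp}$, because $s\ge1$.
\end{itemize}
So neither pattern contains the other, and their intersection is an interval at one end of $\bmq$. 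Let $H$ be the number of symbols deleted from $\bmq$, so $T\mid H$. We get $H\le\abs{\bmq}-(\ell-1-s)\le\abs{\bmq}-\abs{\bmp\cap\bmq}-1$. Thus the $H$ deletions can be realized inside $\bmq\setminus\bmp$, using the shift-invariance fact from the proof of \Cref{lem_scheckpattern}. The boundary symbol of $\bmp$ facing $\bmq$ is replaced by a symbol in the same $T$-phase, hence by the same symbol. Therefore $\bmp$ keeps its content, maximality and minimum period. Applying this burst by burst gives that $\bmp$ appears unchanged in $\bmy$.

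\textbf{Step 3: no new patterns are created.} Conversely, take an $(\ell,t,[1,t-1])$-check pattern of $\bmy$ that does not correspond to an $(\ell,t,s)$-pattern. Reinsert the deleted complete periods into the corresponding $(\ell,t,s)$-check patterns of $\bmy$, choosing insertion points outside the overlap. This is possible because the same overlap bound holds in $\bmy$ and those shortened patterns have length at least $\ell-1-s$. It shows the pattern comes from one in $\tilde{\bmx}$.

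\textbf{Step 4: combine.} Merging Steps 1--3 gives an order-preserving bijection between the $(\ell,t,[1,t-1])$-check patterns of $\tilde{\bmx}$ and of $\bmy$. Corresponding patterns have equal minimum periods, which yields (i) and (ii).

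\textbf{Main obstacle.} A single maximal periodic substring might be an $(\ell,t,r)$-pattern for several $r$; whether it is an $(\ell,t,s)$-pattern depends only on its own period and length. I expect the delicate part to be checking, in Step 2 and its converse, that shortening an $(\ell,t,s)$-pattern never turns it into a substring that is no longer an $(\ell,t,[1,t-1])$-pattern. This holds because its length stays at least $\ell-1-s$ and its period stays $T\le t-s$.
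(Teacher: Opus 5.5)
Your proposal is correct and follows the paper's route: the paper proves this lemma by saying the argument of \Cref{cor_tcheckpattern} carries over verbatim once one checks the overlap between an $(\ell,t,[1,t-1])$-check pattern and a distinct $(\ell,t,s)$-check pattern, and you do exactly that. Your bound $\abs{\bmp\cap\bmq}\le 2t-r-s-2\le \ell-1-r-s$ is the needed check, with $r\ge1$ giving $\le \ell-2-s$ and $s\ge1$ giving $<\ell-1-r\le\abs{\bmp}$; it is more precise than the paper's one-line version, which uses $U\le t-1$ and states $\ell-3-s$, a bound that does not follow from $t\le(\ell+1)/2$.
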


The proof is identical to that of \Cref{cor_tcheckpattern}, except for the following observation. When $1\le s<t$, an $(\ell,t,[1,t-1])$-check pattern and a distinct $(\ell,t,s)$-check pattern can overlap in at most $t-1+t-s-2\le \ell-3-s$ positions (when the two check patterns are distinct). Thus, the weaker condition $t\le\frac{\ell+1}{2}$ is sufficient.

We now describe the construction for the case $\ell=2t-1$. The main idea is to handle the cases $s=0$ and $1\le s<t$ separately. For a given $\bmx\in\Sigma_q^n$, let $m_1=m_1(\tilde{\bmx})$ be the total number of $(\ell,t,0)$-check patterns in $\tilde{\bmx}$, and let $m_2=m_2(\tilde{\bmx})$ be the total number of $(\ell,t,[1,t-1])$-check patterns in $\tilde{\bmx}$. Since the length of an $(\ell,t,0)$-check pattern is at least $\ell-1$ and two such check patterns overlap in at most $\ell-2$ positions, it holds that $m_1\le n+\ell+2$. Similarly, we have $m_2\le n+\ell+t+1$.

For the case $s=0$, we introduce $t$ functions $G_{1,1},\ldots,G_{1,t}$. If $m_1=0$, set $g_1(\tilde{\bmx},i)=0$ for all $-\ell+1\le i\le n+\ell$.
If $m_1\ge 1$, for $1\le j\le m_1$, let $\tilde{\bmx}_{[i_{j,1},i_{j,2}]}$ be the $j$-th $(\ell,t,0)$-check pattern in $\tilde{\bmx}$, where $i_{j,1}<i_{j+1,1}$ and $i_{j,2}<i_{j+1,2}$ for $1\le j< m_1$. For all $-\ell+1\le i\le n+\ell$, set
\begin{equation*}
    g_1(\tilde{\bmx},i)=
    \begin{cases}
        j,\mbox{ if }i\in[i_{j,1},i_{j,2}]\setminus[i_{j-1,1},i_{j-1,2}],\\
        0,\mbox{ else}.
    \end{cases}
\end{equation*}
Define $G_{1,r}\parenv{\tilde{\bmx}}\triangleq\sum_{i=-\ell+1}^{n+\ell}g_1(\tilde{\bmx},i)^r$ for all $r\in[t]$.

For the case $1\le s<t$, we need $t-1$ functions $G_{2,1},\ldots,G_{2,t-1}$. If $m_2=0$, define $g_2(\tilde{\bmx},i)=0$ for all $-\ell+1\le i\le n+\ell$.
If $m_2\ge 1$, for $1\le j\le m_2$, let $\tilde{\bmx}_{[i_{j,1},i_{j,2}]}$ be the $j$-th $(\ell,t,[1,t-1])$-check pattern in $\tilde{\bmx}$, where $i_{j,1}<i_{j+1,1}$ and $i_{j,2}<i_{j+1,2}$ for $1\le j< m_2$. For all $-\ell+1\le i\le n+\ell$, set
\begin{equation*}
    g_2(\tilde{\bmx},i)=
    \begin{cases}
        j,\mbox{ if }i\in[i_{j,1},i_{j,2}]\setminus[i_{j-1,1},i_{j-1,2}],\\
        0,\mbox{ else}.
    \end{cases}
\end{equation*}
Define $G_{2,r}\parenv{\tilde{\bmx}}\triangleq\sum_{i=-\ell+1}^{n+\ell}g_2(\tilde{\bmx},i)^r$ for all $r\in[t-1]$.

\begin{theorem}\label{thm_multidel2}
   Let $n>t\ge3$ and $\ell=2t-1$ (so $t\le\ell-2$). Let $p_1$ be the smallest prime greater than $n+\ell+1$ and $p_2$ be the smallest prime greater than $n+\ell+t+1$. For given integers $0\le a_1,\ldots,a_{t}<p_1$ and $0\le b_1,\ldots,b_{t-1}<p_2$, set $\bma=\parenv{a_1,\ldots,a_t}$, $\bmb=\parenv{b_1,\ldots,b_{t-1}}$. Define the code
    \begin{equation*}
        \cC(n;\bma,\bmb)=\mathset{\bmx\in\Sigma_q^n:
        \begin{array}{c}
             G_{1,r}\parenv{\tilde{\bmx}}\equiv a_r\pmod{p_1},\forall r\in[t],\\
             G_{2,r}(\tilde{\bmx})\equiv b_r\pmod{p_2}, \forall r\in[t-1]
        \end{array}
        }.
    \end{equation*}
    This code can correct $t$ deletions in $\cR_{\ell}(\tilde{\bmx})$. There are choices of $\bma$ and $\bmb$ such that $\rho\parenv{\cC(n;\bma,\bmb)}\le t\log\parenv{p_1}+(t-1)\log(p_2)=(2t-1)\log(n)+O(1)$.
\end{theorem}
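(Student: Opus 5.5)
We follow the decoding template of the proof of \Cref{thm_multidel1}, splitting into cases according to the value of $s$, which the decoder learns from $\cR^\prime$ via \Cref{cor_arbitrarydel}(i). Suppose $\tilde{\bmx}=\bmu\bmx\bmv$ is sent with $\bmx\in\cC(n;\bma,\bmb)$, and $\cR^\prime$ is received. The decoder inserts the minimum number $s$ of $\ell$-mers to obtain $\cR^{\prime\prime}$ and sets $\bmy=\cR_\ell^{-1}(\cR^{\prime\prime})$. If $s=t$, then \Cref{cor_arbitrarydel}(ii) gives $\bmy=\tilde{\bmx}$ directly. For $\ell=2t-1$ with $t\ge3$ we have $t\le \ell-2$ and $t=(\ell+1)/2$. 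This is exactly the hypothesis needed both for \Cref{lem_scheckpattern} at $s=0$ (since then $(\ell+1+s)/2=(\ell+1)/2$) and for \Cref{lem_tcheckpattern}.

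\emph{Case $s=0$.} Apply \Cref{lem_scheckpattern} with $s=0$. The $(\ell,t,0)$-check patterns of $\tilde{\bmx}$ and $\bmy$ are in order-preserving bijection with equal minimum periods, and $\bmy$ arises by deleting complete minimum periods, $t$ symbols in total. Two distinct such patterns overlap in at most $\ell-2$ positions while each has length at least $\ell-1$, so neither contains the other. Hence the sets $[i_{j,1},i_{j,2}]\setminus[i_{j-1,1},i_{j-1,2}]$ are nonempty and consist of the positions of the $j$-th pattern lying to the right of the $(j-1)$-th.

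The key step is that each deleted symbol lowers $G_{1,r}$ by exactly $j^r$, where $j$ is the index of the affected pattern. To see this, the deletion is realized (as in the proof of \Cref{lem_scheckpattern}) outside the overlap with the neighbouring patterns, so it occurs in the ``private'' part labelled $j$. This requires the private part to have length at least the burst length. That holds because the pattern minus its overlaps still exceeds the deleted amount, using \Cref{cor_arbitrarydel}(v) ($|\bmp^\prime|\ge \ell-1$) together with the overlap bound $\ell-2$. Note that each pattern has two potential overlap sides: the left overlap is excluded from label $j$, while the right overlap belongs to label $j$. So it suffices to place the deletion inside the pattern but outside the left overlap. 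This is possible since the left overlap has length at most $\ell-2<|\bmp^\prime|$.

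With $j_1\le\dots\le j_t$ the pattern indices of the deleted symbols, we get $\sum_i j_i^r\equiv a_r-G_{1,r}(\bmy)\pmod{p_1}$ for $r\in[t]$. Since $m_1\le n+\ell+2<p_1$ and $p_1>t$, \Cref{lem_pre1} recovers the multiset. We then extend the corresponding patterns of $\bmy$ by the determined number of symbols according to their minimum periods.

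\emph{Case $1\le s\le t-1$.} Here only $t-s\le t-1$ symbols are deleted. Use \Cref{lem_tcheckpattern}: the $(\ell,t,[1,t-1])$-check patterns correspond in order with equal minimum periods, and only $(\ell,t,s)$-patterns are shortened. The same private-part argument shows that $G_{2,r}(\tilde{\bmx})-G_{2,r}(\bmy)=\sum_{i=1}^{t-s}j_i^r$. Two facts are needed for it:
\begin{itemize}
\item the overlap bound $\ell-3-s$ noted after \Cref{lem_tcheckpattern};
\item the length bound $|\bmp^\prime|\ge\ell-1-s$.
\end{itemize}
Taking the first $t-s\le t-1$ congruences modulo $p_2>m_2$, with $p_2>t$, \Cref{lem_pre1} determines $\multiset{j_1,\dots,j_{t-s}}$, and we reconstruct $\tilde{\bmx}$ as before.

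\emph{Redundancy and main obstacle.} The redundancy follows by pigeonhole over the $p_1^tp_2^{t-1}$ choices of $(\bma,\bmb)$, and Bertrand's postulate gives $p_1,p_2=O(n)$. The main obstacle is making rigorous that the syndrome difference equals exactly $\sum j_i^r$. One must show that the $f$-labels of all positions outside the shortened region are unchanged and that the shortened pattern loses precisely its deleted count from its private part, including when several bursts hit the same pattern. This is handled by realizing all deletions within a pattern consecutively at its right end, beyond its left overlap, using the $T$-periodicity.
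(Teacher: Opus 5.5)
Your proposal is correct and follows the paper's proof. You split on $s$, use \Cref{lem_scheckpattern} at $s=0$ (where $t=(\ell+1)/2$ is exactly enough) and \Cref{lem_tcheckpattern} for $1\le s<t$, and then read off the affected pattern indices from the power-sum syndromes via \Cref{lem_pre1}. Your private-part justification of $G_{1,r}(\tilde{\bmx})-G_{1,r}(\bmy)=\sum_i j_i^r$ (the paper only asserts it) and your use of \Cref{lem_scheckpattern} rather than \Cref{cor_tcheckpattern} at $s=0$ (whose hypothesis $t\le\ell/2$ fails here) are genuine tightenings.

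One correction: for $\ell=2t-1$, the overlap bound you quote from after \Cref{lem_tcheckpattern} is really $(t-1)+(t-s)-2=\ell-2-s$, not $\ell-3-s$. For example, with $t=4$, $s=2$, patterns of periods $3$ and $2$ can overlap in $3$ positions. Your argument is unaffected, since $\ell-2-s<\ell-1-s\le\abs{\bmp^\prime}$.
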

\begin{IEEEproof}
Suppose that $\tilde{\bmx}=\bmu\bmx\bmv$ is transmitted, and that $\cR^\prime$ is the received vector of $\ell$-mers, where $\bmx\in\cC(n;\bma,\bmb)$ and $\cR^\prime$ is a subvector of $\cR_{\ell}(\tilde{\bmx})$ of length $n+\ell+1-t$. We aim to recover $\tilde{\bmx}$ from $\cR^\prime$.

By \Cref{cor_arbitrarydel}, we can insert $s\in\{0,1,\ldots,t\}$ $\ell$-mers into $\cR^{\prime}$ to obtain a consistent vector of $\ell$-mers $\cR^{\prime\prime}$. The value of $s$ is known to us. Let $\bmy=\cR_{\ell}^{-1}\parenv{\cR^{\prime\prime}}$. If $s=t$, by \Cref{cor_arbitrarydel} (ii), we have $\tilde{\bmx}=\bmy$.

Consider first the case $s=0$. By \Cref{cor_arbitrarydel}, the sequence $\bmy$ is obtained from $\tilde{\bmx}$ by deleting a total of $t$ symbols from $(\ell,t,0)$-check patterns in $\tilde{\bmx}$. In particular, $m_1(\tilde{\bmx})\ge1$. By \Cref{lem_scheckpattern}, we have $m_1(\tilde{\bmx})=m_1(\bmy)$. In addition, \Cref{cor_tcheckpattern} ensures that the deletions in $\tilde{\bmx}$ do not change the order of $(\ell,t,0)$-check patterns, nor the minimum period of any such check pattern.

For $1\le i\le t$, assume that the $i$-th deletion in $\tilde{\bmx}$ occurred in the $j_i$-th $(\ell,t,0)$-check pattern, where $1\le j_1\le\cdots\le j_{t}\le m_1(\bmy)$. Then we have $\sum_{i=1}^{t}j_i^r=G_{1,r}\parenv{\tilde{\bmx}}-G_{1,r}(\bmy)\equiv a_r-G_{1,r}(\bmy)\pmod{p_1}$ for all $1\le r\le t$. Let $\delta_r\in\mathset{0,1,\ldots,p-1}$ be the unique integer such that $\delta_r\equiv a_r-G_{1,r}(\bmy)\pmod{p_1}$. Then it follows that
\begin{equation}\label{eq_pre3}
    \left\{
    \begin{array}{c}
        j_1+\cdots+j_{t}\equiv \delta_1\pmod{p_1},\\
        j_1^2+\cdots+j_{t}^2\equiv \delta_2\pmod{p_1},\\
        \vdots\\
        j_1^{t-s}+\cdots+j_{t}^{t}\equiv \delta_{t}\pmod{p_1}.
    \end{array}
    \right.
\end{equation}
By \Cref{lem_pre1}, the multiset $\multiset{j_1,\ldots,j_{t}}$ can be uniquely determined by \eqref{eq_pre3}. Since $j_1\le\cdots\le j_{t}$, the tuple $\parenv{j_1,\ldots,j_{t}}$ is uniquely determined. In other words, we have identified which $(\ell,t,0)$-check patterns suffered deletions and exactly how many deletions each of them suffered. Now $\tilde{\bmx}$ can be recovered by extending the corresponding $(\ell,t,0)$-check patterns in $\bmy$ according to their minimum periods.

It remains to consider the case $1\le s<t$. By \Cref{cor_arbitrarydel}, the sequence $\bmy$ is obtained from $\tilde{\bmx}$ by deleting a total of $t-s$ symbols from $(\ell,t,[1,t-1])$-check patterns in $\tilde{\bmx}$. In particular, $m_2(\tilde{\bmx})\ge1$. By \Cref{lem_tcheckpattern}, we have $m_2(\tilde{\bmx})=m_2(\bmy)$. Moreover, this lemma guarantees that the deletions preserve both the order and the minimum period of every $(\ell,t,[1,t-1])$-check pattern.

For $1\le i\le t-s$, assume that the $i$-th deletion in $\tilde{\bmx}$ occurred in the $j_i$-th $(\ell,t,[1,t-1])$-check pattern, where $1\le j_1\le\cdots\le j_{t-s}\le m_2(\bmy)$. Then it holds that $\sum_{i=1}^{t-s}j_i^r=G_{2,r}\parenv{\tilde{\bmx}}-G_{2,r}(\bmy)\equiv b_r-G_{2,r}(\bmy)\pmod{p_2}$ for all $1\le r< t$. Let $\delta_r\in\mathset{0,1,\ldots,p-1}$ be the unique integer such that $\delta_r\equiv b_r-G_{2,r}(\bmy)\pmod{p_2}$. Then we have
\begin{equation}\label{eq_pre4}
    \left\{
    \begin{array}{c}
        j_1+\cdots+j_{t-s}\equiv \delta_1\pmod{p_2},\\
        j_1^2+\cdots+j_{t-s}^2\equiv \delta_2\pmod{p_2},\\
        \vdots\\
        j_1^{t-s}+\cdots+j_{t-s}^{t-s}\equiv \delta_{t-s}\pmod{p_2}.
    \end{array}
    \right.
\end{equation}
By \Cref{lem_pre1}, the multiset $\multiset{j_1,\ldots,j_{t-s}}$ can be uniquely determined by \eqref{eq_pre4}. Since $j_1\le\cdots\le j_{t-s}$, the tuple $\parenv{j_1,\ldots,j_{t-s}}$ is uniquely determined. In other words, we have identified which $(\ell,t,[1,t-1])$-check patterns suffered deletions and the number of deletions each of these $(\ell,t,[1,t-1])$-check patterns suffered. Now $\tilde{\bmx}$ can be recovered by extending the corresponding $(\ell,t,[1,t-1])$-check patterns in $\bmy$ according to their minimum periods. This completes the proof.
\end{IEEEproof}

\section{Sporadic Constructions}\label{sec_sporadic}
In this section, we focus on $\ell$-read $t$-deletion correcting codes for the parameter pairs $(\ell,t)\in\mathset{(2,2),(3,2),(3,3)}$. More specifically, for each $\ell\in\{2,3\}$, we construct binary $\ell$-read $2$-deletion correcting codes with redundancy $2\log(n)+O(1)$. We also construct a non-binary $2$-read $2$-deletion correcting code with redundancy $4\log(n)+O(1)$, a non-binary $3$-read $2$-deletion correcting code with redundancy $3\log(n)+O(1)$, a binary $3$-read $3$-deletion correcting code with redundancy $5\log(n)+O(1)$, and a non-binary $3$-read $3$-deletion correcting code with redundancy $7\log(n)+O(1)$.

A comparison between these sporadic constructions and the general constructions of the previous section reveals that the condition $\ell\ge 2t$ is not necessary for achieving redundancy $t\log n+O(1)$. It also shows that the joint condition $\ell=2t-1$ with $t\ge 3$ is not necessary for achieving redundancy $(2t-1)\log n+O(1)$.

\subsection{The Case $(\ell,t)=(2,2)$}\label{subsec_l2t2}
This subsection focuses on the case $(\ell,t)=(2,2)$. Chee and Vu previously constructed a binary $2$-read $2$-deletion correcting code with redundancy $4\log n+O(1)$ \cite[Theorem 9]{YeowVan2020ISIT}. We first reduce this redundancy to $2\log n+O(1)$, and then construct a $q$-ary $2$-read $2$-deletion correcting code with redundancy $4\log n+O(1)$ for any $q>2$.

Let $\bmx\in\Sigma_q^n$ and $\cR_{2}\parenv{\tilde{\bmx}}=\parenv{\bms_0,\bms_1,\ldots,\bms_{n+2}}$, where $\bms_i=\tilde{\bmx}_{[i-1,i]}$ for all $0\le i\le n+2$. Suppose that $\cR^\prime$ is obtained from $\cR_{2}\parenv{\tilde{\bmx}}$ by deleting two $2$-mers $\bms_i$ and $\bms_j$, where $1\le i<j\le n+1$ (recall that $\bms_0$ and $\bms_{n+2}$ are fixed by assumption). Denote $\cR^\prime=\parenv{\bms^\prime_0,\bms^\prime_1,\ldots,\bms^\prime_n}$.

First, assume that $j\ge i+2$. Equivalently, the $\ell$-mer $\bms_{i+1}$ is not deleted. Then the only possibly inconsistent pairs of adjacent $2$-mers in $\cR^\prime$ are $\parenv{\bms^\prime_{i-1},\bms^\prime_i}$ and $\parenv{\bms^\prime_{j-2},\bms^\prime_{j-1}}$. It follows from \Cref{thm_multiburst} that one can insert at most two $2$-mers into $\cR^\prime$ to obtain a consistent vector $\cR^{\prime\prime}$ of $2$-mers. Let $\bmy=\cR_{2}^{-1}\parenv{\cR^{\prime\prime}}$.
By \Cref{thm_multiburst}, the relationship between $\tilde{\bmx}$ and $\bmy$ has one of the following three forms:
\begin{itemize}
    \item If there are two inconsistent pairs of $2$-mers in $\cR^\prime$, then $\bmy=\tilde{\bmx}$.
    \item If there is exactly one inconsistent pair of $2$-mers in $\cR^\prime$, then $\bmy$ is obtained from $\tilde{\bmx}$ by deleting one symbol from a run\footnote{A \emph{run} in a sequence $\bmz$ is a maximal $1$-periodic substring of $\bmz$.} of length at least $2$.
    \item If $\cR^\prime$ is consistent, then $\bmy$ is obtained from $\tilde{\bmx}$ by deleting one symbol from the run containing $\tilde{x}_{i-1}$ and $\tilde{x}_{i}$, and another symbol from the run containing $\tilde{x}_{j-1}$ and $\tilde{x}_{j}$. Notice that these two runs are not necessarily distinct.
\end{itemize}

Next, consider the case $j=i+1$, where the two $2$-mers $\bms_i$ and $\bms_{i+1}$ are deleted. Then $\parenv{\bms^\prime_{i-1},\bms^\prime_i}$ is the only possible inconsistent pair of adjacent $2$-mers in $\cR^\prime$. There are two cases for the relationship between $\tilde{\bmx}$ and $\bmy$:
\begin{itemize}
    \item If $\cR^\prime$ is consistent, then $\tilde{x}_{i-1}=\tilde{x}_{i+1}$ and $\bmy$ is obtained from $\tilde{\bmx}$ by deleting $\tilde{x}_{i}$ and $\tilde{x}_{i+1}$. In particular, if $\tilde{x}_{i}=\tilde{x}_{i+1}$, then $\bmy$ is obtained from $\tilde{\bmx}$ by deleting two symbols from a run of length at least $3$.
    \item If $\cR^\prime$ is inconsistent, then $\tilde{x}_{i-1}\ne\tilde{x}_{i+1}$. The sequence $\bmy$ is obtained from $\tilde{\bmx}$ by deleting $\tilde{x}_i$. When $q=2$, it must be that $\tilde{x}_i\in\{\tilde{x}_{i-1},\tilde{x}_{i+1}\}$. Consequently, the sequence $\bmy$ is obtained by deleting one symbol from a run of length at least $2$ in $\tilde{\bmx}$.
\end{itemize}

From this analysis we obtain the following lemma.
\begin{lemma}\label{lem_l2t2}
    Let $\bmx$, $\cR^\prime$ and $\bmy$ be as above. There are at most two inconsistent pairs of $2$-mers in $\cR^\prime$.
    \begin{enumerate}[$(i)$]
        \item If $\cR^\prime$ is consistent, then $\bmy$ is obtained from $\tilde{\bmx}$ either by deleting one symbol from each of two runs of length at least $2$, or by deleting two symbols from a run of length at least $3$, or by a $2$-burst-deletion in an alternating substring\footnote{An alternating substring is a periodic substring of length at least $2$ whose minimum period is $2$.} of length at least $3$.
        \item If there is exactly one inconsistent pair of $2$-mers in $\cR^\prime$, then $\bmy$ is obtained from $\tilde{\bmx}$ either by deleting one symbol from a run of length at least $2$, or by deleting $\tilde{x}_i$ for some $1\le i\le n$ with $\tilde{x}_{i-1}$, $\tilde{x}_{i}$ and $\tilde{x}_{i+1}$ being mutually distinct (possible only when $q>2$).
        \item If there are two inconsistent pairs of $2$-mers in $\cR^\prime$, then $\bmy=\tilde{\bmx}$.
    \end{enumerate}
\end{lemma}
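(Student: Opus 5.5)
The plan is to reorganize the case analysis above according to the number of inconsistent pairs in $\cR^\prime$, rather than according to whether $j\ge i+2$ or $j=i+1$. When $j\ge i+2$, the only candidates for inconsistency are $(\bms^\prime_{i-1},\bms^\prime_i)$ and $(\bms^\prime_{j-2},\bms^\prime_{j-1})$. When $j=i+1$, the only candidate is $(\bms^\prime_{i-1},\bms^\prime_i)$. So there are at most two inconsistent pairs, which is the first assertion. Both configurations are then instances of \Cref{thm_multiburst} and \Cref{lem_burstdel} with $\ell=2$: either two bursts of length $1$, or one burst of length $2$.

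Next I will go through the three cases. For (iii), two inconsistent pairs force $j\ge i+2$ with both bursts inconsistent. Each burst has $t_j=1$, so $s_j=\min\{\ell-1,t_j\}=1$ for both, hence $t-s=0$ and $\bmy=\tilde{\bmx}$ by \Cref{thm_multiburst}(i).

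For (ii), exactly one inconsistent pair arises in one of two ways.
\begin{enumerate}[$(a)$]
\item If $j\ge i+2$, one burst has $s=1$ and the other has $s=0$. For the latter, \Cref{lem_burstdel}(ii) gives that the deleted symbol lies in a maximal periodic substring of length at least $\ell-1+t_j-2s_j=2$ with minimum period $1$, i.e. a run of length at least $2$.
\item If $j=i+1$, the single burst of length $2$ has $s=1=\ell-1$. By \Cref{lem_burstdel}(i), $\bmy$ is obtained by deleting $\tilde{x}_i$, and inconsistency means $\tilde{x}_{i-1}\ne\tilde{x}_{i+1}$. Either $\tilde{x}_i$ equals one of its neighbours, which is a deletion from a run of length at least $2$, or all three symbols are distinct. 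The latter needs three symbols, so it requires $q>2$.
\end{enumerate}

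For (i), $\cR^\prime$ is consistent, so $s=0$ for every burst.
\begin{enumerate}[$(a)$]
\item If $j\ge i+2$, each length-$1$ burst deletes one symbol from a run of length at least $2$, by \Cref{lem_burstdel}(ii) as above. If both deletions land in the same run, I still need it to have length at least $3$. Here the Remark after \Cref{thm_multiburst} applies: the run has length at least $2$ before each deletion, so it has length at least $3$ before the first. This gives either ``one symbol from each of two runs'' or ``two symbols from a run of length at least $3$''.
\item If $j=i+1$, consistency gives $\tilde{x}_{i-1}=\tilde{x}_{i+1}$, and $\bmy$ deletes $\tilde{x}_i\tilde{x}_{i+1}$. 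If $\tilde{x}_i=\tilde{x}_{i+1}$, the run $\tilde{x}_{i-1}\tilde{x}_i\tilde{x}_{i+1}$ has length at least $3$ and loses two symbols. Otherwise $\tilde{x}_{i-1}\tilde{x}_i\tilde{x}_{i+1}$ is $2$-periodic with minimum period $2$, so it is an alternating substring of length at least $3$ containing the deleted $2$-burst.
\end{enumerate}

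The only delicate point is the same-run subcase of (i)(a), where the length bound must be read sequentially via the Remark; everything else is direct bookkeeping from \Cref{lem_burstdel}.
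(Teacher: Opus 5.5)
Your proposal is correct and is essentially the paper's own argument: the paper obtains this lemma from the same case split ($j\ge i+2$ handled by \Cref{thm_multiburst}, $j=i+1$ handled by \Cref{lem_burstdel}), and you only regroup the cases by the number of inconsistent pairs. In the same-run subcase of (i) you can skip the Remark, since that run contains $\tilde{x}_{i-1},\tilde{x}_i,\tilde{x}_{j-1},\tilde{x}_j$ with $j-1\ge i+1$ and so already has length at least $4$ in $\tilde{\bmx}$.
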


This lemma forms the basis for the code constructions presented later. When $\cR^\prime$ is consistent, \Cref{lem_l2t2} says that $\bmy$ is obtained from $\tilde{\bmx}$ either by deleting symbols from runs, or by a $2$-burst-deletion in an alternating substring of length at least $3$. Let $r\parenv{\tilde{\bmx}}$ be the number of runs in $\tilde{\bmx}$. By assumption, the last two symbols of $\tilde{\bmx}$ are not deleted. Then it is easy to verify that $r\parenv{\tilde{\bmx}}=r\parenv{\bmy}$ in the former case, and $r\parenv{\tilde{\bmx}}=r\parenv{\bmy}+2$ in the latter case. Therefore, knowing $r\parenv{\tilde{\bmx}}\pmod{3}$ is sufficient to distinguish between these two cases.

Let $\tilde{\bmx}_{[i_{j,1},i_{j,2}]}$ be the $j$-th run in $\tilde{\bmx}$, where $i_{j,2}<i_{j+1,1}$ for $1\le j< r\parenv{\tilde{\bmx}}$. Clearly, these runs form a partition of $\tilde{\bmx}$. For all $-1\le i\le n+2$, set $f_{\textup{run}}(\tilde{\bmx},i)=j$, if $i\in[i_{j,1},i_{j,2}]$.
For $k\ge1$, define $F_{\textup{run},k}\parenv{\tilde{\bmx}}\triangleq\sum_{i=-1}^{n+2}f_{\textup{run}}(\tilde{\bmx},i)^k$.

We first consider the binary case. To handle the situation in which $\bmy$ is obtained from $\tilde{\bmx}$ by a $2$-burst-deletion in an alternating substring of length at least $3$, we need the following lemma. This lemma is essentially part of the main result in \cite{VL1967}, although it is stated there in a different form. Since that paper is not readily accessible online, we include the proof for completeness. The proof also gives an explicit decoding procedure for recovering $\tilde{\bmx}$ from $\bmy$.
\begin{lemma}(\cfcitep{VL1967})\label{lem_binary2burst}
  Let $m\ge5$ and $N\ge 2m$. For $0\le a<2m$, define the code $\cC=\mathset{\bmz\in\Sigma_2^m:F_{\textup{run},1}\parenv{\bmz}\equiv a\pmod{N}}$. Suppose that $\bmy$ is obtained from some $\bmz\in\cC$ by deleting $z_{i}$ and $z_{i+1}$ for some $i$ satisfying $z_{i+1}=z_{i-1}$ and $z_{i}\ne z_{i-1}$, where $1<i<m-1$. Then $\bmz$ can be efficiently decoded from $\bmy$.
\end{lemma}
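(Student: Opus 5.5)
Throughout, write $a=z_{i-1}=z_{i+1}$ and $b=z_i\neq a$. The plan is to compute the change in $F_{\textup{run},1}$ caused by the deletion exactly, as a function of a deletion position read off from $\bmy$. This recovers that position up to a set of equivalent positions, all of which yield the same $\bmz$.

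\emph{Step 1: effect on runs.} Since $q=2$ and $z_{i-1}\ne z_i\ne z_{i+1}$, the symbol $z_i$ forms a run of length $1$. Let $k$ be the index of the run of $\bmz$ ending at position $i-1$. Then $z_i$ is run $k+1$, and $z_{i+1}$ begins run $k+2$. Deleting $z_iz_{i+1}$ falls into one of two cases.
\begin{itemize}
\item If $z_{i+2}=a$, the remainder of run $k+2$ merges with run $k$.
\item If $z_{i+2}=b$, run $k+2$ disappears and run $k+3$ directly follows run $k$.
\end{itemize}
The hypothesis $1<i<m-1$ ensures both neighbours $z_{i-1}$ and $z_{i+2}$ exist. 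In either case, every symbol at a position $\ge i+2$ has its run index lowered by exactly $2$. Every symbol at a position $\le i-1$ keeps its index. The two deleted symbols carried indices $k+1$ and $k+2$.

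\emph{Step 2: the syndrome difference.} Step 1 gives
\[
\Delta\triangleq F_{\textup{run},1}(\bmz)-F_{\textup{run},1}(\bmy)=(2k+3)+2(m-i-1)=2\parenv{k+m-i}+1 .
\]
Since $k\le i-1$, we have $1\le\Delta\le 2m-1<N$. Hence $\Delta$ is determined by $a-F_{\textup{run},1}(\bmy)\pmod N$, and the decoder knows $\Delta$. Put $p=i-1$, which is the position in $\bmy$ of the last symbol before the deletion. The prefix $\bmy_{[1,p]}=\bmz_{[1,p]}$ is unchanged, so $k=f_{\textup{run}}(\bmy,p)$. Thus $\Delta=2h(p)+1$ with
\[
h(p)\triangleq f_{\textup{run}}(\bmy,p)+(m-1)-p .
\]
The decoder therefore knows the target value $H=(\Delta-1)/2$. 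We also have $h(p+1)-h(p)=\parenv{f_{\textup{run}}(\bmy,p+1)-f_{\textup{run}}(\bmy,p)}-1\in\mathset{-1,0}$. So $h$ is non-increasing, and $h(p)=h(p+1)$ exactly when $y_p\ne y_{p+1}$.

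\emph{Step 3: uniqueness of the reconstruction.} The set $\mathset{p:h(p)=H}$ is an interval of positions. On it, consecutive symbols of $\bmy$ alternate, i.e.\ it is an alternating segment. The candidate reconstruction for a position $p$ inserts $\bar{y}_py_p$ immediately after $y_p$, where $\bar{y}_p$ is the complement of $y_p$. This is forced, because the deleted pair is $ba$ with $a=z_{i-1}=y_p$. We must show that any two candidates in the interval give the same sequence. If $p$ and $p+1$ both lie in the interval, then $y_{p+1}=\bar{y}_p$. Inserting $\bar{y}_py_p$ after $p$ gives $\cdots y_p\,y_{p+1}\,y_p\,y_{p+1}\cdots$. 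Inserting $\bar{y}_{p+1}y_{p+1}=y_py_{p+1}$ after $p+1$ gives $\cdots y_p\,y_{p+1}\,y_p\,y_{p+1}\cdots$ as well. The two insertions coincide. Consequently the decoder may take any $p$ with $h(p)=H$, for instance the smallest, and output the resulting $\bmz$. Because $h$ is monotone, this $p$ can be found in linear time.

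The main obstacle is Step 1. One must verify carefully that the index shift is exactly $2$ for every later symbol in both subcases, namely merge and no merge. One must also check that the boundary conditions $1<i<m-1$ exclude degenerate cases, such as the deleted pair sitting at an end of $\bmz$. If $z_{i+2}$ failed to exist, the shift would be $1$, and the formula for $\Delta$ would break.
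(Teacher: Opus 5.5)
Your proof is correct and is essentially the paper's argument. Your quantity $2h(p)+1$ is exactly the paper's $\Delta^{(k)}=2f_{\textup{run}}(\bmy,k)+3+2(m-2-k)$ with $k=p$, and you use the same three facts: it is non-increasing, it is constant precisely across alternating stretches of $\bmy$, and all insertion points within such a stretch yield the same sequence, so any admissible position (e.g.\ the smallest) decodes $\bmz$. The only difference is cosmetic: you do the run-index bookkeeping forward from the true deletion in $\bmz$, whereas the paper does it backward for each candidate $\bmz^{(k)}$. Also, your closing worry is unnecessary, since for $i=m-1$ there are simply no later symbols and $\Delta=2k+3$ would still hold.
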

\begin{IEEEproof}
   The length of $\bmy$ is $m-2$. For $k\in[m-3]$, let $\bmz^{(k)}=\bmy_{[1,k]}(1-y_{k})y_k\bmy_{[k+1,m-2]}$. That is, sequence $\bmz^{(k)}$ is obtained from $\bmy$ by inserting the two bits $(1-y_k)$ and $y_k$ (in this order) between the $k$-th and the $(k+1)$-th positions. It is clear that
   \begin{equation*}
       f_{\textup{run}}\parenv{\bmz^{(k)},j}=
       \begin{cases}
           f_{\textup{run}}\parenv{\bmy,j},\mbox{ if }j\le k,\\
           f_{\textup{run}}\parenv{\bmy,k}+1,\mbox{ if }j= k+1,\\
           f_{\textup{run}}\parenv{\bmy,k}+2,\mbox{ if }j= k+2,\\
           f_{\textup{run}}\parenv{\bmy,j-2}+2,\mbox{ if }j\ge k+3.
       \end{cases}
   \end{equation*}
   Let $\Delta^{(k)}=F_{\textup{run},1}\parenv{\bmz^{(k)}}-F_{\textup{run},1}\parenv{\bmy}$. Then $\Delta^{(k)}=2f_{\textup{run}}\parenv{\bmy,k}+3+2(m-2-k)$.

   For all $k<k^\prime$, we have $\Delta^{(k)}-\Delta^{(k^\prime)}=-2\parenv{f_{\textup{run}}\parenv{\bmy,k^\prime}-f_{\textup{run}}\parenv{\bmy,k}}+2\parenv{k^\prime-k}\ge0$, because $f_{\textup{run}}\parenv{\bmy,k^\prime}-f_{\textup{run}}\parenv{\bmy,k}\le k^\prime-k$. Therefore, we conclude that $0\le \Delta^{(k)}\le\Delta^{(1)}=2m-1<N$. Moreover, equality $\Delta^{(k)}=\Delta^{(k^\prime)}$ holds if and only if $f_{\textup{run}}\parenv{\bmy,k^\prime}-f_{\textup{run}}\parenv{\bmy,k}=k^\prime-k$, which is equivalent to $\bmy_{s+1}=1-\bmy_{s}$ for all $k\le s<k^\prime$. Consequently, it holds that $\Delta^{(k)}=\Delta^{(k^\prime)}$ if and only if $\bmz^{(k)}=\bmz^{(k^\prime)}$.

   Let $\delta$ be the unique integer in $[0,N-1]$ such that $\delta\equiv a-F_{\textup{run},1}\parenv{\bmy}\pmod{N}$. Choose the smallest $k\in[m-3]$ such that $\Delta^{(k)}=\delta$. By the relationship between $\bmy$ and $\bmz$, such an index $k$ must exist. The preceding argument then guarantees that $\bmz^{(k)}=\bmz$, completing the proof.
\end{IEEEproof}

We now give the binary $2$-read $2$-deletion correcting code.
\begin{theorem}\label{thm_binaryl2t2}
   For $n\ge2$, let $p>2n+8$ be the smallest prime. For given $0\le a_1,a_2<p$ and $0\le b<3$, set $\bma=\parenv{a_1,a_2}$ and define the code
     \begin{equation*}
        \cC(n;\bma,b)=\mathset{\bmx\in\Sigma_2^n:
         \begin{array}{c}
        F_{\textup{run},k}\parenv{\tilde{\bmx}}\equiv a_k\pmod{p},\forall k=1,2\\
        r(\tilde{\bmx})\equiv b\pmod{3}
        \end{array}
        }.
    \end{equation*}
    This code can correct two deletions in $\cR_{2}(\tilde{\bmx})$. There exist some $\bma$ and $b$ such that $\rho\parenv{\cC(n;\bma,b)}\le 2\log\parenv{p}+\log 3=2\log n+O(1)$.
\end{theorem}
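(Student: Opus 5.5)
The plan is to follow the case analysis of \Cref{lem_l2t2}, after a first step that reconstructs $\bmy$. Given the received $\cR^\prime$, the decoder counts the inconsistent adjacent pairs and inserts the minimum number of $2$-mers (via \Cref{thm_multiburst}) to obtain $\cR^{\prime\prime}$ and $\bmy=\cR_2^{-1}\parenv{\cR^{\prime\prime}}$. If there are two inconsistent pairs, \Cref{lem_l2t2}(iii) gives $\bmy=\tilde{\bmx}$ and we are done. If there is exactly one inconsistent pair, we are in the binary case, so the possibility of three mutually distinct symbols in \Cref{lem_l2t2}(ii) cannot occur. Hence $\bmy$ arises from $\tilde{\bmx}$ by deleting one symbol from a run. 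Deleting a symbol from a run of length at least $2$ preserves the number and order of runs, and decreases $F_{\textup{run},1}$ by exactly the index $j$ of that run. The decoder therefore computes $j\equiv a_1-F_{\textup{run},1}(\bmy)\pmod p$; since $1\le j\le r(\tilde{\bmx})\le n+4<p$, this determines $j$, and the decoder lengthens the $j$-th run of $\bmy$.

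If $\cR^\prime$ is consistent, the decoder first uses $b$ to compare $r(\bmy)$ with $r(\tilde{\bmx})\bmod 3$, which separates the two situations described after \Cref{lem_l2t2}. When $r(\tilde{\bmx})=r(\bmy)$, two symbols were deleted from runs, either one from each of two runs or two from a single run of length at least $3$. In this situation the runs of $\tilde{\bmx}$ and $\bmy$ correspond in order. This requires checking that every affected run keeps length at least $1$, and that runs neither merge nor vanish, which holds because each affected run had length at least $2$, or at least $3$ when it lost two symbols. The differences $F_{\textup{run},k}(\tilde{\bmx})-F_{\textup{run},k}(\bmy)=j_1^k+j_2^k$ for $k=1,2$ then determine the multiset $\{j_1,j_2\}$ by \Cref{lem_pre1}. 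This needs $p\ge\max\{n+4,3\}$, which holds since $p>2n+8$. The decoder then lengthens the indicated runs.

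When $r(\tilde{\bmx})=r(\bmy)+2$, a $2$-burst was deleted inside an alternating substring, namely $\tilde x_i\tilde x_{i+1}$ with $\tilde x_{i+1}=\tilde x_{i-1}\ne\tilde x_i$. Here I apply \Cref{lem_binary2burst} with $\bmz=\tilde{\bmx}$, $m=n+4$ and $N=p\ge 2m$; the condition $p>2n+8$ is exactly what is needed for this. The known prefix and suffix ensure $1<i<m-1$, and $m\ge5$ holds when $n\ge2$ (for very small $n$ one can check the cases directly). The congruence $F_{\textup{run},1}(\tilde{\bmx})\equiv a_1\pmod p$ serves as the code condition of that lemma, so $\tilde{\bmx}$ is recovered. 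The main point to verify carefully is the run-count claim $r(\tilde{\bmx})-r(\bmy)\in\{0,2\}$ in the consistent case, together with the correspondence of runs. This comes down to checking the boundary effects, which are handled by the assumption that $\bms_0$ and $\bms_{n+2}$ are never deleted.

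For the redundancy, the map $\bmx\mapsto\parenv{F_{\textup{run},1}\bmod p,\;F_{\textup{run},2}\bmod p,\;r\bmod 3}$ partitions $\Sigma_2^n$ into at most $3p^2$ classes. By the pigeonhole principle some class has size at least $2^n/(3p^2)$. Bertrand's postulate gives $p\le 4n+18$, so the redundancy is at most $2\log p+\log 3=2\log n+O(1)$.
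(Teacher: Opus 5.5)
Your proposal is correct and follows essentially the same route as the paper: the case split from \Cref{lem_l2t2}, $r(\tilde{\bmx}) \bmod 3$ to separate run deletions from the alternating $2$-burst, power-sum congruences with \Cref{lem_pre1} for the run cases, and \Cref{lem_binary2burst} with $m=n+4$ and $N=p>2m$ for the burst case. Your explicit pigeonhole/Bertrand count makes the redundancy bound precise; the parenthetical about small $n$ is unnecessary, since $n\ge2$ already gives $m\ge6$.
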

\begin{IEEEproof}
Suppose that $\tilde{\bmx}=\bmu\bmx\bmv$ is transmitted, and that $\cR^\prime$ is the received vector of $2$-mers, where $\bmx\in\cC(n;\bma,b)$ and $\cR^\prime$ is a subvector of $\cR_{2}(\tilde{\bmx})$ of length $n+1$. Let $\bmy$ be the sequence derived from $\cR^\prime$ as described at the beginning of this subsection. We then show how to decode $\tilde{\bmx}$ from $\bmy$.

By \Cref{lem_l2t2}, if there are two inconsistent pairs of $2$-mers in $\cR^\prime$, then $\tilde{\bmx}=\bmy$. Hence no further decoding is needed.

If there is exactly one inconsistent pair of $2$-mers in $\cR^\prime$, then $\bmy$ is obtained from $\tilde{\bmx}$ by deleting one symbol from a run of length at least $2$. In this case, $r\parenv{\tilde{\bmx}}=r(\bmy)\le\abs{\bmy}= n+3$. Suppose that the deletion occurred in the $i$-th run. Then $1\le i=F_{\textup{run},1}\parenv{\tilde{\bmx}}-F_{\textup{run},1}\parenv{\bmy}\le n+3<p$, implying that $i$ is the unique integer in $[p-1]$ such that $i\equiv a_1-F_{\textup{run},1}\parenv{\bmy}\pmod{p}$. Once $i$ is known, $\tilde{\bmx}$ can be recovered by extending the $i$-th run of $\bmy$ by one symbol.

It remains to consider the case where $\cR^\prime$ is consistent. If $r(\bmy)\equiv b\pmod{3}$, then by the preceding discussion, sequence $\bmy$ is obtained from $\tilde{\bmx}$ either by deleting one symbol from each of two certain runs of length at least $2$, or by deleting two symbols from a run of length at least $3$. In both cases, we have $r\parenv{\tilde{\bmx}}=r(\bmy)\le\abs{\bmy}= n+2$. Assume that the two deleted symbols lie in the $i$-th and $j$-th runs, where $i\le j$. Then
\begin{equation}\label{eq_l2t2}
\left\{
\begin{array}{r}
    i+j=F_{\textup{run},1}\parenv{\tilde{\bmx}}-F_{\textup{run},1}\parenv{\bmy} \equiv a_1-F_{\textup{run},1}\parenv{\bmy}\pmod{p},\\
    i^2+j^2=F_{\textup{run},2}\parenv{\tilde{\bmx}}-F_{\textup{run},2}\parenv{\bmy} \equiv a_2-F_{\textup{run},2}\parenv{\bmy}\pmod{p}.
\end{array}
\right.
\end{equation}
By \Cref{lem_pre1}, the values of $i$ and $j$ can be uniquely determined by \eqref{eq_l2t2}. Therefore, $\tilde{\bmx}$ can be recovered by extending the corresponding run(s).

Finally, suppose that $r(\bmy)\equiv b-2\pmod 3$. Then according to previous discussion, sequence $\bmy$ is obtained from $\tilde{\bmx}$ by a $2$-burst-deletion in an alternating substring of length at least $3$. In this case, $\tilde{\bmx}$ can be recovered using the decoding procedure described in the proof of \Cref{lem_binary2burst}. Now the proof is completed.
\end{IEEEproof}

We next turn to the construction of non-binary codes. Since \Cref{lem_binary2burst} does not extend directly to the case $q>2$, a different method is required to handle the case where $\bmy$ is obtained from $\tilde{\bmx}$ by a $2$-burst-deletion in an alternating substring of length at least $3$.

For integers $m>T\ge2$ and a sequence $\bmz\in\Sigma_q^m$, let $\hat{\bmz}$ be the sequence obtained from $\bmz$ by padding $m-\floorenv{m/T}T$ symbols $0$. Then we have $T\mid\abs{\hat{\bmz}}$. Partition $\hat{\bmz}$ into non-overlapping segments of length $T$ as $\hat{\bmz}=\hat{\bmz}_{[1,T]}\hat{\bmz}_{[T+1,2T]}\cdots$. In other words, the $i$-th segment is $\hat{\bmz}_{[(i-1)T+1,iT]}$. Define a mapping $p_{T}:\Sigma_q^m\rightarrow\Sigma_{q^{T}}^{\ceilenv{m/T}}$ as follows. For $\bmz\in\Sigma_q^m$, the symbol in the $i$-th position of $p_{T}(\bmz)$ is $\sum_{j=1}^{T}\hat{z}_{(i-1)T+j}q^{j-1}$.

\begin{lemma}\label{lem_bursttosingle}
    Suppose that $T\ge2$ and $m\ge 2T-1$. Let $\bmy$ be obtained from $\bmz$ by a $T$-burst-deletion in a $T$-periodic substring of length at least $2T-1$. Then $p_T(\bmy)$ is obtained from $p_T(\bmz)$ by a single deletion.
\end{lemma}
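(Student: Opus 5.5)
The plan is to reduce the statement to one elementary observation: inside a $T$-periodic substring, deleting any window of $T$ consecutive symbols produces the same sequence. We then move the deleted window onto a window aligned with the segmentation used by $p_T$.

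\emph{Step 1 (sliding the burst).} Let $\bmz_{[a,b]}$ be the $T$-periodic substring, with $b-a+1\ge 2T-1$, and suppose the burst $\bmz_{[c,c+T-1]}$ lies inside $[a,b]$. If also $[c+1,c+T]\subseteq[a,b]$, then deleting $\bmz_{[c,c+T-1]}$ and deleting $\bmz_{[c+1,c+T]}$ give the same sequence. Indeed, the two results differ only in keeping $z_{c+T}$ versus $z_c$ at the same position, and $z_c=z_{c+T}$ by periodicity. By induction, deleting any length-$T$ window contained in $[a,b]$ yields the same $\bmy$; this is exactly the second elementary fact used in the proof of \Cref{lem_scheckpattern}.

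\emph{Step 2 (finding an aligned window).} Any interval of at least $2T-1$ consecutive integers contains an interval $[(i-1)T+1,iT]$ for some $i$. To see this, let $(i-1)T+1$ be the smallest integer $\ge a$ that is congruent to $1$ modulo $T$. Then $(i-1)T+1\le a+T-1$, so $iT\le a+2T-2\le b$. By Step 1 we may therefore assume that $\bmy$ is obtained from $\bmz$ by deleting exactly the aligned block $\bmz_{[(i-1)T+1,iT]}$. This block lies within the first $m$ positions, hence coincides with the $i$-th segment of $\hat{\bmz}$, and that segment consists of genuine symbols rather than padding.

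\emph{Step 3 (comparing the images under $p_T$).} We have $|\bmy|=m-T\equiv m\pmod T$. So the number of padded zeros, which depends only on the length modulo $T$, is the same for $\bmy$ and $\bmz$, and $\hat{\bmy}$ is $\hat{\bmz}$ with its $i$-th segment removed. The other segments keep their contents and are simply renumbered: those before $i$ stay in place and those after $i$ shift down by one. Applying the symbolwise map, $p_T(\bmy)$ is $p_T(\bmz)$ with its $i$-th symbol deleted, which is a single deletion.

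The only point needing care is Step 2 together with the boundary bookkeeping. We must make sure the aligned block lies inside the periodic substring, so that Step 1 applies, and that the padding is not disturbed. Both follow from the length bound $2T-1$ and the congruence $|\bmy|\equiv|\bmz|\pmod T$. Step 1 itself is routine.
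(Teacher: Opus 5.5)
Your proof is correct and takes essentially the same route as the paper. Both use periodicity to slide the length-$T$ burst onto an aligned segment $[(i-1)T+1,iT]$ inside the periodic substring, so that $p_T$ loses exactly its $i$-th symbol; the paper finds this segment by splitting on whether the substring begins at a segment boundary. Your Steps 1 and 3 make explicit the sliding argument and the padding bookkeeping (the number of padded zeros is unchanged because $|\bmy|\equiv|\bmz|\pmod T$), which the paper leaves implicit.
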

\begin{IEEEproof}
    Suppose that the $T$-periodic substring is $\bmz_{[s,s+L-1]}$, where $L\ge2T-1$ and $1\le s\le m-L+1$. Assume $s=(i-1)T+j$ for some $i\ge1$ and $1\le j\le T$. Since $\bmz_{[s,s+L-1]}$ is $T$-periodic and $\bmy$ is obtained by a $T$-burst-deletion in $\bmz_{[s,s+L-1]}$, it follows that $\bmy$ can be obtained by deleting arbitrary $T$ consecutive symbols in $\bmz_{[s,s+L-1]}$.

    If $j=1$, then $\bmz_{[s,s+L-1]}$ contains the $i$-th segment since $L\ge2T-1\ge T$. In this case, $p_T(\bmy)$ is obtained from $p_T(\bmz)$ by deleting the $i$-th symbol. Otherwise, $\bmz_{[s,s+L-1]}$ contains the $(i+1)$-th segment because $L-(T+1-j)\ge T$. In this case, $p_T(\bmy)$ is obtained from $p_T(\bmz)$ by deleting the $(i+1)$-th symbol.
\end{IEEEproof}

Inspired by \Cref{lem_bursttosingle}, we need a non-binary code that corrects a single deletion. Let $m\ge2$ and $q\ge2$. For a sequence $\bmz\in\Sigma_q^m$, define a $q$-ary sequence $\psi(\bmz)\in\Sigma_q^m$ as
\begin{equation*}
    \psi(\bmz)_i=
    \begin{cases}
        z_i-z_{i+1}\pmod{q},\mbox{ if }i<m,\\
        z_m,\mbox{ if }i=m.
    \end{cases}
\end{equation*}
The Varshamov-Tenengolts syndrome of $\bmz$ is defined to be $\vt\parenv{\bmz}=\sum_{i=1}^{m}iz_i$.

\begin{lemma}(\cfcite{Tuan2024IT}{Theorem 3})\label{lem_qary1del}
    Suppose that $\bmy$ is obtained from $\bmz\in\Sigma_q^m$ by deleting one symbol. Then one can decode $\bmz$ from $\bmy$ efficiently when given $\vt\parenv{\psi(\bmz)}\pmod{mq}$.
\end{lemma}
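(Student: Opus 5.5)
Throughout, write $\bmw=\psi(\bmz)$ and $\bmw^\prime=\psi(\bmy)$. The plan is to reduce the problem to a structured single-deletion problem on $\bmw$ and then decode it with a Varshamov-Tenengolts-type monotonicity argument, in the spirit of \Cref{lem_binary2burst}. Since $\psi$ is a bijection on $\Sigma_q^m$ (we have $z_m=w_m$ and $z_i=w_i+z_{i+1}\bmod q$), it suffices to recover $\bmw$.

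\textbf{Step 1: effect of the deletion on $\psi$.} Suppose $z_i$ is deleted. I expect the following, which I will check by direct substitution into the definition of $\psi$:
\begin{itemize}
\item $w^\prime_k=w_k$ for $k<i-1$;
\item $w^\prime_{i-1}=(w_{i-1}+w_i)\bmod q$, because $z_{i-1}-z_{i+1}=(z_{i-1}-z_i)+(z_i-z_{i+1})$;
\item $w^\prime_k=w_{k+1}$ for $i\le k\le m-1$.
\end{itemize}
The boundary cases should fit the same rule. For $i=m$ we get $w^\prime_{m-1}=z_{m-1}=(w_{m-1}+w_m)\bmod q$. For $i=1$, $w_1$ is simply dropped. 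So $\bmw^\prime$ arises from $\bmw$ by deleting $w_i$ and adding it, modulo $q$, into its left neighbour.

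\textbf{Step 2: the syndrome difference is known exactly.} Let $c\in\{0,1\}$ be the carry in $w_{i-1}+w_i=w^\prime_{i-1}+cq$. Then a short computation should give
$$
\vt(\bmw)-\vt(\bmw^\prime)=w_i+\sum_{k>i}w_k+(i-1)qc.
$$
This quantity lies in $[0,(m-i+1)(q-1)+(i-1)q]\subseteq[0,mq-1]$. Hence knowing $\vt(\psi(\bmz))\bmod mq$ determines the integer $D=\vt(\bmw)-\vt(\bmw^\prime)$ exactly.

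\textbf{Step 3: the decoder.} For each candidate position $i$, each symbol $a$ and each carry $c$ consistent with $\bmw^\prime$, build the candidate $\bmw^{(i,a)}$ by inverting Step 1. The decoder outputs any candidate whose difference equals $D$. Uniqueness is then the claim that all candidates with $\Delta=D$ give the same $\bmz$ after applying $\psi^{-1}$. I will organise this by the corresponding operation on $\bmz$ itself: inserting a symbol $b$ into $\bmy$ after position $i-1$. The plan is to show:
\begin{itemize}
\item $\Delta$ is weakly decreasing as the insertion point moves right, with the term $(i-1)qc$ dominating exactly when the inserted symbol creates a carry;
\item for a fixed inserted symbol $b$, $\Delta$ is constant precisely along a run of $b$'s in $\bmy$, where all insertion positions yield the same $\bmz$;
\item different $b$, or positions separated by a symbol different from $b$, give distinct values of $\Delta$.
\end{itemize}

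\textbf{Main obstacle.} The delicate step is this last injectivity statement, because the carry couples the position $i$ with the symbol values. I would first handle $c=0$ and $c=1$ separately. Then I would compare adjacent positions $i$ and $i+1$ directly, showing that the difference $\Delta^{(i)}-\Delta^{(i+1)}$ is a positive combination of $w$-values and $q$ unless $z^{(i)}=z^{(i+1)}$. This is the same telescoping device used in \Cref{lem_binary2burst}. If that becomes messy, the fallback is to cite the argument of \cite[Theorem 3]{Tuan2024IT} for this step.
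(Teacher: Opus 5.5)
Your Steps~1 and~2 are correct. Deleting $z_i$ turns $\bmw=\psi(\bmz)$ into $\bmw'$ by removing $w_i$ and adding it modulo $q$ to $w_{i-1}$. Moreover, $D=\vt(\bmw)-\vt(\bmw')=\sum_{k\ge i}w_k+(i-1)qc$ lies in $[0,mq-1]$, so the residue modulo $mq$ determines $D$ exactly. (The paper gives no proof of this lemma; it only cites \cite[Theorem 3]{Tuan2024IT}.)

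\textbf{The gap is in Step~3.} Step~3 carries the entire content of the lemma, namely uniqueness, and the mechanism you propose for it is false. $\Delta$ is \emph{not} weakly decreasing in the insertion position, and $\Delta^{(i)}-\Delta^{(i+1)}$ need not be positive when $\bmz^{(i)}\ne\bmz^{(i+1)}$. Take $q=3$, $m=3$, $\bmy=00$, and insert $b=1$ at positions $1,2,3$. This gives $\bmz=100,010,001$ and $\psi(\bmz)=100,210,021$, while $\vt(\psi(\bmy))=0$. So $\Delta=1,4,7$, which is strictly \emph{increasing}, because positions $2$ and $3$ produce a carry. Carry candidates and non-carry candidates move in opposite directions. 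Hence no adjacent-step telescoping in the style of \Cref{lem_binary2burst} yields injectivity. In particular, nothing in your plan rules out a carry candidate at one position colliding with a non-carry candidate at a distant position.

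\textbf{Two ingredients are missing.}

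\emph{Recovering the deleted symbol.} Set $z_{m+1}:=0$. Then $\sum_{k\ge i}w_k=z_i+q\cdot\#\{k\ge i: z_k<z_{k+1}\}$, so $\Delta\equiv b\pmod q$. Thus $D\bmod q$ reveals the deleted symbol directly; this, not monotonicity, is what separates different $b$.

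\emph{Separating the carry classes for fixed $b$.} One clean way is as follows.
\begin{enumerate}
\item Put $\hat z_j=\sum_{k\ge j}w_k$ and $\hat z_{m+1}=0$. This sequence is nonincreasing with steps $w_j\in[0,q-1]$, it determines $\bmz$ via $z_j=\hat z_j\bmod q$, and $\vt(\psi(\bmz))=\sum_j\hat z_j$.
\item Deleting $z_i$ deletes $\hat z_i$ and lowers $\hat z_1,\ldots,\hat z_{i-1}$ by $qc$.
\item For $c=0$, the decoder must insert the value $D$ into the nonincreasing sequence $\hat{\bmy}$, and every valid insertion point gives the same result.
\item For $c=1$, the value $D-(i-1)q$ must lie in $[\hat y_{i-1}+1,\ \hat y_i+q-1]$. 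The step bound $\hat y_i-\hat y_{i'-1}\le(i'-1-i)(q-1)$ then shows two things: at most one such $i$ exists, and it is incompatible with any $c=0$ solution.
\end{enumerate}
Without an argument of this kind, your proof reduces to the fallback citation.
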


The following corollary follows immediately from \Cref{lem_bursttosingle,lem_qary1del}.
\begin{corollary}\label{cor_TburstTperiodic}
    Suppose that $q,T\ge2$ and $m\ge 2T-1$. For integer $0\le a<q^T\ceilenv{m/T}$, define the code $$
    \cC=\mathset{\bmz\in\Sigma_q^m:\vt\parenv{\psi\parenv{p_T(\bmz)}}\equiv a\pmod{q^T\ceilenv{m/T}}}.
    $$
    Let $\bmy$ be obtained from $\bmz\in\cC$ by a $T$-burst-deletion in a $T$-periodic substring of length at least $2T-1$. Then one can efficiently recover $\bmz$. There exists some $a$ such that $\rho(\cC)\le \log\parenv{q^T\ceilenv{m/T}}$.
\end{corollary}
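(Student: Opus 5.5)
The proof will chain \Cref{lem_bursttosingle} with \Cref{lem_qary1del} on the compressed alphabet $\Sigma_{q^T}$, and then bound the redundancy by averaging over syndrome classes.

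\emph{Reduction to a single deletion.} Let $\bmz\in\cC$, and let $\bmy$ be obtained from $\bmz$ by a $T$-burst-deletion inside a $T$-periodic substring of length at least $2T-1$. Since $m\ge 2T-1$, \Cref{lem_bursttosingle} applies, so $p_T(\bmy)$ is obtained from $p_T(\bmz)\in\Sigma_{q^T}^{\ceilenv{m/T}}$ by deleting one symbol.

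A bookkeeping point must be settled here. The padding in the definition of $p_T$ depends on the length of the input. Since $\abs{\bmy}=m-T$, the number of zero symbols padded to $\bmy$ equals the number padded to $\bmz$, because $m-T\equiv m \pmod T$. Hence $\hat{\bmy}$ is exactly $\hat{\bmz}$ with a length-$T$ block removed, which is what the proof of \Cref{lem_bursttosingle} uses. The receiver knows $m$ and $T$, so it can compute $p_T(\bmy)$ from $\bmy$.

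\emph{Decoding.} Apply \Cref{lem_qary1del} with alphabet size $q^T$ and length $\ceilenv{m/T}$. The decoder knows $\vt\parenv{\psi\parenv{p_T(\bmz)}}\equiv a\pmod{q^T\ceilenv{m/T}}$, and this is exactly the side information that lemma requires. It therefore recovers $p_T(\bmz)$ efficiently.

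Next, the decoder inverts $p_T$. Each $q^T$-ary symbol is the base-$q$ expansion of a length-$T$ segment, so $p_T$ is injective on $\Sigma_q^m$ once $m$ is fixed. Concretely, expand each symbol into $T$ digits, which recovers $\hat{\bmz}$, and then truncate to the first $m$ symbols. This yields $\bmz$.

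\emph{Redundancy.} The sets $\cC_a$ for $0\le a<q^T\ceilenv{m/T}$ partition $\Sigma_q^m$. By pigeonhole, some $a$ satisfies $\abs{\cC_a}\ge q^m/\parenv{q^T\ceilenv{m/T}}$, which gives $\rho(\cC)\le\log\parenv{q^T\ceilenv{m/T}}$.

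\emph{Main obstacle.} The only delicate point is the consistency of the padding between $\bmz$ and $\bmy$, together with invoking \Cref{lem_qary1del} on the enlarged alphabet $\Sigma_{q^T}$ (including the $\psi$ map taken mod $q^T$). Both follow directly once $\abs{\bmy}\equiv\abs{\bmz}\pmod T$ is noted. Beyond that, the argument is a direct composition of the two lemmas.
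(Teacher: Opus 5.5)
Your proposal is correct and follows the paper's route: the paper says the corollary follows immediately from \Cref{lem_bursttosingle} and \Cref{lem_qary1del}, the latter applied over the alphabet $\Sigma_{q^T}$ with length $\ceilenv{m/T}$, and the redundancy bound is the standard pigeonhole over syndrome classes. Your additional bookkeeping fills in details the paper leaves implicit: the padding depends only on the length modulo $T$, so $\hat{\bmy}$ is $\hat{\bmz}$ with one length-$T$ block removed, and $p_T$ can be inverted once $m$ is known.
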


\begin{remark}
    For $T=2$, Wang \emph{et al} in \cite[Theorem 7]{Shuche2024IT} constructed a code with redundancy $\log\parenv{2mq^2}$. In comparison, our construction has redundancy $\log\parenv{q^2\ceilenv{m/2}}$. In addition to improving the redundancy, our construction works for general $T\ge2$.
\end{remark}

\begin{remark}
    For any $q\ge2$, Sun \emph{et al} in \cite[Theorem 9]{Yubo2025IT} constructed a $q$-ary code of length $m$ that can correct a $T$-burst-deletion. Therefore, their code can also correct a $T$-burst-deletion in a $T$-periodic substring. However, their code has redundancy $\log n+\log\parenv{\frac{m}{T}(18)^{\floorenv{\frac{(T+1)^2}{2}}+1}4^{q^T-1}q^{4T}}$, which is much greater than the redundancy of the code in \Cref{cor_TburstTperiodic}.
\end{remark}

We are now ready to present the non-binary construction.
\begin{theorem}\label{thm_qaryl2t2}
    For $q>2$ and $n\ge2$, let $p\ge n+3$ be the smallest prime. For given $0\le a_1,a_2<p$, $0\le b_1<q^2\ceilenv{\frac{n+4}{2}}$, $0\le b_2<(n+4)q$ and $0\le c<3$, set $\bma=\parenv{a_1,a_2}$ and $\bmb=\parenv{b_1,b_2}$. Define the code
     \begin{equation*}
        \cC(n;\bma,\bmb,c)=\mathset{\bmx\in\Sigma_q^n:
         \begin{array}{c}
        F_{\textup{run},k}\parenv{\tilde{\bmx}}\equiv a_k\pmod{p},\forall k=1,2\\
        \vt\parenv{\psi\parenv{p_2(\tilde{\bmx})}}\equiv b_1\pmod{q^2\ceilenv{\frac{n+4}{2}}}\\
        \vt\parenv{\psi(\tilde{\bmx})}\equiv b_2\pmod{(n+4)q}\\
        r(\tilde{\bmx})\equiv c\pmod{3}
        \end{array}
        }.
    \end{equation*}
    This code can correct two deletions in $\cR_{2}(\tilde{\bmx})$. There exist some $\bma$, $\bmb$ and $c$ such that $\rho\parenv{\cC(n;\bma,\bmb,c)}\le 2\log\parenv{p}+\log\parenv{q^2\ceilenv{\frac{n+4}{2}}}+\log((n+4)q)+\log 3=4\log n+O(1)$.
\end{theorem}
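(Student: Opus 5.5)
The decoding follows the proof of \Cref{thm_binaryl2t2}, with two additions: one syndrome to detect the new non-binary single-deletion case of \Cref{lem_l2t2}(ii), and one to handle the alternating-burst case without \Cref{lem_binary2burst}. The redundancy bound is the usual pigeonhole count over all choices of $(\bma,\bmb,c)$. Summing the logarithms of the moduli gives $2\log p+\log(q^2\lceil (n+4)/2\rceil)+\log((n+4)q)+\log 3=4\log n+O(1)$, since $p=O(n)$ and $q$ is constant.

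\textbf{Setup.} Given $\cR^\prime$, compute $\bmy$ as at the beginning of \Cref{subsec_l2t2} and count the inconsistent pairs in $\cR^\prime$.

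\textbf{Two inconsistent pairs.} Then $\bmy=\tilde{\bmx}$ by \Cref{lem_l2t2}(iii), and nothing remains to be done.

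\textbf{Exactly one inconsistent pair.} By \Cref{lem_l2t2}(ii), $\bmy$ arises from $\tilde{\bmx}$ by a single symbol deletion: either from a run of length at least $2$, or of a symbol $\tilde{x}_i$ whose neighbours are all mutually distinct. In both subcases $\tilde{\bmx}$ has length $n+4$ and $\bmy$ is obtained from it by deleting one symbol. Hence \Cref{lem_qary1del}, applied with $m=n+4$ and the syndrome $\vt(\psi(\tilde{\bmx}))\bmod (n+4)q$ (which the decoder knows to be $b_2$), recovers $\tilde{\bmx}$ directly. This uses the $b_2$ constraint and needs no further case split.

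\textbf{$\cR^\prime$ consistent.} By \Cref{lem_l2t2}(i), either two symbols are deleted from runs (one from each of two runs, or two from one run), or a $2$-burst is deleted inside an alternating substring of length at least $3$.
\begin{itemize}
\item As in the binary proof, the run count $r(\tilde{\bmx})$ equals $r(\bmy)$ in the first case and $r(\bmy)+2$ in the second. Comparing $r(\bmy)$ with $c$ modulo $3$ therefore tells the decoder which case occurred.
\item In the first case, $r(\tilde{\bmx})=r(\bmy)\le n+2<p$. The run indices $i\le j$ of the deleted symbols satisfy the power-sum system \eqref{eq_l2t2} modulo $p$. By \Cref{lem_pre1} (with $Q=n+2$, $N=2$, $p\ge\max\{Q,3\}$) the pair $(i,j)$ is unique, and we recover $\tilde{\bmx}$ by lengthening those runs.
\item In the second case, $\bmy$ arises from $\tilde{\bmx}$ by a $2$-burst-deletion inside a $2$-periodic substring of length at least $3=2T-1$ with $T=2$. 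We then apply \Cref{cor_TburstTperiodic} with $m=n+4\ge 3$, using the syndrome $\vt(\psi(p_2(\tilde{\bmx})))\bmod q^2\lceil (n+4)/2\rceil$, known to be $b_1$.
\end{itemize}

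\textbf{Main obstacle.} The step needing care is the claim $r(\tilde{\bmx})=r(\bmy)+2$ in the alternating-burst case for $q>2$. Deleting $\tilde{x}_i\tilde{x}_{i+1}$ with $\tilde{x}_{i-1}=\tilde{x}_{i+1}\neq\tilde{x}_i$ removes two run boundaries, and we must check that no runs merge further. The point is that the neighbours of the deleted block, namely $\tilde{x}_{i-1}$ and $\tilde{x}_{i+2}$, are exactly as they were adjacent to $\tilde{x}_{i+1}$. Thus merging occurs only if $\tilde{x}_{i+2}=\tilde{x}_{i+1}$, and even then the count drops by exactly $2$. Both claims hold for any alphabet.

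We must also confirm that the ``deleting two symbols from a run of length at least $3$'' subcase of \Cref{lem_l2t2}(i) preserves the run count. It does, because the runs adjacent to the deleted block still differ from it.

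Finally, \Cref{cor_TburstTperiodic} needs the length condition $m\ge 2T-1$ and uses the zero-padding in $p_2$. Both are handled by the statement of \Cref{lem_bursttosingle}, since $n+4\ge 3$.
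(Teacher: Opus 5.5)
Your proposal is correct and follows the same route as the paper's proof. The single-deletion case (exactly one inconsistent pair) is decoded by \Cref{lem_qary1del} with the $b_2$ constraint. In the consistent case, the run count modulo $3$ separates two subcases: run deletions, decoded via the two power-sum run syndromes as in \Cref{thm_binaryl2t2}, and the alternating $2$-burst, decoded via \Cref{cor_TburstTperiodic} with the $b_1$ constraint. Your extra checks fill in details the paper leaves implicit: that $r(\tilde{\bmx})=r(\bmy)+2$ holds in the alternating case over any alphabet, and that $p\ge n+3$ suffices for \Cref{lem_pre1} because $r(\bmy)\le n+2$.
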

\begin{IEEEproof}
  Suppose that $\tilde{\bmx}=\bmu\bmx\bmv$ is transmitted, and that $\cR^\prime$ is the received vector of $2$-mers, where $\bmx\in\cC(n;\bma,\bmb,c,\bmd)$ and $\cR^\prime$ is a subvector of $\cR_{2}(\tilde{\bmx})$ of length $n+1$. Let $\bmy$ be the sequence derived from $\cR^\prime$ as described at the beginning of this subsection. We then show how to decode $\tilde{\bmx}$ from $\bmy$. We only have to consider the case when $\cR^\prime$ is consistent or contains exactly one inconsistent pair of $2$-mers.

  If $\cR^\prime$ contains exactly one inconsistent pair of $2$-mers, then $\bmy$ is obtained from $\tilde{\bmx}$ by deleting one symbol. Thus, \Cref{lem_qary1del}, together with the condition $\operatorname{VT}(\psi(\tilde{\bmx}))\equiv b_2\pmod{(n+4)q}$, allows us to recover $\tilde{\bmx}$.

  Now suppose that $\cR^\prime$ is consistent. If $r(\bmy)\equiv c\pmod{3}$, we can recover $\tilde{\bmx}$ as in the proof of \Cref{thm_binaryl2t2}. If $r(\bmy)\equiv c-2\pmod{3}$, the sequence $\bmy$ is obtained from $\tilde{\bmx}$ by a $2$-burst-deletion in an alternating substring of length at least $3$. This case can be handled by \Cref{cor_TburstTperiodic} together with the condition $\vt\parenv{\psi\parenv{p_2(\tilde{\bmx})}}\equiv b_1\pmod{q^2\ceilenv{\frac{n+4}{2}}}$.
\end{IEEEproof}

\subsection{The Case $(\ell,t)=(3,2)$}\label{subsec_l3t2}
For $\ell\ge 1$ and $q\ge 2$, let $\mathscr{M}_{q,\ell}$ denote the set of all finite vectors of $\ell$-mers over $\Sigma_q$. Given $\cR_1=\parenv{\bmz_1,\ldots,\bmz_m}\in\mathscr{M}_{q,\ell+1}$, we decompose each $(\ell+1)$-mer $\bmz_i$ into its length-$\ell$ prefix  $\bmz_{i,1}$ and length-$\ell$ suffix $\bmz_{i,2}$. This gives the vector of $\ell$-mers $\parenv{\bmz_{1,1},\bmz_{1,2},\ldots,\bmz_{m,1},\bmz_{m,2}}$. For each $1\le i<m$, whenever $\bmz_{i,2}=\bmz_{i+1,1}$, we delete $\bmz_{i+1,1}$. The resulting vector of $\ell$-mers is denoted by $\cR_2$. We write $\cT_{\ell}$ for the mapping that sends $\cR_1$ to $\cR_2$.

\begin{lemma}\label{lem_iteratemer}
   For any $\bmx\in\Sigma_q^n$ with $n\ge\ell+1$, we have $\cT_{\ell}\parenv{\cR_{\ell+1}(\bmx)}=\cR_{\ell}\parenv{\bmx}$. Moreover, if $\cR^\prime$ is obtained from $\cR_{\ell+1}(\bmx)$ by deleting $t$ $(\ell+1)$-mers, then $\cT_{\ell}\parenv{\cR^\prime}$ is obtained from $\cR_{\ell}\parenv{\bmx}$ by deleting \emph{at most} $t$ $\ell$-mers.
\end{lemma}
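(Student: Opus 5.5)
The plan is to track the $\ell$-mers produced by $\cT_{\ell}$ through their starting positions in $\bmx$. Write $\bmw_i\triangleq\bmx_{[i,i+\ell-1]}$ for $1\le i\le n-\ell+1$, so that $\cR_{\ell}(\bmx)=\parenv{\bmw_1,\ldots,\bmw_{n-\ell+1}}$. Write $\bmz_k\triangleq\bmx_{[k,k+\ell]}$ for $1\le k\le n-\ell$, so that $\cR_{\ell+1}(\bmx)=\parenv{\bmz_1,\ldots,\bmz_{n-\ell}}$. The basic observation is that the length-$\ell$ prefix of $\bmz_k$ is $\bmw_k$ and its length-$\ell$ suffix is $\bmw_{k+1}$.

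\textbf{First claim.} Before any merging, the vector of $\ell$-mers built from $\cR_{\ell+1}(\bmx)$ is
$$\parenv{\bmw_1,\bmw_2,\bmw_2,\bmw_3,\ldots,\bmw_{n-\ell},\bmw_{n-\ell+1}}.$$
For every $1\le k<n-\ell$, the suffix of $\bmz_k$ and the prefix of $\bmz_{k+1}$ are both $\bmw_{k+1}$, so $\cT_{\ell}$ deletes the second copy. What remains is exactly $\cR_{\ell}(\bmx)$.

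\textbf{Second claim, setup.} Let $\cR^\prime=\parenv{\bmz_{k_1},\ldots,\bmz_{k_m}}$ with $k_1<\cdots<k_m$ and $m=n-\ell-t$. The unmerged vector is
$$\parenv{\bmw_{k_1},\bmw_{k_1+1},\bmw_{k_2},\bmw_{k_2+1},\ldots,\bmw_{k_m},\bmw_{k_m+1}},$$
and its index sequence is non-decreasing, because $k_j<k_j+1\le k_{j+1}$.

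\textbf{Indices become strictly increasing.} Equality of consecutive indices happens only when $k_{j+1}=k_j+1$. In that case $\bmw_{k_j+1}$ equals the prefix $\bmw_{k_{j+1}}$ of $\bmz_{k_{j+1}}$, so $\cT_{\ell}$ removes one of the two copies. The rule may also remove $\bmw_{k_{j+1}}$ when $k_{j+1}>k_j+1$ but the two $\ell$-mers coincide as words. Removing entries never breaks monotonicity, so in every case the surviving entries carry strictly increasing indices. Hence $\cT_{\ell}\parenv{\cR^\prime}$ is a subvector of $\cR_{\ell}(\bmx)$; this is the step that needs care about such coincidental equalities.

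\textbf{Counting.} At most one entry is removed per adjacent pair $(j,j+1)$, and there are $m-1$ such pairs. Therefore
$$\abs{\cT_{\ell}\parenv{\cR^\prime}}\ge 2m-(m-1)=m+1=(n-\ell+1)-t.$$
A subvector of $\cR_{\ell}(\bmx)$, which has length $n-\ell+1$, of at least this length arises from it by deleting at most $t$ $\ell$-mers, which finishes the proof.
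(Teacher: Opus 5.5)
Your proof is correct, but it is organized differently from the paper's.

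The paper's route is burst by burst. It first treats a single burst of $t$ consecutive deleted $(\ell+1)$-mers by cases:
\begin{itemize}
\item a burst at the start or at the end removes exactly $t$ $\ell$-mers;
\item an interior burst removes $t-1$ $\ell$-mers when $\bmx_{[i+1,i+\ell]}\ne\bmx_{[i+t+1,i+t+\ell]}$, and $t$ when they are equal.
\end{itemize}
It then says that arbitrary deletions split into disjoint bursts, to each of which this applies.

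Your route is global. You label every $\ell$-mer of the unmerged vector by its starting position in $\bmx$. You note that the labels are non-decreasing, with ties exactly at gaps where $k_{j+1}=k_j+1$. The merge rule removes one copy at every tie, and possibly also at a gap where the two words coincide by accident. So the survivors carry strictly increasing labels, which makes $\cT_\ell(\cR')$ a subvector of $\cR_\ell(\bmx)$. The count $2m-(m-1)$ then bounds the number of missing $\ell$-mers by $t$.

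What each approach buys:
\begin{itemize}
\item Yours handles boundary bursts, interior bursts, and bursts separated by a single surviving $(\ell+1)$-mer in one uniform step. This makes explicit why the per-burst analyses compose, which the paper leaves implicit.
\item The paper's gives finer information than the lemma needs: exactly how many $\ell$-mers each burst removes ($t_j-1$ or $t_j$).
\end{itemize}

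One small caveat applies to both proofs. Your count of $m-1$ adjacent pairs needs $m\ge1$. If every $(\ell+1)$-mer is deleted, then $\cT_\ell(\cR')$ is empty and $t+1$ $\ell$-mers are missing. The paper's $i=0$ case has the same blind spot when $t=n-\ell$. This is harmless in the application, because there the first and last $(\ell+1)$-mers are never deleted.
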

\begin{IEEEproof}
    The identity $\cT_{\ell}\parenv{\cR_{\ell+1}(\bmx)}=\cR_{\ell}\parenv{\bmx}$ follows directly from the definitions of $\cR_{\ell+1}(\bmx)$, $\cR_{\ell}(\bmx)$ and $\cT_{\ell}$.

    It remains to prove the second claim. We first consider the case where $\cR^\prime$ is obtained from $\cR_{\ell+1}(\bmx)$ by deleting $t$ consecutive $(\ell+1)$-mers, namely $\bmx_{[i+1,i+\ell+1]},\bmx_{[i+2,i+\ell+2]},\ldots,\bmx_{[i+t,i+\ell+t]}$, for some $0\le i\le n-\ell-t$. If $i=0$, then by the definition of $\cT_{\ell}$, we have $\cT_{\ell}\parenv{\cR^\prime}=\parenv{\bmx_{[t+1,t+\ell]},\bmx_{[t+2,t+\ell+1]},\ldots,\bmx_{[n-\ell+1,n]}}$. Thus,  $\cT_{\ell}\parenv{\cR^\prime}$ is obtained from $\cR_{\ell}(\bmx)$ by deleting the first $t$ $\ell$-mers. The case $i=n-\ell-t$ is analogous, in which $\cT_{\ell}\parenv{\cR^\prime}$ is obtained from $\cR_{\ell}(\bmx)$ by deleting the last $t$ $\ell$-mers.

    Now suppose that $0<i<n-\ell-t$. If $\bmx_{[i+1,i+\ell]}\ne\bmx_{[i+t+1,i+t+\ell]}$, then the definition of $\cT_{\ell}$ gives
    $$
    \cT_{\ell}\parenv{\cR^\prime}=\parenv{\bmx_{[1,\ell]},\ldots,\bmx_{[i+1,i+\ell]},\bmx_{[i+t+1,i+t+\ell]},\ldots,\bmx_{[n-\ell+1,n]}},
    $$
    which is obtained from $\cR_{\ell}(\bmx)$ by deleting $t-1$ $\ell$-mers. On the other hand, if $\bmx_{[i+1,i+\ell]}=\bmx_{[i+t+1,i+t+\ell]}$, then
    $$
    \cT_{\ell}\parenv{\cR^\prime}=\parenv{\bmx_{[1,\ell]},\ldots,\bmx_{[i+1,i+\ell]},\bmx_{[i+t+2,i+t+\ell+1]},\ldots,\bmx_{[n-\ell+1,n]}}.
    $$
    In this case, $\cT_{\ell}\parenv{\cR^\prime}$ is obtained from $\cR_{\ell}(\bmx)$ by deleting $t$ $\ell$-mers.

    Finally, arbitrary $t$ deletions in $\cR_{\ell+1}(\bmx)$ can be decomposed into disjoint bursts of consecutive deletions. Applying the preceding argument to each burst shows that $\cT_\ell(\cR^{\prime})$ is obtained from $\cR_\ell(\bmx)$ by deleting at most $t$ $\ell$-mers. This completes the proof.
\end{IEEEproof}

Recall that, under the $(\ell+1)$-read channel model, the actual transmitted sequence associated with an input sequence $\bmx\in\Sigma_q^n$ is $\bmu\bmx\bmv$, where $\bmu=u_1\cdots u_{\ell+1}$ and $\bmv=v_1\cdots v_{\ell+1}$ are two fixed sequences in $\Sigma_q^{\ell+1}$. The next lemma follows immediately from \Cref{lem_iteratemer}.
\begin{lemma}\label{lem_iteratecode}
    Let $\cC\subseteq\Sigma_q^n$ be a code. Suppose that, for every $\bmx\in\cC$, the code $\cC$ can correct any $t$ deletions in the vector $\cR_{\ell}\parenv{u_2\cdots u_{\ell+1}\bmx v_1\cdots v_{\ell}}$. Then $\cC$ is an $(\ell+1)$-read $t$-deletion correcting code.
\end{lemma}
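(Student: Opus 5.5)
The plan is to reduce the $(\ell+1)$-read channel to the $\ell$-read channel via the map $\cT_{\ell}$ and then invoke the assumed decoding capability of $\cC$. Let $\bmx\in\cC$ be transmitted over the $(\ell+1)$-read channel. The transmitted sequence is $\bmu\bmx\bmv$ with $\bmu,\bmv\in\Sigma_q^{\ell+1}$, so the noiseless output is $\cR_{\ell+1}(\bmu\bmx\bmv)$. Let $\cR^\prime$ be the received vector, obtained by deleting $t$ $(\ell+1)$-mers; by the standing convention, the first and last $(\ell+1)$-mers $\bmu$ and $\bmv$ are not deleted. The decoder computes $\cT_{\ell}(\cR^\prime)$.

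By \Cref{lem_iteratemer}, $\cT_{\ell}(\cR^\prime)$ is obtained from $\cT_{\ell}(\cR_{\ell+1}(\bmu\bmx\bmv))=\cR_{\ell}(\bmu\bmx\bmv)$ by deleting at most $t$ $\ell$-mers.

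The next step is to strip the boundary. The first $(\ell+1)$-mer $\bmu$ is kept in $\cR^\prime$, so $\cT_\ell(\cR^\prime)$ starts with the two $\ell$-mers $u_1\cdots u_\ell$ and $u_2\cdots u_{\ell+1}$. Symmetrically, it ends with $v_1\cdots v_\ell$ and $v_2\cdots v_{\ell+1}$. Deleting the first and last $\ell$-mers of $\cT_\ell(\cR^\prime)$ therefore yields a vector $\cR^{\ast}$. This vector is obtained from $\cR_{\ell}(u_2\cdots u_{\ell+1}\bmx v_1\cdots v_{\ell})$ by deleting at most $t$ $\ell$-mers, and in it the first and last $\ell$-mers $u_2\cdots u_{\ell+1}$ and $v_1\cdots v_\ell$ survive. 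This matches the $\ell$-read setting with known prefix $u_2\cdots u_{\ell+1}$ and suffix $v_1\cdots v_\ell$. I must check carefully that the deletions counted in \Cref{lem_iteratemer} do not include these boundary $\ell$-mers. They do not: a kept $(\ell+1)$-mer contributes both of its $\ell$-mers, possibly merged with a neighbour but never removed.

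The remaining issue is that the number of deletions may be strictly less than $t$, say $t^\prime\le t$, while the hypothesis speaks of correcting $t$ deletions. The decoder knows $t^\prime$ from the length of $\cR^\ast$. I would handle this by noting that a code correcting $t$ deletions corrects any $t^\prime\le t$. Concretely, delete $t-t^\prime$ further interior $\ell$-mers from $\cR^\ast$ (when $n$ is large enough there are interior ones) to get a subvector with exactly $t$ deletions. One can also argue combinatorially: if two codewords were confusable under $t^\prime$ deletions, deleting the same extra positions keeps them confusable under $t$ deletions.

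With exactly $t$ deletions, apply the decoder of $\cC$ to recover $\bmx$, which shows that $\cC$ is an $(\ell+1)$-read $t$-deletion correcting code. Efficiency is inherited, since $\cT_\ell$ is computable in linear time. I expect the main obstacle to be this padding step: the extra deletions must keep the endpoints intact, and the consistency of the argument rests on the combinatorial (rather than algorithmic) version of the claim.
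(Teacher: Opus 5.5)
Your proposal is correct and takes the paper's route. The paper only asserts that the lemma follows immediately from \Cref{lem_iteratemer}: apply $\cT_\ell$ to the received vector and hand the result to the $\ell$-read decoder. Your boundary-stripping step (removing $u_1\cdots u_\ell$ and $v_2\cdots v_{\ell+1}$) and your padding from $t^\prime$ to exactly $t$ deletions fill in details the paper leaves implicit. Your worry about the padding step is unnecessary: it works algorithmically, not just combinatorially, because deleting $t-t^\prime$ interior entries of $\cR^\ast$ gives a genuine $t$-deletion subvector of $\cR_\ell(u_2\cdots u_{\ell+1}\bmx v_1\cdots v_\ell)$ with both endpoints intact, so the given decoder applies directly.
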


Applying this lemma together with \Cref{thm_binaryl2t2}, we obtain the following result.
\begin{corollary}\label{cor_bianryl3t2}
    There is a binary $3$-read $2$-deletion correcting code of length $n$ whose redundancy is at most $2\log(n)+O(1)$.
\end{corollary}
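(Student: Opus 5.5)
The plan is to combine \Cref{lem_iteratecode} (with $\ell=2$) and \Cref{thm_binaryl2t2}. In the $3$-read model, the input $\bmx\in\Sigma_2^n$ is padded to $\bmu\bmx\bmv$ with known $3$-mers $\bmu=u_1u_2u_3$ and $\bmv=v_1v_2v_3$. \Cref{lem_iteratecode} reduces the question to a $2$-read one: it suffices to find a code $\cC\subseteq\Sigma_2^n$ that corrects any two deletions in $\cR_2\parenv{u_2u_3\bmx v_1v_2}$.

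The first step is to note that $u_2u_3\bmx v_1v_2$ is exactly the padded sequence $\tilde{\bmx}$ of the $2$-read channel, with known prefix $\bmu'=u_2u_3$ and known suffix $\bmv'=v_1v_2$. So I take $\cC=\cC(n;\bma,b)$ from \Cref{thm_binaryl2t2}, with run-syndromes $F_{\textup{run},k}$ and $r(\cdot)\bmod 3$ computed on $\bmu'\bmx\bmv'$. That theorem then says $\cC$ corrects two deletions in $\cR_2\parenv{\bmu'\bmx\bmv'}$.

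The one point needing care is the convention that the first and last $\ell$-mers are never deleted. By \Cref{lem_iteratemer}, $t$ deletions of $3$-mers become \emph{at most} $t$ deletions of $2$-mers, and the boundary $2$-mers $u_2u_3$ and $v_1v_2$ come from the undeleted first and last $3$-mers. Hence they survive in $\cT_2(\cR')$.

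A second point is that fewer than two deletions may occur. The decoder knows the number of deletions from the length of $\cT_2(\cR')$. Zero deletions is trivial. A single $2$-read deletion is handled inside the proof of \Cref{thm_binaryl2t2}: by \Cref{lem_burstdel} it is either already consistent-recoverable or a one-symbol deletion from a run, and the syndrome $F_{\textup{run},1}$ corrects it. If a cleaner argument is wanted, I would observe that a code correcting $t$ deletions also corrects fewer.

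For the redundancy, I pick $\bma$ and $b$ by pigeonhole as in \Cref{thm_binaryl2t2}, which gives $2\log p+\log 3=2\log n+O(1)$ with $p$ the smallest prime exceeding $2n+8$.

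The main obstacle is checking that the ``at most $t$'' conclusion of \Cref{lem_iteratemer} is compatible with the $2$-read decoder. I expect to resolve it by the length observation above.
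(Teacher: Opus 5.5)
Your proposal is correct and follows the paper's own argument: the paper obtains the corollary by applying \Cref{lem_iteratecode} with $\ell=2$ to the code of \Cref{thm_binaryl2t2}, which keeps its redundancy $2\log p+\log 3=2\log n+O(1)$. Your additional checks fill in details the paper leaves implicit and are valid: the boundary $2$-mers $u_2u_3$ and $v_1v_2$ survive in $\cT_2(\cR^\prime)$, and the ``at most $t$'' conclusion of \Cref{lem_iteratemer} is harmless, since the decoder learns the number of deletions from the length and can handle fewer deletions directly (via $F_{\textup{run},1}$) or by deleting extra interior $2$-mers to reduce to exactly two.
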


Similarly, combining \Cref{lem_iteratecode} with \Cref{thm_qaryl2t2} yields a $q$-ary $3$-read $2$-deletion correcting code of length $n$ with redundancy at most $4\log(n)+O(1)$ for every $q>2$. In what follows, we show that this redundancy can be reduced to $3\log(n)+O(1)$. The key point is that when $\ell\ge3$, the second situation in \Cref{lem_l2t2} (ii) never happens.

Let $\bmx\in\Sigma_q^n$ and write $\cR_{3}\parenv{\tilde{\bmx}}=\parenv{\bms_0,\bms_1,\ldots,\bms_{n+3}}$, where $\bms_i=\tilde{\bmx}_{[i-2,i]}$ for all $0\le i\le n+3$. Suppose that $\cR^\prime$ is obtained from $\cR_{3}\parenv{\tilde{\bmx}}$ by deleting two $3$-mers $\bms_i$ and $\bms_j$, where $1\le i<j\le n+2$. Denote the received vector by $\cR^\prime=\parenv{\bms^\prime_0,\bms^\prime_1,\ldots,\bms^\prime_{n+1}}$.

By \Cref{thm_multiburst}, one can insert at most two $3$-mers into $\cR^\prime$ to get a consistent vector $\cR^{\prime\prime}$ of $3$-mers. Let $\bmy=\cR_{2}^{-1}\parenv{\cR^{\prime\prime}}$.
As in the case $(\ell,t)=(2,2)$, when $j\ge i+2$, the relationship between $\tilde{\bmx}$ and $\bmy$ falls into the following cases.
\begin{itemize}
    \item If $\cR^\prime$ contains two inconsistent pairs of $3$-mers, then $\bmy=\tilde{\bmx}$.
    \item If $\cR^\prime$ contains exactly one inconsistent pair of $3$-mers, then $\bmy$ is obtained from $\tilde{\bmx}$ by deleting one symbol from a run of length at least $3$.
    \item If $\cR^\prime$ is consistent, then $\bmy$ is obtained from $\tilde{\bmx}$ by deleting one symbol from each of two runs of length at least $3$.
\end{itemize}

Now suppose that $j=i+1$. In this case, the vector $\cR^\prime$ contains at most one pair of inconsistent $3$-mers. There are three possibilities.
\begin{itemize}
    \item If $\cR^\prime$ is consistent, then $\bmy$ is obtained from $\tilde{\bmx}$ by a $2$-burst-deletion in a substring of length at least $4$. This substring is either a run or an alternating substring.
    \item If $\cR^\prime$ is inconsistent and $\abs{\bmy}=n+5$ ($=\abs{\bmx}-1$), then $\bmy$ is obtained by deleting one symbol from a run of length at least $2$.
    \item If $\cR^\prime$ is inconsistent and $\abs{\bmy}=n+6$, then $\bmy=\tilde{\bmx}$.
\end{itemize}

The preceding discussion yields the following lemma.
\begin{lemma}\label{lem_l3t2}
    Let $\tilde{\bmx}$ and $\bmy$ be as above. Since $|\tilde{\bmx}|=n+6$, the following statements hold.
    \begin{enumerate}[$(i)$]
        \item If $\abs{\bmy}=n+4$, then $\bmy$ is obtained from $\tilde{\bmx}$ either by deleting one symbol from each of two runs of length at least $3$, or by deleting two symbols from a run of length at least $4$, or by a $2$-burst-deletion in an alternating substring of length at least $4$.
        \item If $\abs{\bmy}=n+5$, then $\bmy$ is obtained from $\tilde{\bmx}$ by deleting one symbol from a run of length at least $2$.
        \item If $\abs{\bmy}=n+6$, then $\bmy=\tilde{\bmx}$.
    \end{enumerate}
\end{lemma}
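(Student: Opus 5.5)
The plan is to derive Lemma~\ref{lem_l3t2} from Theorem~\ref{thm_multiburst} and Lemma~\ref{lem_burstdel} with $\ell=3$, $t=2$, using the case analysis just given. The key observation is that, by Lemma~\ref{lem_burstdel}~(i) and Theorem~\ref{thm_multiburst}~(i), $\abs{\bmy}=\abs{\tilde{\bmx}}-(t-s)=n+6-(2-s)$, where $s$ is the total number of inserted $3$-mers. Hence $\abs{\bmy}=n+4,n+5,n+6$ correspond exactly to $s=0,1,2$. It therefore suffices to classify, for each value of $s$, how $\bmy$ arises from $\tilde{\bmx}$. I would split according to whether the two deleted $3$-mers are separated ($j\ge i+2$, giving two bursts with $t_1=t_2=1$) or adjacent ($j=i+1$, giving one burst with $t=2$).

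\textbf{Case $s=2$.} In the separated case we need $s_1=s_2=1$. In the adjacent case we need $s=2=\min\{\ell-1,t\}$. Either way the deleted substring has length $t-s=0$, so $\bmy=\tilde{\bmx}$. This gives (iii).

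\textbf{Case $s=1$.} In the separated case, one burst has $s_j=0<\min\{2,1\}$. By Theorem~\ref{thm_multiburst}~(ii),(iii) (with $t_j=1\le\ell-2$), one symbol is deleted from a maximal substring of period $1$ and length at least $\ell-1+t_j-2s_j=3$, that is, from a run of length at least $3\ge2$. In the adjacent case, $s=1<\min\{2,2\}$, so Lemma~\ref{lem_burstdel}~(ii) gives a deletion of one symbol inside a periodic substring of length at least $\ell-1+t-2s=2$ with minimum period at most $1$, i.e., a run of length at least $2$. Both subcases give (ii).

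\textbf{Case $s=0$.} In the separated case, each burst deletes one symbol from a run of length at least $3$. The two runs may coincide: by the Remark following Theorem~\ref{thm_multiburst}, the deletions are applied sequentially, which gives either one symbol from each of two runs, or two symbols from one run. In the latter subcase I need that run to have length at least $4$. I would verify this directly from the condition $\tilde{x}_{i-1}=\tilde{x}_i$ (equation~\eqref{eq_burstdel} with $r=0$, i.e.\ $\tilde{x}_{i-2+1}\cdots\tilde{x}_i$ equals its shift) applied at both burst positions. Each condition forces a length-$3$ window $\tilde{x}_{i-1}\tilde{x}_i\tilde{x}_{i+1}$ to be constant. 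Two distinct windows centred at $i<j$ inside the same run cover at least four positions, so the run has length at least $4$.

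In the adjacent case, Lemma~\ref{lem_burstdel}~(ii) shows that a $2$-burst is deleted inside a maximal periodic substring of length at least $\ell-1+t=4$ with minimum period at most $2$. If the minimum period is $1$, this is two symbols deleted from a run of length at least $4$. If it is $2$, this is a $2$-burst-deletion in an alternating substring of length at least $4$. This gives (i).

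The main obstacle is the bookkeeping in the $s=0$, same-run subcase: the sequential interpretation must be justified, and the length bound of $4$ must be confirmed rather than the weaker bound $3$ that comes from applying Theorem~\ref{thm_multiburst}~(ii) to each burst separately. I would resolve it by the explicit window argument above.
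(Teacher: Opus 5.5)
Your proposal is correct and follows essentially the paper's route: the paper also derives the lemma by splitting into separated ($j\ge i+2$) and adjacent ($j=i+1$) deletions and reading off the outcomes from Theorem~\ref{thm_multiburst} and Lemma~\ref{lem_burstdel}, with $\abs{\bmy}=n+4+s$ linking the three length cases. Your window argument for the separated, same-run subcase supplies a step the paper leaves implicit. In the $\bms_i$ indexing the two constant windows are $\tilde{\bmx}_{[i-2,i]}$ and $\tilde{\bmx}_{[j-2,j]}$, and since $j\ge i+2$ they actually force a run of length at least $5$, so the required bound $4$ holds with room to spare.
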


This lemma already implies a $q$-ary $3$-read $2$-deletion correcting code via a modification of the construction of \Cref{thm_qaryl2t2}: one may remove the constraint $\vt\parenv{\psi(\tilde{\bmx})}\equiv b_2\pmod{(n+4)q}$, since the second situation in \Cref{lem_l2t2} (ii) cannot occur.

We next give an alternative construction for $q>2$. This construction is simpler and achieves slightly less redundancy. Recall that, in the $2$-read case, one possible situation is that $\bmy$ is obtained from $\tilde{\bmx}$ by deleting two consecutive symbols from an alternating substring of length at least $3$. There, \Cref{cor_TburstTperiodic} was used to recover $\tilde{\bmx}$. In the present $3$-read setting, however, \Cref{lem_l3t2} (i)  shows that the corresponding alternating substring has length at least $4$. Hence, after two consecutive deletions, at least two symbols of this alternating substring remain. Therefore, it suffices to identify which maximal alternating substring suffered the deletions.

For a sequence $\bmz$, let $m(\bmz)$ denote the number of maximal alternating substrings in $\bmz$. If $m(\tilde{\bmx})=0$, set $f_{\textup{alt}}(\tilde{\bmx},i)=0$ for all $-2\le i\le n+3$. If $m(\tilde{\bmx})\ge 1$, let $\tilde{\bmx}_{[i_{j,1},i_{j,2}]}$ be the $j$-th maximal alternating substring of $\tilde{\bmx}$, indexed so that $i_{j,1}<i_{j+1,1}$ and $i_{j,2}<i_{j+1,2}$ for every $1\le j< m(\tilde{\bmx})$. For all $-\ell+1\le i\le n+\ell$, set
\begin{equation*}
    f_{\textup{alt}}(\tilde{\bmx},i)=
    \begin{cases}
        j,\mbox{ if }i\in[i_{j,1},i_{j,2}]\setminus[i_{j-1,1},i_{j-1,2}],\\
        0,\mbox{ else}.
    \end{cases}
\end{equation*}
Define $F_{\textup{alt}}\parenv{\tilde{\bmx}}\triangleq\sum_{i=-2}^{n+3}f_{\textup{alt}}(\tilde{\bmx},i)$.

Let $F_{\textup{run},1}\parenv{\tilde{\bmx}}$ and $F_{\textup{run},2}\parenv{\tilde{\bmx}}$ be defined as in \Cref{subsec_l2t2}.
\begin{theorem}\label{thm_qaryl3t2}
    For $q>2$ and $n\ge2$, let $p\ge n+5$ be the smallest prime. Let $N=n+3$ if $n$ is even, and $N=n+4$ otherwise. For given $0\le a_1,a_2<p$, $0\le b<N$, and $0\le c<3$, set $\bma=\parenv{a_1,a_2}$. Define the code
     \begin{equation*}
        \cC(n;\bma,b,c)=\mathset{\bmx\in\Sigma_q^n:
         \begin{array}{c}
        F_{\textup{run},k}\parenv{\tilde{\bmx}}\equiv a_k\pmod{p},\forall k=1,2\\
        F_{\textup{alt}}\parenv{\tilde{\bmx}}\equiv b\pmod{N}\\
        r(\tilde{\bmx})\equiv c\pmod{3}
        \end{array}
        }.
    \end{equation*}
    This code can correct two deletions in $\cR_{3}(\tilde{\bmx})$. Moreover, there are some $\bma$, $b$ and $c$ such that $\rho\parenv{\cC(n;\bma,b,c)}\le 2\log\parenv{p}+\log(N)+\log 3=3\log n+O(1)$.
\end{theorem}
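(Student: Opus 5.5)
The decoder first forms $\bmy$ from $\cR^\prime$ as described before \Cref{lem_l3t2}. Then it splits into cases according to $\abs{\bmy}\in\{n+4,n+5,n+6\}$, which \Cref{lem_l3t2} makes available.

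If $\abs{\bmy}=n+6$, then $\tilde{\bmx}=\bmy$ and there is nothing to do.

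If $\abs{\bmy}=n+5$, then $\bmy$ arises from $\tilde{\bmx}$ by deleting one symbol from a run, so $r(\tilde{\bmx})=r(\bmy)\le n+5\le p$. The index $i$ of the affected run equals $F_{\textup{run},1}(\tilde{\bmx})-F_{\textup{run},1}(\bmy)$. This lies in $[1,n+5]$ and is therefore determined modulo $p$. Extending the $i$-th run of $\bmy$ by one symbol recovers $\tilde{\bmx}$, exactly as in \Cref{thm_binaryl2t2}.

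If $\abs{\bmy}=n+4$, I first use $r(\cdot)\bmod 3$ to separate the run case from the alternating case. Deletions inside runs of length at least $3$ (or two from a run of length at least $4$) leave the number of runs unchanged. A $2$-burst in an alternating substring of length at least $4$ removes exactly two runs. The reason is that the deleted pair is $ab$ with $a\ne b$, and the neighbours on both sides of the remaining alternating substring are still inside it (at least two symbols remain). One must also use that the boundaries $\bmu,\bmv$ are not touched. So $r(\tilde{\bmx})-r(\bmy)\in\{0,2\}$, and comparing $r(\bmy)$ with $c$ modulo $3$ decides which case holds.

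\emph{Run case.} The two affected run indices $i\le j$ satisfy the power-sum system \eqref{eq_l2t2} modulo $p$. Since $r(\tilde{\bmx})\le n+4<p$, \Cref{lem_pre1} recovers $\{i,j\}$, and we extend those runs.

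\emph{Alternating case.} Here I must show that $\tilde{\bmx}$ and $\bmy$ have the same number of maximal alternating substrings, in the same order, and that only the affected one is shortened, by exactly one period.
\begin{enumerate}[(a)]
\item The affected alternating substring $\bmp$ has length at least $4$, so its image $\bmp^\prime$ has length at least $2$.
\item Deleting a complete period from a $2$-periodic word is equivalent to truncation (the second fact in the proof of \Cref{lem_scheckpattern}). Hence $\bmp^\prime$ is still $2$-periodic with two distinct symbols, so its minimum period is $2$.
\item By phase preservation, $\bmp^\prime$ stays maximal.
\item Two distinct maximal alternating substrings overlap in at most $2+2-2-1=1$ position by \Cref{lem_periodoverlap}. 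So the deletion can be realized away from any other alternating substring, and those are unchanged, as in the Claim of \Cref{lem_scheckpattern}. Reinsertion gives the converse direction.
\end{enumerate}
With this correspondence, only the index $j$ of the affected substring changes the syndrome. The overlap-avoiding definition of $f_{\textup{alt}}$ means $\bmp$ contributes $j$ for each of its positions not shared with the previous pattern. Deleting two such positions reduces $F_{\textup{alt}}$ by exactly $2j$, so $F_{\textup{alt}}(\tilde{\bmx})-F_{\textup{alt}}(\bmy)=2j$.

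The main obstacle is making this count precise and recovering $j$ from $2j\bmod N$. The count needs the deletion to be realizable in the non-shared part of $\bmp$. Since $\abs{\bmp}\ge4$ and overlaps are at most one position on each side, at least two interior positions exist, so this works.

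For the modulus, the number of maximal alternating substrings is at most about $(n+6)-1$, since each has length at least $2$ and consecutive ones overlap in at most one position. So $j\le n+5$, and we need $2j\bmod N$ to identify $j$. When $N$ is odd, $2$ is invertible modulo $N$, which is why $N$ is chosen odd ($n+3$ or $n+4$). I would try to sharpen the bound on $j$ using the fixed ends $\bmu,\bmv$: the count of maximal alternating substrings meeting the region where deletions can occur should be at most $N$. If that fails, I would enlarge $N$ to the next odd integer, which changes the redundancy only by $O(1)$.

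Once $j$ is known, we extend the $j$-th maximal alternating substring of $\bmy$ by one period in its phase. The redundancy bound then follows by pigeonhole over the choices of $\bma,b,c$.
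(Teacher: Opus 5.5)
Your proposal follows the paper's proof almost step for step: you split by $\abs{\bmy}$, use $r(\cdot)\bmod 3$ to separate run deletions from the alternating burst, use power sums in the run case, and use invariance of the maximal alternating substrings to get $F_{\textup{alt}}(\tilde{\bmx})-F_{\textup{alt}}(\bmy)=2j$. There is one genuine gap, in the last step, where $j$ must be recovered from $2j \bmod N$.

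You bound $j$ by counting maximal alternating substrings in $\tilde{\bmx}$, which has length $n+6$, and obtain only $j\le n+5$. For the $N$ fixed in the statement ($N=n+3$ or $n+4$) this does not determine $j$: for example, $j=1$ and $j=N+1$ both lie in $[1,n+5]$ and have the same residue. Neither fallback closes this.
\begin{itemize}
\item The appeal to the fixed ends $\bmu,\bmv$ is not an argument. They are arbitrary, and alternating substrings inside $\bmu$ still count toward the index $j$.
\item Enlarging $N$ changes the code, so it proves a different statement from the one given.
\end{itemize}

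The missing observation uses something you already have. Your correspondence gives $m(\tilde{\bmx})=m(\bmy)$, so do the counting in $\bmy$ instead.
\begin{itemize}
\item Two distinct maximal alternating substrings of $\bmy$ cannot share a starting position, since otherwise one would contain the other.
\item Each has length at least $2$, so the starting positions are distinct elements of $[1,\abs{\bmy}-1]$.
\item Hence $j\le m(\bmy)\le \abs{\bmy}-1=n+3\le N$.
\end{itemize}
Since $N$ is odd, $2$ is invertible modulo $N$. So $j$ is the unique element of $[1,N]$ with $2j\equiv b-F_{\textup{alt}}(\bmy)\pmod N$. This is exactly how the paper concludes, and with it your argument proves the theorem as stated.
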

\begin{IEEEproof}
    Suppose that $\tilde{\bmx}=\bmu\bmx\bmv$ is transmitted. Let $\cR^\prime$ be the received vector of $3$-mers, where $\bmx\in\cC(n;\bma,b,c)$ and $\cR^\prime$ is a subvector of $\cR_{3}(\tilde{\bmx})$ of length $n+2$. Let $\bmy$ be the sequence derived from $\cR^\prime$ as described above. We show how to recover $\tilde{\bmx}$ from $\bmy$.

    By \Cref{lem_l3t2}, if $|\bmy|=n+6$, then $\bmy=\tilde{\bmx}$. Therefore, it remains to consider the cases $|\bmy|=n+5$ and $|\bmy|=n+4$.

    First, suppose that $\abs{\bmy}=n+5$. By \Cref{lem_l3t2}, the sequence $\bmy$ is obtained from $\tilde{\bmx}$ by deleting one symbol from a run of length at least $2$. Hence $r(\tilde{\bmx})=r(\bmy)\le n+5$. As in the proof of \Cref{thm_binaryl2t2}, the condition $F_{\textup{run},1}(\tilde{\bmx})\equiv a_1\pmod{p}$ uniquely identifies the run in which the deletion occurred. Once this run is known, $\tilde{\bmx}$ is recovered by extending the corresponding run of $\bmy$ by one symbol.

    Now suppose that $\abs{\bmy}=n+4$. If $r(\bmy)\equiv c\pmod{3}$, then $\bmy$ is obtained from $\tilde{\bmx}$ by one deletion from each of two runs of length at least $3$, or by deleting two symbols from a run of length at least $4$. In this case, $\tilde{\bmx}$ can be recovered as in the proof of \Cref{thm_binaryl2t2}, using the two constraints $F_{\textup{run},1}\parenv{\tilde{\bmx}}\equiv a_1\pmod{p}$ and $F_{\textup{run},2}\parenv{\tilde{\bmx}}\equiv a_2\pmod{p}$.

    It remains to handle the case $r(\bmy)\equiv c-2\pmod{3}$. By \Cref{lem_l3t2}, the sequence $\bmy$ is obtained from $\tilde{\bmx}$ by deleting two consecutive symbols from an alternating substring of length at least $4$. Since at least two symbols of this alternating substring remain after the deletions, the deletion does not change the number or the order of maximal alternating substrings. Hence, we have $m(\bmy)=m(\tilde{\bmx})\ge1$. Moreover, every alternating substring has length at least $2$, and any two distinct maximal alternating substrings overlap in at most one position. Therefore, $m(\bmy)\le\abs{\bmy}-1=n+3$. Suppose that the two deletions occurred in the $i$-th maximal alternating substring. Then $2i=F_{\textup{alt}}\parenv{\tilde{\bmx}}-F_{\textup{alt}}(\bmy)\equiv b-F_{\textup{alt}}(\bmy)\pmod{N}$. Since $N$ is odd, $2$ is invertible modulo $N$. Together with the bound $1\le i\le n+3\le N$, this congruence uniquely determines $i$. Once $i$ is known, the sequence $\tilde{\bmx}$ can be recovered by extending the $i$-th maximal alternating substring of $\bmy$ by two symbols. This completes the proof.
\end{IEEEproof}

\subsection{The Case $(\ell,t)=(3,3)$}
In this subsection, we aim to construct $3$-read $3$-deletion correcting codes.

Let $\bmx\in\Sigma_q^n$ and write $\cR_{3}\parenv{\tilde{\bmx}}=\parenv{\bms_0,\bms_1,\ldots,\bms_{n+3}}$, where $\bms_i=\tilde{\bmx}_{[i-2,i]}$ for all $0\le i\le n+3$. Suppose that $\cR^\prime$ is obtained from $\cR_{3}\parenv{\tilde{\bmx}}$ by deleting three $3$-mers $\bms_i$, $\bms_j$ and $\bms_k$, where $1\le i<j<k\le n+2$. Denote the received vector by $\cR^\prime=\parenv{\bms^\prime_0,\bms^\prime_1,\ldots,\bms^\prime_{n}}$. By \Cref{thm_multiburst}, one can insert at most three $3$-mers into $\cR^\prime$ to get a consistent vector $\cR^{\prime\prime}$ of $3$-mers. Let $\bmy=\cR_{3}^{-1}\parenv{\cR^{\prime\prime}}$.

We first suppose that $j\ge i+2$ and $k\ge j+2$. Then $\cR^\prime$ contains at most $3$ inconsistent pairs of adjacent $3$-mers. By \Cref{thm_multiburst}, if $\cR^\prime$ contains $s$ inconsistent pairs, then the sequence $\bmy$ has length $n+3+s$ and is obtained from $\tilde{\bmx}$ by deleting one symbol from each of $3-s$ runs of length at least $3$.

Next, suppose that $j=i+1$ and $k\ge j+2$; the case $j\ge i+2$ and $k=j+1$ is analogous. Then $\parenv{\bms_{i-1}^\prime,\bms_{i}^{\prime}}$ and $\parenv{\bms_{k-1},\bms_{k+1}}$ are the only two possibly inconsistent pairs of adjacent $3$-mers in $\cR^\prime$. The relationship between $\tilde{\bmx}$ and $\bmy$ falls into the following cases.
\begin{itemize}
    \item If both $\parenv{\bms_{i-1},\bms_{i+2}}$ and $\parenv{\bms_{k-1},\bms_{k+1}}$ are inconsistent, then either $\bmy=\tilde{\bmx}$ or $\bmy$ is obtained from $\tilde{\bmx}$ by deleting one symbol from a run of length at least $2$.
    \item If $\parenv{\bms_{i-1},\bms_{i+2}}$ is consistent and $\parenv{\bms_{k-1},\bms_{k+1}}$ is inconsistent, then $\bmy$ is obtained from $\tilde{\bmx}$ by a $2$-burst-deletion in a substring of length at least $4$. This substring is either a run or an alternating substring.
    \item If $\parenv{\bms_{i-1},\bms_{i+2}}$ is inconsistent and $\parenv{\bms_{k-1},\bms_{k+1}}$ is consistent, then $\bmy$ is obtained from $\tilde{\bmx}$ either by deleting one symbol from a run of length at least $3$, or by deleting one symbol from a run of length at least $2$ and deleting one symbol from a run of length at least $3$.
    \item If both $\parenv{\bms_{i-1},\bms_{i+2}}$ and $\parenv{\bms_{k-1},\bms_{k+1}}$ are consistent, then $\bmy$ is obtained from $\tilde{\bmx}$ by deleting two consecutive symbols from a $2$-periodic substring of length at least $4$ and deleting one symbol from a run of length at least $3$.
\end{itemize}

Finally, suppose that $j-i=k-j=1$. In other words, the deleted $3$-mers are $\bms_i$, $\bms_{i+1}$ and $\bms_{i+2}$. Then $\cR^\prime$ contains at most one inconsistent pair of adjacent $3$-mers. There are three possibilities.
\begin{itemize}
    \item If $\cR^\prime$ is consistent, then $\bmy$ is obtained from $\tilde{\bmx}$ by a $3$-burst-deletion in a $3$-periodic substring of length at least $5$.
    \item If $\cR^\prime$ is inconsistent and $\abs{\bmy}=n+4$, then $\bmy$ is obtained from $\tilde{\bmx}$ by a $2$-burst-deletion in a substring of length at least $3$. This substring is either a run or an alternating substring.
    \item If $\cR^\prime$ is inconsistent and $\abs{\bmy}=n+5$, then $\tilde{x}_{i-1}\ne\tilde{x}_{i+1}$ and $\bmy$ is obtained from $\tilde{\bmx}$ by deleting $\tilde{x}_i$.
\end{itemize}

The preceding analysis gives the following result.
\begin{lemma}\label{lem_l3t3}
    Let $\tilde{\bmx}$, $\cR^\prime$ and $\bmy$ be as above. The following statements hold.
    \begin{enumerate}[$(i)$]
        \item If $\abs{\bmy}=n+6$ ($=\abs{\tilde{\bmx}}$), then $\bmy=\tilde{\bmx}$.
        \item If $\abs{\bmy}=n+5$, then $\bmy$ is obtained from $\tilde{\bmx}$ either by deleting one symbol from a run of length at least $2$, or by deleting $\tilde{x}_{i}$ for some $1\le i\le n$ with $\tilde{x}_{i-1}$, $\tilde{x}_i$ and $\tilde{x}_{i+1}$ being mutually distinct (only possible when $q>2$).
        \item If $\abs{\bmy}=n+4$, then $\bmy$ is obtained from $\tilde{\bmx}$ either by deleting one symbol from each of two runs of length at least $2$, or by deleting two symbols from a run of length at least $3$, or by a $2$-burst-deletion in an alternating substring of length at least $3$.
        \item If $\abs{\bmy}=n+3$, then $\bmy$ is obtained from $\tilde{\bmx}$ in one of the following ways:
        \begin{itemize}
            \item by deleting one symbol from each of three runs of length at least $3$;
            \item by deleting two symbols from a run of length at least $4$ and deleting one symbol from a run of length at least $3$;
            \item by deleting two consecutive symbols from an alternating substring of length at least $4$ and deleting one symbol from a run of length at least $3$;
            \item by replacing $\tilde{\bmx}_{[i-1,i+3]}$ with $\tilde{x}_{i-1}\tilde{x}_{i}$ for some $1\le i\le n$ satisfying $\tilde{\bmx}_{[i-1,i+3]}\in\mathset{aaaaa,aabaa,abbab,abaab,abcab}$, where $a,b,c$ are mutually distinct symbols from $\Sigma_q$.
        \end{itemize}
    \end{enumerate}
\end{lemma}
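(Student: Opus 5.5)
The plan is to organize everything by how the three deleted $3$-mers $\bms_i,\bms_j,\bms_k$ split into maximal bursts: three isolated deletions, one burst of length $2$ plus an isolated one, or a single burst of length $3$. For each pattern we apply \Cref{lem_burstdel} separately to each burst, as allowed by \Cref{thm_multiburst}. This tells us the number $s_j$ of inserted $3$-mers and the exact deleted substring $\tilde{\bmx}_{[i_j+1,i_j+t_j-s_j]}$. Since $|\bmy|=n+6-\sum_j(t_j-s_j)$, we then group the resulting outcomes by $|\bmy|$. Statement (i) is immediate, because $|\bmy|=n+6$ forces every $t_j=s_j$, and then $\bmy=\tilde{\bmx}$ by \Cref{lem_burstdel}(i).

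\textbf{Bursts of length $1$.} With $\ell=3$ and $t_j=1$, either $s_j=1$ and nothing is deleted, or $s_j=0$. In the latter case \eqref{eq_burstdel} with $r=0$ gives $\tilde x_{i-1}\tilde x_i=\tilde x_i\tilde x_{i+1}$. So one symbol is deleted from a run of length at least $3$.

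\textbf{Bursts of length $2$.}
\begin{itemize}
\item If $s_j=0$, \eqref{eq_burstdel} with $r=0$ gives $\tilde{\bmx}_{[i-1,i]}=\tilde{\bmx}_{[i+1,i+2]}$. Then $\tilde{\bmx}_{[i-1,i+2]}$ is $2$-periodic of length $4$, and its minimum period is $1$ or $2$. This gives two deletions from a run of length at least $4$, or a $2$-burst in an alternating substring of length at least $4$.
\item If $s_j=1$, then $\tilde x_i=\tilde x_{i+1}$, and exactly one symbol is deleted from a run of length at least $2$.
\end{itemize}

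\textbf{Bursts of length $3$.} Here $t=\ell$, so we must redo \eqref{eq_burstdel} by hand.
\begin{itemize}
\item If $s=0$, the condition is $\tilde{\bmx}_{[i-1,i]}=\tilde{\bmx}_{[i+2,i+3]}$. The deletion is $\tilde{\bmx}_{[i+1,i+3]}$ inside the $3$-periodic window $\tilde{\bmx}_{[i-1,i+3]}$. Equivalently, $\tilde{\bmx}_{[i-1,i+3]}$ is replaced by $\tilde x_{i-1}\tilde x_i$.
\item If $s=1$, the condition is $\tilde x_i=\tilde x_{i+2}$ and $\tilde x_{i-1}\neq\tilde x_{i+1}$ (or the $s=0$ condition fails). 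The deletion is $\tilde{\bmx}_{[i+1,i+2]}$ within $\tilde x_i\tilde x_{i+1}\tilde x_{i+2}$. If $\tilde x_{i+1}=\tilde x_i$, this is two symbols from a run of length at least $3$. Otherwise it is a $2$-burst in an alternating substring of length at least $3$.
\item If $s=2$, exactly $\tilde{\bmx}_{[i+1,i+1]}$ is deleted, i.e.\ the single symbol $\tilde x_{i+1}$. If it equals a neighbour, this is a deletion from a run of length at least $2$. Otherwise $\tilde x_i,\tilde x_{i+1},\tilde x_{i+2}$ are mutually distinct, which needs $q>2$. After reindexing this is the exceptional case in (ii).
\end{itemize}

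\textbf{Combining by length of $\bmy$.}
\begin{itemize}
\item $|\bmy|=n+5$: exactly one symbol is deleted. Every source above yields either a run deletion (lengths $\ge2$ suffice) or the mutually-distinct case, which gives (ii).
\item $|\bmy|=n+4$: two symbols are deleted. The possible sources are two isolated run deletions, or an isolated run deletion plus an $s=1$ burst of length $2$. They also include a $2$-burst with $s_j=0$, or a $3$-burst with $s=1$. Weakening all length bounds to the smallest ones occurring gives the three alternatives of (iii).
\item $|\bmy|=n+3$: all three deletions survive. The possibilities are three isolated deletions (three run deletions, lengths $\ge3$); or a $2$-burst with $s=0$ plus an isolated deletion with $s=0$ (first three bullets of (iv)); or a $3$-burst with $s=0$.
\end{itemize}

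\textbf{Main obstacle.} The hard step is the $3$-burst with $s=0$. There we must enumerate all length-$5$ words $w_1\cdots w_5$ with $w_1w_2=w_4w_5$, up to relabeling symbols. Writing the word as $abcab$, the cases are: $a=b=c$ gives $aaaaa$; $a=b\ne c$ gives $aabaa$; $a=c\ne b$ gives $abaab$; $b=c\neq a$ gives $abbab$; all distinct gives $abcab$. We also need to double-check the case $a\ne b$, $c\in\{a,b\}$, and make sure no further case merges into the earlier bullets. Care is also needed with indices near the boundary, which the fixed $\bmu,\bmv$ handle.
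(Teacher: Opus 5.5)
Your proposal is correct and follows essentially the paper's argument: you split the three deleted $3$-mers into maximal bursts, apply \Cref{lem_burstdel} to each burst via \Cref{thm_multiburst}, and regroup the outcomes by $\abs{\bmy}$, and your five-word enumeration for the $s=0$ three-burst is already exhaustive (the case $a\ne b$, $c\in\{a,b\}$ is exactly $abaab$ or $abbab$). One harmless slip: for the three-burst with $s=1$, failure of the $s=0$ condition means $\tilde{\bmx}_{[i-1,i]}\ne\tilde{\bmx}_{[i+2,i+3]}$, not $\tilde x_{i-1}\ne\tilde x_{i+1}$, but your argument there only uses $\tilde x_i=\tilde x_{i+2}$.
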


Recall that $r\parenv{\tilde{\bmx}}$ is the number of runs in $\tilde{\bmx}$. When $\abs{\bmy}\in\mathset{n+3,n+4}$, there are five cases for the relationship between $r\parenv{\tilde{\bmx}}$ and $r(\bmy)$.
\begin{itemize}
    \item If $\bmy$ is obtained from $\tilde{\bmx}$ by a $2$-burst-deletion in an alternating substring of length at least $3$, then $r(\tilde{\bmx})=r(\bmy)+2$.
    \item If $\bmy$ is obtained from $\tilde{\bmx}$ by deleting two consecutive symbols from an alternating substring of length at least $4$ and deleting one symbol from a run of length at least $3$, then $r(\tilde{\bmx})=r(\bmy)+2$.
    \item If $\bmy$ is obtained from $\tilde{\bmx}$ by replacing $\tilde{\bmx}_{[i-1,i+3]}$ with $\tilde{x}_{i-1}\tilde{x}_{i}$ for some $1\le i\le n$ with $\tilde{\bmx}_{[i-1,i+3]}\in\mathset{aabaa,abbab,abaab}$, where $a,b$ are distinct symbols from $\Sigma_q$, then $r(\tilde{\bmx})=r(\bmy)+2$.
    \item If $\bmy$ is obtained from $\tilde{\bmx}$ by replacing $\tilde{\bmx}_{[i-1,i+3]}$ with $\tilde{x}_{i-1}\tilde{x}_{i}$ for some $1\le i\le n$ satisfying $\tilde{\bmx}_{[i-1,i+3]}=abcab$ for some mutually distinct $a,b,c\in\Sigma_q$, then $r(\tilde{\bmx})=r(\bmy)+3$. This case is only possible when $q>2$.
    \item Otherwise, we have $r(\tilde{\bmx})=r(\bmy)$.
\end{itemize}

In each of first three cases, we have $r(\tilde{\bmx})=r(\bmy)+2$. The first case can be distinguished, however, since $\abs{\bmy}=n+4$ in this case. However, in both of the second and third cases, it holds that $r(\tilde{\bmx})=r(\bmy)+2$ and $\abs{\bmy}=n+3$. Hence, we need to introduce a new variable in $\tilde{\bmx}$ to help distinguish between these two cases.

Let $N_{\ge3}^{\textup{run}}(\tilde{\bmx})$ denote the total length of all runs in $\tilde{\bmx}$ whose lengths are at least $3$. It is straightforward to verify that, in the second case, $N_{\ge3}^{\textup{run}}(\tilde{\bmx})-N_{\ge3}^{\textup{run}}(\bmy)\in\{1,3\}$, whereas in the third case, $N_{\ge3}^{\textup{run}}(\tilde{\bmx})-N_{\ge3}^{\textup{run}}(\bmy)\in\{0,2\}$. Therefore, knowing $N_{\ge3}^{\textup{run}}\parenv{\tilde{\bmx}}\pmod{2}$ helps to distinguish between the second and third cases.

In the third and fourth cases, the deletions can be corrected by using \Cref{cor_TburstTperiodic} (setting $T=3$). In the second case, we will first correct the single deletion in the run and then correct the $2$-burst-deletion in the alternating substring.

For this purpose, we need some additional notations. Let $m_{\ge2}(\tilde{\bmx})$ denote the number of runs in $\tilde{\bmx}$ whose lengths are at least $2$. If $m_{\ge2}(\tilde{\bmx})=0$, set $h_{\textup{run},\ge2}\parenv{\tilde{\bmx},i}=0$ for all $-2\le i\le n+3$. If $m_{\ge2}(\tilde{\bmx})\ge1$, for all $-2\le i\le n+3$, set $h_{\textup{run},\ge2}\parenv{\tilde{\bmx},i}=j$ if $\tilde{x}_j$ belongs to the $j$-th run of length at least $2$, and $h_{\textup{run},\ge2}\parenv{\tilde{\bmx},i}=0$ if $\tilde{x}_j$ does not belong to any such runs. Define $H_{\textup{run},\ge2}\parenv{\tilde{\bmx}}=\sum_{i=-2}^{n+3}h_{\textup{run},\ge2}\parenv{\tilde{\bmx},i}$. This function helps to correct the single deletion in a run of length at least $3$.

We are ready to present constructions of codes. As in \Cref{subsec_l2t2,subsec_l3t2}, the binary case and non-binary case are treated in slightly different manners. We first give the construction of binary codes.

Let $F_{\textup{run},1}\parenv{\tilde{\bmx}}$, $F_{\textup{run},2}\parenv{\tilde{\bmx}}$ and $F_{\textup{run},3}\parenv{\tilde{\bmx}}$ be defined as in \Cref{subsec_l2t2}. Let $p_T(\cdot)$ be the function in \Cref{cor_TburstTperiodic}.
\begin{theorem}\label{thm_binaryl3t3}
    For $n\ge2$, let $p>2n+12$ be the smallest prime. For given $0\le a_1,a_2,a_3<p$, $0\le b<\floorenv{\frac{n+3}{2}}$, $0\le c<8\ceilenv{\frac{n+6}{3}}$, $0\le d_1<3$ and $d_2\in\{0,1\}$, set $\bma=\parenv{a_1,a_2}$ and $\bmd=(d_1,d_2)$. Define the code
     \begin{equation*}
        \cC(n;\bma,b,c,\bmd)=\mathset{\bmx\in\Sigma_2^n:
         \begin{array}{c}
        F_{\textup{run},k}\parenv{\tilde{\bmx}}\equiv a_k\pmod{p},\forall k\in[3]\\
        H_{\textup{run},\ge2}\parenv{\tilde{\bmx}}\equiv b\pmod{\floorenv{\frac{n+3}{2}}}\\
        \vt\parenv{\psi\parenv{p_3(\tilde{\bmx})}}\equiv c\pmod{8\ceilenv{\frac{n+6}{3}}}\\
        r(\tilde{\bmx})\equiv d_1\pmod{3}\\
        N_{\ge3}^{\textup{run}}\parenv{\tilde{\bmx}}\equiv d_2\pmod{2}
        \end{array}
        }.
    \end{equation*}
    This code can correct three deletions in $\cR_{3}(\tilde{\bmx})$. Moreover, there exist some $\bma$, $b$, $c$ and $\bmd$ such that $\rho\parenv{\cC(n;\bma,b,c,\bmd)}\le 3\log\parenv{p}+\log\parenv{\floorenv{\frac{n+3}{2}}}+\log\parenv{8\ceilenv{\frac{n+6}{3}}}+\log(3\cdot 2)=5\log n+O(1)$.
\end{theorem}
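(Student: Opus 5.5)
We decode from $\bmy$, split according to $\abs{\bmy}$ and \Cref{lem_l3t3}; the redundancy bound then follows by pigeonhole over the syndrome values. Throughout, $q=2$, so the case $\tilde{\bmx}_{[i-1,i+3]}=abcab$ and the ``mutually distinct'' case of \Cref{lem_l3t3}(ii) cannot occur.

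If $\abs{\bmy}=n+6$, then $\bmy=\tilde{\bmx}$. If $\abs{\bmy}=n+5$, one symbol was deleted from a run, so $r(\tilde{\bmx})=r(\bmy)\le n+5<p$. As in \Cref{thm_binaryl2t2}, the run index is then read off from $F_{\textup{run},1}$ modulo $p$. If $\abs{\bmy}=n+4$, \Cref{lem_l3t3}(iii) gives exactly the three scenarios of \Cref{lem_l2t2}(i). We first compute $r(\bmy)$ and compare it with $d_1$ modulo $3$ to tell the run case from the alternating $2$-burst case. In the run case we solve the power-sum system in $F_{\textup{run},1},F_{\textup{run},2}$ via \Cref{lem_pre1}; here $p>2n+12$ exceeds the run count. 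In the alternating case we apply the decoder of \Cref{lem_binary2burst}, which uses only $F_{\textup{run},1}$. We must check that its modulus condition holds, i.e.\ that $p\ge 2\abs{\tilde{\bmx}}=2n+12$, which is exactly why $p>2n+12$ is chosen.

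If $\abs{\bmy}=n+3$, we again use $r(\tilde{\bmx})-r(\bmy)\in\{0,2\}$, determined modulo $3$ by $d_1$.
\begin{itemize}
\item If the difference is $0$, then three symbols were deleted from runs (possibly two from the same run), and the run count is unchanged. The multiset of run indices is recovered from $F_{\textup{run},1},F_{\textup{run},2},F_{\textup{run},3}$ by \Cref{lem_pre1}.
\item If the difference is $2$, we use the parity of $N_{\ge3}^{\textup{run}}$, via $d_2$, to separate the two remaining cases.
\begin{itemize}
\item The ``$3$-periodic'' replacements ($aabaa,abbab,abaab$): $\bmy$ arises from a $3$-burst-deletion in a $3$-periodic substring of length $\ge5=2\cdot3-1$. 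This is corrected by \Cref{cor_TburstTperiodic} with $T=3$, using the constraint on $\vt(\psi(p_3(\tilde{\bmx})))$ modulo $2^3\ceilenv{(n+6)/3}$.
\item The mixed case: one symbol was deleted from a run of length $\ge3$, together with a $2$-burst in an alternating substring of length $\ge4$.
\end{itemize}
\end{itemize}

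The mixed case is the main obstacle. The plan is first to locate the run deletion with $H_{\textup{run},\ge2}$, then to reinsert it, and then to fix the $2$-burst via \Cref{lem_binary2burst} with $F_{\textup{run},1}$. For the first step we must argue that the alternating $2$-burst changes neither the number nor the order of runs of length $\ge2$, nor their indices. Deleting $ba$ or $ab$ inside a binary alternating segment only removes two singleton runs. The boundary runs of the alternating segment could have length $2$ and be affected, so this needs care. Likewise, the run of length $\ge3$ losing one symbol stays a run of length $\ge2$. If this holds, $H_{\textup{run},\ge2}(\tilde{\bmx})-H_{\textup{run},\ge2}(\bmy)$ equals the index $j$ of the affected run. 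Since runs of length $\ge2$ number at most $\abs{\bmy}/2$, roughly $\floorenv{(n+3)/2}$, $j$ is determined by the residue modulo $\floorenv{(n+3)/2}$; the off-by-one in this bound must be checked. Extending that run by one symbol yields a sequence $\bmy'$ obtained from $\tilde{\bmx}$ by a single alternating $2$-burst. The known value of $F_{\textup{run},1}(\tilde{\bmx})$ modulo $p$ then suffices for \Cref{lem_binary2burst}. Finally, counting the constraints gives redundancy $3\log p+\log\floorenv{(n+3)/2}+\log(8\ceilenv{(n+6)/3})+\log 6=5\log n+O(1)$.
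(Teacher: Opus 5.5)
Your proposal follows the paper's proof essentially step for step. It uses the same split by $\abs{\bmy}$ via \Cref{lem_l3t3}, the residue $r(\cdot)\bmod 3$ to separate run deletions from alternating bursts, and the parity of $N_{\ge3}^{\textup{run}}$ to separate the mixed case from the $3$-periodic replacements (handled by \Cref{cor_TburstTperiodic} with $T=3$). In the mixed case it uses $H_{\textup{run},\ge2}$ followed by \Cref{lem_binary2burst}, exactly as the paper does.

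The two points you leave open both hold. First, since the window $\tilde x_{i-2}\tilde x_{i-1}\tilde x_i\tilde x_{i+1}$ is alternating, deleting $\tilde x_i\tilde x_{i+1}$ has the same effect as deleting the two middle singleton runs $\tilde x_{i-1}\tilde x_i$. Their outer neighbours differ, so no run of length $\ge2$ is touched or merged, and the boundary runs you worried about are unaffected. Second, $1\le j\le m_{\ge2}(\bmy)\le\floorenv{\abs{\bmy}/2}=\floorenv{(n+3)/2}$, which is a complete residue system modulo $\floorenv{(n+3)/2}$, so there is no off-by-one.
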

\begin{IEEEproof}
    Suppose that $\tilde{\bmx}=\bmu\bmx\bmv$ is transmitted. Let $\cR^\prime$ be the received vector of $3$-mers, where $\bmx\in\cC(n;\bma,b,c,\bmd)$ and $\cR^\prime$ is a subvector of $\cR_{3}(\tilde{\bmx})$ of length $n+1$. Let $\bmy$ be the sequence derived from $\cR^\prime$ as described above. We show how to recover $\tilde{\bmx}$ from $\bmy$.

    By \Cref{lem_l3t3}, if $|\bmy|=n+6$, then $\bmy=\tilde{\bmx}$. Therefore, it remains to consider the case $\abs{\bmy}\in\mathset{n+5,n+4,n+3}$.

    First suppose that $\abs{\bmy}=n+5$. By \Cref{lem_l3t3}, the sequence $\bmy$ is obtained from $\tilde{\bmx}$ by deleting one symbol from a run of length at least $2$. Hence $r(\tilde{\bmx})=r(\bmy)\le n+5$. As in the proof of \Cref{thm_binaryl2t2}, the sequence $\tilde{\bmx}$ can be recovered.

    Next, suppose that $\abs{\bmy}=n+4$. If $r(\bmy)\equiv d_1\pmod{3}$, then $\bmy$ is obtained from $\tilde{\bmx}$ by deleting one symbol from each of two runs of length at least $2$, or by deleting two symbols from a run of length at least $3$. In this case, $\tilde{\bmx}$ can be recovered as in the proof of \Cref{thm_binaryl2t2}, using the two  constraints $F_{\textup{run},1}\parenv{\tilde{\bmx}}\equiv a_1\pmod{p}$ and $F_{\textup{run},2}\parenv{\tilde{\bmx}}\equiv a_2\pmod{p}$.

    If $\abs{\bmy}=n+4$ and $r(\bmy)\equiv d_1-2\pmod{3}$, then $\bmy$ is obtained from $\tilde{\bmx}$ by a $2$-burst-deletion in an alternating substring of length at least $3$. We can recover $\tilde{\bmx}$ by using \Cref{lem_binary2burst} and the condition $F_{\textup{run},1}\parenv{\tilde{\bmx}}\equiv a_1\pmod{p}$. This also explains why we require $p>2n+12$ instead of $p\ge n+5$.

    Finally, suppose that $\abs{\bmy}=n+3$. If $r(\bmy)\equiv d_1\pmod{3}$, then $\bmy$ is obtained from $\tilde{\bmx}$ either by deleting one symbol from each of three runs of length at least $3$, or by deleting two symbols from a run of length at least $4$ and deleting one symbol from a run of length at least $3$, or by deleting three symbols from a run of length at least $5$. Therefore, $r\parenv{\tilde{\bmx}}=r(\bmy)\le n+3$. Suppose that the three deletions occurred in the $i$-th, $j$-th and $k$-th runs, where $i\le j\le k$. For $r\in[3]$, let $\delta_r$ be the unique integer in $[0,p-1]$ such that $\delta_r\equiv a_r-F_{\textup{run},r}\parenv{\tilde{\bmx}}\pmod{p}$. Then we have $i^r+j^r+k^r\equiv\delta_r\pmod{p}$ for $r\in[3]$. \Cref{lem_pre1} guarantees that we can uniquely and efficiently determine $i,j,k$ from these three congruences. The recovery of $\tilde{\bmx}$ is now straightforward.

    If $r(\bmy)\equiv d_1-2\pmod{3}$ and $N_{\ge3}^{\textup{run}}(\bmy)\not\equiv d_2\pmod{2}$, then $\bmy$ is obtained from $\tilde{\bmx}$ by a $2$-burst-deletion in an alternating substring of length at least $4$ and deleting one symbol from a run of length at least $3$. We first correct the single deletion in runs. Since the alternating substring has length at least $4$, the two deletions do affect runs of length at least $2$. As a result, we have $m_{\ge2}(\tilde{\bmx})=m_{\ge2}(\bmy)\le\floorenv{\abs{\bmy}/2}=\floorenv{\frac{n+3}{2}}$. Assume that the deletion occurred in the $i$-th run of length at least $2$. Then $i$ is the unique integer in $\sparenv{\floorenv{\frac{n+3}{2}}}$ satisfying $i\equiv b-H_{\textup{run},\ge2}(\bmy)\pmod{\floorenv{\frac{n+3}{2}}}$. Once knowing $i$, we extend the $i$-th run of length at least $2$ in $\bmy$ by one symbol. The resulting sequence is denoted by $\bmz$, which is obtained from $\tilde{\bmx}$ by a $2$-burst-deletion in an alternating substring of length at least $4$. Now $\tilde{\bmx}$ can be recovered by \Cref{lem_binary2burst}.

    If $r(\bmy)\equiv d_1-2\pmod{3}$ and $N_{\ge3}^{\textup{run}}(\bmy)\equiv d_2\pmod{2}$, then $\bmy$ is obtained from $\tilde{\bmx}$ by a $3$-burst-deletion in a $3$-periodic substring of length at least $5$. This case can be handled with the help of \Cref{cor_TburstTperiodic} and the condition $\vt\parenv{\psi\parenv{p_3(\tilde{\bmx})}}\equiv c\pmod{8\ceilenv{\frac{n+6}{3}}}$. Now the proof is completed.
\end{IEEEproof}

Next, we give the construction of non-binary codes.
\begin{theorem}\label{thm_qaryl3t3}
    Let $q>2$ and $n\ge2$. Let $p\ge n+5$ be the smallest prime. For given $0\le a_1,a_2,a_3<p$, $0\le b<\floorenv{\frac{n+3}{2}}$, $0\le c_2<q^2\ceilenv{\frac{n+6}{2}}$, $0\le c_3<q^3\ceilenv{\frac{n+6}{3}}$, $0\le d_1<4$ and $d_2\in\{0,1\}$, set $\bma=\parenv{a_1,a_2}$, $\bmc=\parenv{c_1,c_2}$ and $\bmd=(d_1,d_2)$. Define the code
     \begin{equation*}
        \cC(n;\bma,b,\bmc,\bmd)=\mathset{\bmx\in\Sigma_q^n:
         \begin{array}{c}
        F_{\textup{run},k}\parenv{\tilde{\bmx}}\equiv a_k\pmod{p},\forall k\in[3]\\
        H_{\textup{run},\ge2}\parenv{\tilde{\bmx}}\equiv b\pmod{\floorenv{\frac{n+3}{2}}}\\
        \vt\parenv{\psi\parenv{\tilde{\bmx}}}\equiv c_1\pmod{q(n+6)}\\
        \vt\parenv{\psi\parenv{p_2(\tilde{\bmx})}}\equiv c_2\pmod{q^2\ceilenv{\frac{n+6}{2}}}\\
        \vt\parenv{\psi\parenv{p_3(\tilde{\bmx})}}\equiv c_3\pmod{q^3\ceilenv{\frac{n+6}{3}}}\\
        r(\tilde{\bmx})\equiv d_1\pmod{4}\\
        N_{\ge3}^{\textup{run}}\parenv{\tilde{\bmx}}\equiv d_2\pmod{2}
        \end{array}
        }.
    \end{equation*}
    This code can correct three deletions in $\cR_{3}(\tilde{\bmx})$. Moreover, there exist some $\bma$, $b$, $\bmc$ and $\bmd$ such that $\rho\parenv{\cC(n;\bma,b,\bmc,\bmd)}\le 3\log\parenv{p}+\log\parenv{\floorenv{\frac{n+3}{2}}}+\log\parenv{q(n+6)}+\log\parenv{q^2\ceilenv{\frac{n+6}{2}}}+\log\parenv{q^3\ceilenv{\frac{n+6}{3}}}+\log(4\cdot 2)=7\log n+O(1)$.
\end{theorem}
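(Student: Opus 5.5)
The plan is to follow the proof of \Cref{thm_binaryl3t3} case by case, driven by \Cref{lem_l3t3} and the run-count bookkeeping listed before that theorem. The new ingredients for $q>2$ are three. First, the extra case in \Cref{lem_l3t3}(ii), where a symbol with mutually distinct neighbours is deleted. Second, the case $abcab$ in \Cref{lem_l3t3}(iv), for which $r(\tilde{\bmx})=r(\bmy)+3$. Third, the $2$-burst-deletion in an alternating substring, where \Cref{lem_binary2burst} is no longer available. Accordingly, $r(\tilde{\bmx})$ is taken modulo $4$ rather than $3$, so that the differences $0,2,3$ stay distinguishable. The redundancy bound then follows by pigeonhole over all choices of $(\bma,b,\bmc,\bmd)$, summing the logarithms of the moduli.

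\textbf{Cases $\abs{\bmy}\in\{n+6,n+5\}$.} If $\abs{\bmy}=n+6$, then $\bmy=\tilde{\bmx}$ and nothing needs to be decoded. If $\abs{\bmy}=n+5$, then $\bmy$ is obtained from $\tilde{\bmx}$ by a single symbol deletion, whichever subcase of \Cref{lem_l3t3}(ii) occurs. I will recover $\tilde{\bmx}$ directly from \Cref{lem_qary1del} using $\vt(\psi(\tilde{\bmx}))\bmod q(n+6)$; here $\abs{\tilde{\bmx}}=n+6$, which matches the modulus.

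\textbf{Case $\abs{\bmy}=n+4$.} Compute $r(\bmy)-d_1\bmod 4$. If it is $0$, the deletions are run deletions, and I solve the power sums $F_{\textup{run},1},F_{\textup{run},2}$ via \Cref{lem_pre1}. This uses $r(\tilde{\bmx})=r(\bmy)\le n+4<p$. If it is $2$, a $2$-burst-deletion occurred in an alternating substring of length at least $3=2T-1$ with $T=2$. In that case I apply \Cref{cor_TburstTperiodic} with $m=n+6$, using $c_2$.

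\textbf{Case $\abs{\bmy}=n+3$.} The difference $r(\tilde{\bmx})-r(\bmy)$, read modulo $4$ from $d_1$, takes one of the values $0,2,3$.
\begin{enumerate}[(a)]
\item If the difference is $0$, there are three run deletions, and I solve $F_{\textup{run},1..3}$ by \Cref{lem_pre1}.
\item If the difference is $3$, $\bmy$ arises by replacing $abcab$ with $ab$. This is a $3$-burst-deletion in a $3$-periodic substring of length $5=2T-1$ with $T=3$, so \Cref{cor_TburstTperiodic} applies with $c_3$.
\item If the difference is $2$, the parity of $N^{\textup{run}}_{\ge3}$ against $d_2$ splits two subcases. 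In the subcase of a $3$-burst from $aabaa$, $abbab$ or $abaab$, I again use $c_3$. In the other subcase, one run deletion is combined with a $2$-burst in an alternating substring of length at least $4$. There I first use $H_{\textup{run},\ge2}$ modulo $\lfloor (n+3)/2\rfloor$ to find and extend the affected run, exactly as in \Cref{thm_binaryl3t3}. The result is a $2$-burst-deletion in an alternating substring of length at least $4\ge3$, which I correct with \Cref{cor_TburstTperiodic} using $c_2$.
\end{enumerate}

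\textbf{Main obstacle.} The hardest part is making the bookkeeping valid over $q>2$. I must verify that the run-count differences and the $N^{\textup{run}}_{\ge3}$ parity claims hold for non-binary alphabets. I must also show that the alternating-substring $2$-burst does not alter the number or order of runs of length at least $2$, and that the bounds $m_{\ge2}\le\lfloor (n+3)/2\rfloor$ and $r\le n+4<p$ hold. Finally, every mixed subcase needs to be caught. In particular, \Cref{lem_l3t3}(iii) for $q>2$ may include a deletion from a run of length at least $2$ together with a middle-symbol deletion. I would check this by enumerating the local patterns in the case analysis preceding \Cref{lem_l3t3}, and if necessary argue that each such pattern also falls into one of the decodable classes above.
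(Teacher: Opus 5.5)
Your proposal is correct and follows the paper's proof essentially step by step. The structure is the same: a case split on $\abs{\bmy}$, the value of $r(\tilde{\bmx})$ modulo $4$ to separate the differences $0,2,3$, and the parity of $N_{\ge3}^{\textup{run}}$ to separate the two ``difference $2$'' subcases. The decoding tools also match: $H_{\textup{run},\ge2}$ followed by \Cref{cor_TburstTperiodic} with $c_2$ in the mixed case, and $c_3$ for the $3$-burst cases. Along the way you implicitly correct the paper's typos ($\bmod 3$ versus $\bmod 4$, and $c$ versus $c_3$).

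The mixed case you worry about in \Cref{lem_l3t3}(iii) cannot occur. Deleting a symbol whose neighbours are mutually distinct only arises from three consecutive deleted reads with $s=2$, and that removes exactly one symbol. Hence this deletion can never be combined with a second one.
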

\begin{IEEEproof}
    Suppose that $\tilde{\bmx}=\bmu\bmx\bmv$ is transmitted, and let $\cR^\prime$ be the received vector of $3$-mers, where $\bmx\in\cC(n;\bma,b,\bmc,\bmd)$ and $\cR^\prime$ is a subvector of $\cR_{3}(\tilde{\bmx})$ of length $n+1$. Let $\bmy$ be the sequence derived from $\cR^\prime$ as described above. We show how to recover $\tilde{\bmx}$ from $\bmy$. If $|\bmy|=n+6$, then $\bmy=\tilde{\bmx}$.

    If $\abs{\bmy}=n+5$, by \Cref{lem_l3t3}, the sequence $\bmy$ is obtained from $\tilde{\bmx}$ by one deletion. Hence  $\tilde{\bmx}$ can be recovered by using \Cref{lem_qary1del} and the condition $\vt\parenv{\psi\parenv{\tilde{\bmx}}}\equiv c_1\pmod{q(n+6)}$.

    Next, suppose that $\abs{\bmy}=n+4$. If $r(\bmy)\equiv d_1\pmod{4}$, then $\bmy$ is obtained from $\tilde{\bmx}$ by deleting one symbol from each of two runs of length at least $2$, or by deleting two symbols from a run of length at least $3$. In this case, $\tilde{\bmx}$ can be recovered as in the proof of \Cref{thm_binaryl2t2}, using the two  constraints $F_{\textup{run},1}\parenv{\tilde{\bmx}}\equiv a_1\pmod{p}$ and $F_{\textup{run},2}\parenv{\tilde{\bmx}}\equiv a_2\pmod{p}$.

    If $\abs{\bmy}=n+4$ and $r(\bmy)\equiv d_1-2\pmod{4}$, then $\bmy$ is obtained from $\tilde{\bmx}$ by a $2$-burst-deletion in an alternating substring of length at least $3$. We can recover $\tilde{\bmx}$ by using \Cref{cor_TburstTperiodic} and the condition $\vt\parenv{\psi\parenv{p_2(\tilde{\bmx})}}\equiv c_2\pmod{q^2\ceilenv{\frac{n+6}{2}}}$.

    Finally, suppose that $\abs{\bmy}=n+3$. If $r(\bmy)\equiv d_1\pmod{3}$, then $\bmy$ is obtained from $\tilde{\bmx}$ either by deleting one symbol from each of three runs of length at least $3$, or by deleting two symbols from a run of length at least $4$ and deleting one symbol from a run of length at least $3$, or by deleting three symbols from a run of length at least $5$. Therefore, $r\parenv{\tilde{\bmx}}=r(\bmy)\le n+3$. Suppose that the three deletions occurred in the $i$-th, $j$-th and $k$-th runs, respectively, where $i\le j\le k$. For $r\in[3]$, let $\delta_r$ be the unique integer in $[0,p-1]$ such that $\delta_r\equiv a_r-F_{\textup{run},r}\parenv{\tilde{\bmx}}\pmod{p}$. Then we have $i^r+j^r+k^r\equiv\delta_r\pmod{p}$ for $r\in[3]$. \Cref{lem_pre1} guarantees that we can uniquely and efficiently determine $i,j,k$ from these three congruences. Now the recovery of $\tilde{\bmx}$ is straightforward.

    If $r(\bmy)\equiv d_1-2\pmod{4}$ and $N_{\ge3}^{\textup{run}}(\bmy)\not\equiv d_2\pmod{2}$, then $\bmy$ is obtained from $\tilde{\bmx}$ by a $2$-burst-deletion in an alternating substring of length at least $4$ and deleting one symbol from a run of length at least $3$. We first correct the single deletion in runs. Since the alternating substring has length at least $4$, the two deletions do affect runs of length at least $2$. As a result, we have $m_{\ge2}(\tilde{\bmx})=m_{\ge2}(\bmy)\le\floorenv{\abs{\bmy}/2}=\floorenv{\frac{n+3}{2}}$. Assume that the deletion occurred in the $i$-th run of length at least $2$ (In fact, this run has length at least $3$). Then $i$ is the unique integer in $\sparenv{\floorenv{\frac{n+3}{2}}}$ satisfying $i\equiv b-H_{\textup{run},\ge2}(\bmy)\pmod{\floorenv{\frac{n+3}{2}}}$. Once knowing $i$, we extend the $i$-th run of length at least $2$ in $\bmy$ by one symbol. The resulting sequence is denoted by $\bmz$, which is obtained from $\tilde{\bmx}$ by a $2$-burst-deletion in an alternating substring of length at least $4$. Now $\tilde{\bmx}$ can be recovered by using \Cref{cor_TburstTperiodic} and the condition $\vt\parenv{\psi\parenv{p_2(\tilde{\bmx})}}\equiv c_2\pmod{q^2\ceilenv{\frac{n+6}{2}}}$.

    If $r(\bmy)\equiv d_1-2\pmod{4}$ and $N_{\ge3}^{\textup{run}}(\bmy)\equiv d_2\pmod{2}$, or $r(\bmy)\equiv d_1-3\pmod{4}$, then $\bmy$ is obtained from $\tilde{\bmx}$ by a $3$-burst-deletion in a $3$-periodic substring of length at least $5$. This case can be handled with the help of \Cref{cor_TburstTperiodic} and the condition $\vt\parenv{\psi\parenv{p_3(\tilde{\bmx})}}\equiv c\pmod{8\ceilenv{\frac{n+6}{3}}}$. This completes the proof.
\end{IEEEproof}

\section{Explicit Codes}\label{sec_explicitcode}
The codes in \Cref{sec_genconstruction,sec_sporadic} are non-explicit in the sense that no encoding algorithms are provided. In this section, we give explicit codes based on the constructions in the preceding two sections.

Let $q,N\ge2$ be two integers. For any $m\in[0,N-1]$, we can represent $m$ as a unique $q$-ary sequence of length $\ceilenv{\log_q(N)}$. Denote this $q$-ary sequence by $m_{q\text{-ary}}$
\begin{proposition}\label{prop_explicit}
    Let $q\ge2$, $\ell$ and $t$ be integers. Let $\bmu$ and $\bmv$ be two given $\ell$-mers. Suppose that there is an explicit function $f:\Sigma_q^{m}\rightarrow[0,N(m)-1]$, where $N(m)$ is an integer depending on $m$, with the following property:
    \begin{itemize}
        \item for any $\bmz\in\Sigma_q^m$, when given $f(\bmz)$, one can efficiently recover $\bmz$ from any $\cR^\prime$ that is obtained from $\cR_{\ell}(\bmz)$ by deleting $t$ $\ell$-mers, as long as the first and last $\ell$-mer are not deleted.
    \end{itemize}
    For any $\bmx\in\Sigma_q^n$, let $E(\bmx)=\parenv{\bmx,0^{\ell},f\parenv{\bmu\bmx 0^{\ell}}_{q\text{-ary}},0^{\ell},\Rep_{t+1}\parenv{f\parenv{0^{\ell}f\parenv{\bmu\bmx 0^{\ell}}_{q\text{-ary}}0^{\ell}}_{q\text{-ary}}}}$, where $\Rep_{t+1}(\bmz)$ is obtained from $\bmz$ by repeating each symbol $t+1$ times.
     Define the code
    $$
    \cC_f=\mathset{E(\bmx):\bmx\in\Sigma_q^n}.
    $$
    Then $\cC_f$ is an $\ell$-read $t$-deletion correcting code, with redundancy $\log(N(m))+O\parenv{\log\log(N(m))}$.
\end{proposition}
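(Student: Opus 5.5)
The plan is to decode $E(\bmx)$ from the back to the front, in three stages. Write $L_1=\ceilenv{\log_q N(n+2\ell)}$ for the length of $\bma\triangleq f\parenv{\bmu\bmx0^{\ell}}_{q\text{-ary}}$. Write $L_2=\ceilenv{\log_q N(2\ell+L_1)}$ for the length of $\bmb\triangleq f\parenv{0^{\ell}\bma0^{\ell}}_{q\text{-ary}}$.

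The transmitted sequence is $\bmu\bmx0^{\ell}\bma0^{\ell}\Rep_{t+1}(\bmb)\bmv$. All block lengths are known to the decoder. The basic tool is the observation, immediate from the definition of $\cR_{\ell}$, that the last symbols of the $\ell$-mers in $\cR_{\ell}(\tilde{\bmx})$ spell out $\tilde{\bmx}$ with its first $\ell-1$ symbols removed. Hence listing the last symbols of the received $\ell$-mers gives a subsequence $\bmw'$ of this string with exactly $t$ symbols deleted.

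\textbf{Stage 1: recover $\bmb$ from the repetition tail.} Strip the known suffix $\bmv$ from $\bmw'$; since the final $\ell$-mer is never deleted, at most $t$ of its symbols can be affected, which only shifts boundaries by at most $t$. Take the last $(t+1)L_2$ symbols before $\bmv$, with a slack of $t$. Recover $\bmb$ by run-length decoding: a run of $k$ equal symbols of $\bmb$ becomes a run of length in $[k(t+1)-t,\,k(t+1)]$ and is never eliminated, so $\ceilenv{\cdot/(t+1)}$ of each run length returns $k$.

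I expect the only delicate point to be the left boundary of this window. Symbols of the preceding $0^{\ell}$ may merge into a leading run of $0$'s of $\bmb$, and up to $t$ foreign symbols may enter the window. I will handle this by counting runs from the right end, where $\bmv$ anchors the alignment. The total length $(t+1)L_2$ is known, so the leftmost run's multiplicity is whatever makes the lengths add up to $L_2$. The slack of at most $t$ symbols can then only add to or remove from the leftmost run, without changing its count of $k$ blocks.

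\textbf{Stage 2: recover $\bma$, and Stage 3: recover $\bmx$.} With $\bmb$ known, and hence $f\parenv{0^{\ell}\bma0^{\ell}}$, we extract the received $\ell$-mers that lie inside the window $0^{\ell}\bma0^{\ell}$. We then apply the hypothesis on $f$ with $m=2\ell+L_1$ to recover $\bma$, hence $f\parenv{\bmu\bmx0^{\ell}}$. Finally we extract the $\ell$-mers inside $\bmu\bmx0^{\ell}$ and apply the hypothesis again with $m=n+2\ell$ to recover $\bmx$.

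The main obstacle is the segmentation in these two stages. The hypothesis on $f$ requires a corrupted read vector of exactly $\cR_{\ell}(\bmz)$ whose first and last $\ell$-mers survive, with at most $t$ deletions. However, the boundary $\ell$-mers $0^{\ell}$ may themselves be deleted, and the decoder does not know how many deletions fell into each block. The plan is to exploit the separators $0^{\ell}$ and the fact that every inter-block junction is covered by $\ell$-mers whose contents are known once later blocks are decoded. Specifically, after $\bmb$ is recovered, the $\ell$-mers straddling $0^{\ell}\Rep_{t+1}(\bmb)$ are fully known. One can reinsert any deleted ones among them and count the deletions consumed in the tail region; this fixes the right endpoint and the deletion budget $t'\le t$ for the middle window. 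One then pads with known $\ell$-mers, namely the final $0^{\ell}$ of the window, to meet the ``first and last not deleted'' requirement. The left endpoint is located by the known total length of the window, and the $t'$ deletions are treated as $t$ deletions by noting that a $t$-deletion decoder also handles fewer deletions. Repeating the same bookkeeping one block to the left yields $\bmx$. Here $\bmu$ serves as the known left end, and the trailing $0^{\ell}$, now reconstructible, serves as the right end.

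\textbf{Redundancy.} The redundancy equals $2\ell+L_1+\ell+(t+1)L_2$ symbols. We have $L_1\log q=\log N(m)+O(1)$, and since $2\ell+L_1=O(\log N(m))$, also $L_2=O\parenv{\log N(O(\log N(m)))}$. Under the implicit assumption that $N(\cdot)$ grows at most polynomially, as it does for every construction in \Cref{sec_genconstruction,sec_sporadic}, this gives $L_2=O(\log\log N(m))$. Because $t$ and $\ell$ are constants, the total redundancy is $\log N(m)+O(\log\log N(m))$.
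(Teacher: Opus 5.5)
Your proof has the same architecture as the paper's. You decode from the back: first $\bmb$ from the repetition tail, then $\bma$ by applying the property of $f$ to $0^{\ell}\bma0^{\ell}$, then $\bmx$ by applying it to $\bmu\bmx0^{\ell}$. The differences lie in the two mechanical steps.

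\textbf{Segmentation.} The paper never needs to know where the deletions fell. Since exactly $t$ entries are deleted, the received entries in positions $[a,b-t]$ always come from original entries in $[a,b]$. It therefore cuts $\cR'$ into fixed windows $\cR_1,\cR_2,\cR_3$, each of length equal to its block's read-vector length minus $t$, and each window is automatically a subvector of its block's read vector.

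Your plan to count the deletions in the known tail and thereby ``fix the right endpoint'' overclaims, because that count is not determined by $\cR'$. For example, if $\bma$ ends with $0$, the $\ell$-mer just before the separator equals the separator $0^{\ell}$. Deleting either one gives the same $\cR'$ but a different count. What saves the step is that undercounting is harmless: it only truncates the left part, which amounts to extra deletions at its end. Overcounting, by contrast, pushes known tail $\ell$-mers into the middle window and in general breaks the hypothesis on $f$. So use the minimal consistent count (greedy right-to-left matching), or simply the count $0$, which is exactly the paper's fixed cut.

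In both versions a window can lose its first $\ell$-mer as well as its last. You only pad the final $0^{\ell}$, but the leading $0^{\ell}$ of the middle window can also be cut off. The paper waves this away as ``without loss of generality''. The actual fix is:
\begin{itemize}
\item insert the known endpoint $\ell$-mer only if the window does not already begin, respectively end, with it;
\item if it does, re-embed that entry as the endpoint;
\item then delete further interior entries to reach exactly $t$ deletions.
\end{itemize}
The last step is also the justification for your remark that a $t$-deletion decoder handles fewer deletions.

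\textbf{Tail.} Here your argument is more explicit than the paper's, which restores consistency via \Cref{thm_multiburst} and simply asserts that the flanking $0^{\ell}$ and $\bmv$ are untouched. Your last-symbol projection reduces the problem to at most $t$ ordinary deletions. Stripping exactly $\ell$ symbols from the right is correct, for a reason you should state. It moves any deletions among $v_1,\ldots,v_{\ell-1}$ onto the last run of $\Rep_{t+1}(\bmb)$, which then loses at most $t$ symbols in total, so the ceiling still returns its multiplicity. Since no run of $\Rep_{t+1}(\bmb)$ can vanish, you can accumulate multiplicities from the right and give the remainder of $L_2$ to the run at which the total is reached. This handles the left boundary regardless of merging with the separator.

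\textbf{Redundancy.} You are right that the $O(\log\log N(m))$ term needs $\log N(m')=O(\log m')$, which the paper leaves implicit. One small slip: the appended part has $2\ell+L_1+(t+1)L_2$ symbols, not $3\ell+L_1+(t+1)L_2$. This does not affect the bound.
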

\begin{IEEEproof}
    Suppose that $\bmu\bmc\bmv$ is transmitted, where $\bmc=E(\bmx)$ for some $\bmx\in\Sigma_q^n$. Let $\cR^\prime$ be obtained from $\cR_{\ell}(\bmu\bmc\bmv)$ by deleting $t$ $\ell$-mers. We show how to recover $\bmx$ from $\cR^\prime$.

    Denote $M=\abs{\bmc}$, $n_1=\abs{f\parenv{\bmu\bmx 0^{\ell}}_{q\text{-ary}}}$ and $n_2=\abs{\Rep_{t+1}\parenv{f\parenv{0^{\ell}f\parenv{\bmu\bmx 0^{\ell}}_{q\text{-ary}}0^{\ell}}_{q\text{-ary}}}}$. Then $M=n+n_1+n_2+2\ell$.
    Denote $\cR^\prime=\parenv{\bms_0,\bms_1,\ldots,\bms_{M+\ell-t}}$. Let
    \begin{equation*}
        \begin{array}{c}
            \cR_1=\parenv{\bms_0,\bms_1,\ldots,\bms_{n+\ell-t}},\\
            \cR_2=\parenv{\bms_{n+\ell},\ldots,\bms_{n+n_1+2\ell-t}},\\
            \cR_3=\parenv{\bms_{n+n_1+2\ell},\ldots,\bms_{M+\ell-t}}.
        \end{array}
    \end{equation*}
    It is straightforward to verify that the three vectors $\cR_1$, $\cR_2$ and $\cR_3$ are obtained from $\cR_{\ell}\parenv{\bmu\bmx0^{\ell}}$, $\cR_{\ell}\parenv{0^{\ell}f\parenv{\bmu\bmx 0^{\ell}}_{q\text{-ary}}0^{\ell}}$ and $\cR_{\ell}\parenv{0^{\ell}\Rep_{t+1}\parenv{f\parenv{0^{\ell}f\parenv{\bmu\bmx 0^{\ell}}_{q\text{-ary}}0^{\ell}}_{q\text{-ary}}}\bmv}$ by deleting $t$ $\ell$-mers, respectively. Moreover, since the first and last $\ell$-mers of each of $\bmu\bmx0^{\ell}$, $0^{\ell}f\parenv{\bmu\bmx 0^{\ell}}_{q\text{-ary}}0^{\ell}$ and $0^{\ell}\Rep_{t+1}\parenv{f\parenv{0^{\ell}f\parenv{\bmu\bmx 0^{\ell}}_{q\text{-ary}}0^{\ell}}_{q\text{-ary}}}\bmv$ are known, we can assume without loss of generality that the first and last $\ell$-mers of each of the three substrings are not deleted.

    By \Cref{thm_multiburst}, we can obtain a consistent vector $\cR_3^\prime$ of $\ell$-mers from $\cR_3$. Let $\bmy=\cR_{\ell}^{-1}\parenv{\cR_3^\prime}$. Then $\bmy$ is obtained from $0^{\ell}\Rep_{t+1}\parenv{f\parenv{0^{\ell}f\parenv{\bmu\bmx 0^{\ell}}_{q\text{-ary}}0^{\ell}}_{q\text{-ary}}}\bmv$ by deleting at most $t$ symbols. In addition, the prefix $0^{\ell}$ and the suffix $\bmv$ are not affected by the deletions. Therefore, we can recover $\Rep_{t+1}\parenv{f\parenv{0^{\ell}f\parenv{\bmu\bmx 0^{\ell}}_{q\text{-ary}}}}$ and hence $f\parenv{0^{\ell}f\parenv{\bmu\bmx 0^{\ell}}0^{\ell}}$ immediately. Next, by the property of $f$, one can efficiently recover $f\parenv{\bmu\bmx 0^{\ell}}$ from $\cR_2$. Finally, the original $\bmx$ can be recovered by the same argument.
\end{IEEEproof}

By \Cref{prop_explicit}, each construction in \Cref{thm_1del,thm_multidel1,thm_multidel2,thm_binaryl2t2,thm_qaryl2t2,thm_qaryl3t2,thm_binaryl3t3,thm_qaryl3t3} and \Cref{cor_bianryl3t2} gives an explicit code, since each construction provides a function with the property required in \Cref{prop_explicit}. Here, we take \Cref{thm_multidel1} as an example to illustrate the idea. Let $F_{1},\ldots,F_t$ be the $t$ functions defined prior to \Cref{thm_multidel1}. Let $p\ge\max\mathset{m-\ell+t,t+1}$ be the smallest prime. For $\bmx\in\Sigma_q^m$, define $f(\bmx)=\parenv{F_1(\bmx)\pmod{p},\ldots,F_t(\bmx)\pmod{p}}\in\sparenv{0,p^t-1}$. The proof of \Cref{thm_multidel1} shows that the function $f$ has the property required in \Cref{prop_explicit}.

\section{Conclusion}\label{sec_conclusion}
In this paper, we studied deletion-correcting codes for the $\ell$-symbol read channel under an adversarial deletion model. The main technical contribution is a structural description of how deletions in the read vector affect the underlying transmitted sequence. After restoring consistency by inserting a minimum number of $\ell$-mers, the resulting sequence can be viewed as the transmitted sequence after deletions occurring inside certain periodic substrings. This observation led to the notion of check patterns and provides a unified way to track the hidden effect of deleted reads.

Using this framework, we constructed several families of $\ell$-read $t$-deletion correcting codes. The general constructions show that redundancy $t\log n+O(1)$ is achievable when $\ell/2\ge t\ge1$, while the boundary case $\ell=2t-1$ can still be handled with redundancy $(2t-1)\log n+O(1)$ for $t\ge3$. These constructions rely on the fact that when $\ell/2\ge t$ or $\ell=2t-1$, deletions in the $\ell$-read vector only shorten certain check patterns in the original sequence. When $(\ell,t)$ does not satisfy these conditions, this phenomenon no longer holds. However, the sporadic constructions in \Cref{sec_sporadic} show that these conditions are not necessary for achieving these redundancies.

Several questions remain open. One primary direction is to determine the optimal redundancy for general pairs $(\ell,t)$, especially when $t>\ell/2$, where the check-pattern stability arguments used in the general construction no longer apply directly. It would also be useful to develop systematic and computationally efficient encoding algorithms for the proposed codes in \Cref{sec_genconstruction,sec_sporadic}. Another natural extension is to study joint correction of deletions, insertions, and substitutions in the $\ell$-symbol read channel.

\section*{Acknowledgment}
The first author thanks Yubo Sun for bringing references \cite{HananiaYaakobi2025ISIT,SimaBruck2023IT} to him.

\bibliographystyle{IEEEtran}
\bibliography{ref.bib}

\begin{thebibliography}{10}
\providecommand{\url}[1]{#1}
\csname url@samestyle\endcsname
\providecommand{\newblock}{\relax}
\providecommand{\bibinfo}[2]{#2}
\providecommand{\BIBentrySTDinterwordspacing}{\spaceskip=0pt\relax}
\providecommand{\BIBentryALTinterwordstretchfactor}{4}
\providecommand{\BIBentryALTinterwordspacing}{\spaceskip=\fontdimen2\font plus
\BIBentryALTinterwordstretchfactor\fontdimen3\font minus \fontdimen4\font\relax}
\providecommand{\BIBforeignlanguage}[2]{{%
\expandafter\ifx\csname l@#1\endcsname\relax
\typeout{** WARNING: IEEEtran.bst: No hyphenation pattern has been}%
\typeout{** loaded for the language `#1'. Using the pattern for}%
\typeout{** the default language instead.}%
\else
\language=\csname l@#1\endcsname
\fi
#2}}
\providecommand{\BIBdecl}{\relax}
\BIBdecl

\bibitem{Omer2024TMBMSC}
O.~Sabary, H.~M. Kiah, P.~H. Siegel, and E.~Yaakobi, ``Survey for a {D}ecade of {C}oding for {DNA} {S}torage,'' \emph{IEEE Trans. Mol. Biol. Multi-Scale Commun.}, vol.~10, no.~2, pp. 253--271, Jun. 2024.

\bibitem{Milenkovic2024TCOMM}
O.~Milenkovic and C.~Pan, ``{DNA}-based data storage systems: A review of implementations and code constructions,'' \emph{IEEE Trans. Commun.}, vol.~72, no.~7, pp. 3803--3828, Jul. 2024.

\bibitem{Deamer2016}
D.~Deamer, M.~Akeson, and D.~Branton, ``Three decades of nanopore sequencing,'' \emph{Nature Biotechnol.}, vol.~34, no.~5, pp. 518--524, May 2016.

\bibitem{Hulett2021ISIT}
R.~Hulett, S.~Chandak, and M.~Wootters, ``On coding for an abstracted nanopore channel for {DNA} storage,'' in \emph{Proc. IEEE Int. Symp. Inf. Theory (ISIT)}, Melbourne, Australia, Jul. 2021, pp. 2465--2470.

\bibitem{Maowei2018IT}
W.~Mao, S.~N. Diggavi, and S.~Kannan, ``Models and information-theoretic bounds for nanopore sequencing,'' \emph{IEEE Trans. Inf. Theory}, vol.~64, no.~4, pp. 3216--3236, Apr. 2018.

\bibitem{Banerjee2023ISIT}
A.~Banerjee, Y.~Yehezkeally, A.~Wachter-Zeh, and E.~Yaakobi, ``Error-correcting codes for nanopore sequencing,'' in \emph{Proc. IEEE Int. Symp. Inf. Theory (ISIT)}, Taipei, Taiwan, Jun. 2023, pp. 364--369.

\bibitem{Banerjee2024IT}
------, ``Error-correcting codes for nanopore sequencing,'' \emph{IEEE Trans. Inf. Theory}, vol.~70, no.~7, pp. 4956--4967, Jul. 2024.

\bibitem{Yubo2025ITnanopore}
Y.~Sun and G.~Ge, ``Bounds and constructions of $\ell$-read codes under the hamming metric,'' \emph{IEEE Trans. Inf. Theory}, vol.~71, no.~8, pp. 5868--5883, Aug. 2025.

\bibitem{Banerjee2025ISIT}
A.~Banerjee, Y.~Yehezkeally, A.~Wachter-Zeh, and E.~Yaakobi, ``Correcting multiple substitutions in nanopore-sequencing reads,'' in \emph{Proc. IEEE Int. Symp. Inf. Theory (ISIT)}, Ann Arbor, MI, USA, Jun. 2025.

\bibitem{Banerjee2024ISIT}
------, ``Correcting a single deletion in reads from a nanopore sequencer,'' in \emph{Proc. IEEE Int. Symp. Inf. Theory (ISIT)}, Athens, Greece, Jul. 2024, pp. 103--108.

\bibitem{WenjunYu2025IT}
W.~Yu, Z.~Ye, and M.~Schwartz, ``On the asymptotic rate of optimal codes that correct tandem duplications for nanopore sequencing,'' \emph{IEEE Trans. Inf. Theory}, vol.~71, no.~5, pp. 3569--3581, May 2025.

\bibitem{CheeYeowMeng2024ISIT}
Y.~M. Chee, K.~A.~S. Immink, and V.~K. Vu, ``Coding scheme for noisy nanopore sequencing with backtracking and skipping errors,'' in \emph{2024 IEEE International Symposium on Information Theory (ISIT)}, Athens, Greece, Jul. 2024, pp. 458--463.

\bibitem{Zitan2025ITW}
H.~Xie and Z.~Chen, ``Two-deletion correcting codes for nanopore sequencing,'' in \emph{Proc. IEEE Inf. Theory Workshop (ITW)}, Sydney, Australia, Sept. 2025, pp. 1--6.

\bibitem{Zitan202601}
------, ``Deletion-correcting codes for an adversarial nanopore channel,'' \emph{arXiv:2601.21236v2}, 2026.

\bibitem{Zitan202606}
------, ``First-order optimal codes for an adversarial nanopore channel,'' \emph{arXiv:2601.21236v3}, 2026.

\bibitem{CassutoBlaum2011IT}
Y.~Cassuto and M.~Blaum, ``Codes for symbol-pair read channels,'' \emph{IEEE Trans. Inf. Theory}, vol.~57, no.~12, pp. 8011--8020, Dec. 2011.

\bibitem{YaakobiBruckSiegel2016IT}
E.~Yaakobi, J.~Bruck, and P.~H. Siegel, ``Constructions and decoding of cyclic codes over $b$ -symbol read channels,'' \emph{IEEE Trans. Inf. Theory}, vol.~62, no.~4, pp. 1541--1551, Apr. 2016.

\bibitem{YeowVan2020ISIT}
Y.~M. Chee and V.~K. Vu, ``Codes correcting synchronization errors for symbol-pair read channels,'' in \emph{Proc. IEEE Int. Symp. Inf. Theory (ISIT)}, no. 746--750, Los Angeles, CA, USA, Jun. 2020.

\bibitem{CheeYeowMeng2021IT}
Y.~M. Chee, T.~Etzion, H.~M. Kiah, S.~Marcovich, A.~Vardy, V.~Khu~Vu, and E.~Yaakobi, ``Locally-constrained de bruijn codes: Properties, enumeration, code constructions, and applications,'' \emph{IEEE Trans. Inf. Theory}, vol.~67, no.~12, pp. 7857--7875, Dec. 2021.

\bibitem{Hanania2025IT}
D.~Hanania, D.~Bar-Lev, Y.~Nogin, Y.~Shechtman, and E.~Yaakobi, ``On the capacity of {DNA} labeling,'' \emph{IEEE Trans. Inf. Theory}, vol.~71, no.~5, pp. 3457--3472, May 2025.

\bibitem{HofmeisterChristoph2026IT}
C.~Hofmeister, A.~Gruica, D.~Hanania, R.~Bitar, and E.~Yaakobi, ``Achieving {DNA} labeling capacity with minimum labels through extremal de bruijn subgraphs,'' \emph{IEEE Trans. Inf. Theory}, vol.~72, no.~2, pp. 1122--1132, Feb. 2026.

\bibitem{HananiaYaakobi2025ISIT}
D.~Hanania and E.~Yaakobi, ``Error-correcting codes for labeled {DNA} sequences,'' in \emph{Proc. IEEE Int. Symp. Inf. Theory (ISIT)}, Ann Arbor, MI, USA, Jun. 2025.

\bibitem{ZhangChao2015ISCA}
C.~Zhang, G.~Sun, X.~Zhang, W.~Zhang, W.~Zhao, T.~Wang, Y.~Liang, Y.~Liu, Y.~Wang, and J.~Shu, ``Hi-fi playback: Tolerating position errors in shift operations of racetrack memory,'' in \emph{Proc. 42nd ACM/IEEE Int. Symp. Comput. Archit. (ISCA)}, Portland, OR, USA, Jun. 2015, pp. 694--706.

\bibitem{CheeKiahVardy2018IT}
Y.~M. Chee, H.~M. Kiah, A.~Vardy, V.~K. Vu, and E.~Yaakobi, ``Coding for racetrack memories,'' \emph{IEEE Trans. Inf. Theory}, vol.~64, no.~11, pp. 7094--7112, Nov. 2018.

\bibitem{CheeYeowMeng2018ITW}
Y.~M. Chee, R.~Gabrys, A.~Vardy, V.~K. Vu, and E.~Yaakobi, ``Reconstruction from deletions in racetrack memories,'' in \emph{Proc. IEEE Inf. Theory Workshop (ITW)}, Guangzhou, China, Nov. 2018, pp. 1--5.

\bibitem{SimaBruck2023IT}
J.~Sima and J.~Bruck, ``Correcting multiple deletions and insertions in racetrack memory,'' \emph{IEEE Trans. Inf. Theory}, vol.~69, no.~9, p. 56195639, Sept. 2023.

\bibitem{FineWilf1965}
N.~J. Fine and H.~S. Wilf, ``Uniqueness theorems for periodic functions,'' \emph{Proc. Amer. Math. Soc.}, vol.~16, no.~1, pp. 109--114, Feb. 1965.

\bibitem{kolpakov1999FOCS}
R.~Kolpakov and G.~Kucherov, ``Finding maximal repetitions in a word in linear time,'' in \emph{Proc. 40th Annu. Symp. Found. Comput. Sci. (FOCS)}, New York, NY, USA, Oct. 1999, pp. 596--604.

\bibitem{Lara2010SIAM}
L.~Dolecek and V.~Anantharam, ``Repetition error correcting sets: Explicit constructions and prefixing methods,'' \emph{SIAM J. Discrete Math.}, vol.~23, no.~4, pp. 2120--–2146, 2010.

\bibitem{VL1967}
V.~I. Levenshtein, ``Asymptotically optimum binary code with correction for losses of one or two adjacent bits,'' \emph{Problemy Kibernetiki}, vol.~19, pp. 293--298, 1967.

\bibitem{Tuan2024IT}
T.~Thanh~Nguyen, K.~Cai, and P.~H. Siegel, ``A new version of q-ary {V}arshamov-{T}enengolts codes with more efficient encoders: the differential {VT} codes and the differential shifted {VT} codes,'' \emph{IEEE Trans. Inf. Theory}, vol.~70, no.~10, pp. 6989--7004, Oct. 2024.

\bibitem{Shuche2024IT}
S.~Wang, Y.~Tang, J.~Sima, R.~Gabrys, and F.~Farnoud, ``Non-binary codes for correcting a burst of at most $t$ deletions,'' \emph{IEEE Trans. Inf. Theory}, vol.~70, no.~2, pp. 964--979, Feb. 2024.

\bibitem{Yubo2025IT}
Y.~Sun, Z.~Lu, Y.~Zhang, and G.~Ge, ``Asymptotically optimal codes for $(t, s)$-burst error,'' \emph{IEEE Trans. Inf. Theory}, vol.~71, no.~3, pp. 1570--1584, Mar. 2025.

\end{thebibliography}
\end{document}